\documentclass[
aip,
jcp,
reprint,
amsmath,amssymb,
longbibliography,
superscriptaddress
]{revtex4-2}

\usepackage[T1]{fontenc}
\usepackage[utf8]{inputenc}
\usepackage{lmodern}
\usepackage{microtype}
\usepackage{amsmath,amssymb,amsfonts,amsthm,mathtools}
\usepackage{bm}
\usepackage{mathrsfs}

\usepackage{graphicx}
\usepackage[caption=false]{subfig}

\usepackage{xcolor}
\usepackage{soul}

\usepackage{siunitx}
\usepackage{hyperref}
\usepackage[nameinlink,noabbrev]{cleveref}

\theoremstyle{remark}
\newtheorem{remark}{Remark}

\theoremstyle{plain}
\newtheorem{theorem}{Theorem}[section]
\newtheorem{lemma}[theorem]{Lemma}
\newtheorem{corollary}[theorem]{Corollary}
\newtheorem{proposition}[theorem]{Proposition}
\theoremstyle{definition}
\newtheorem{definition}[theorem]{Definition}

\DeclareMathAlphabet{\mathtensor}{OT1}{cmss}{bx}{n}
\providecommand{\Bb}{\mathbb}

\hypersetup{colorlinks=true, linkcolor=blue, citecolor=blue, urlcolor=blue}

\begin{document}

\title{Weak Solutions to the Bloch Equations with Distant Dipolar Field}

\author{Louis-S. Bouchard}
\email[Address correspondence to: ]{lsbouchard@ucla.edu}
\affiliation{Department of Chemistry and Biochemistry, University of California, Los Angeles, CA 90095}

\date{\today}

\begin{abstract}
The distant dipolar field (DDF) is a long-range, nonlocal contribution to spin dynamics in liquids that arises from intermolecular dipolar couplings and can generate multiple-quantum coherences in liquids and produce novel MRI contrast. Its nonlocal, sign-changing kernel makes Bloch--DDF dynamics strongly geometry dependent, and common FFT-based dipolar convolutions are naturally aligned with periodic boxes or padded Cartesian grids rather than bounded samples with reflective diffusion boundaries. We study the Bloch equations with the DDF on bounded domains with homogeneous Neumann diffusion conditions. We derive a conforming finite-element weak formulation that allows spatially varying diffusion and relaxation parameters and uses a short-distance regularization of the secular DDF kernel with length $a>0$. For fixed $a$ we prove boundedness of the induced DDF operator, establish an $L^2$ energy balance in which precession is neutral while diffusion and transverse relaxation are dissipative, and obtain local well-posedness with continuous dependence on the data (with global existence under energy-neutral transport conditions). For the Galerkin semi-discretization we show a discrete energy identity that mirrors the continuum estimate. For computation, we evaluate the DDF in real space with a matrix-free near/far scheme and advance in time with a second-order IMEX splitting method that treats diffusion and relaxation implicitly and treats precession explicitly. The explicit stage applies a Rodrigues rotation at DDF quadrature points followed by an $L^2$ projection, which supports stable multi-cycle lab-frame calculations. We validate against three closed-form benchmarks that isolate distinct model components and we quantify curved-boundary effects by comparing mapped-geometry finite elements with a voxel-mask finite-difference baseline on a spherical Neumann eigenmode decay. These results provide an analyzable and reproducible route for Bloch--DDF dynamics on bounded domains with complex geometry.
\end{abstract}

\keywords{Bloch equations; distant dipolar field; finite elements; weak solution; well-posedness; energy stability; matrix-free method; IMEX time stepping}

\maketitle

\section{Introduction}\label{sec:intro}

Intermolecular dipolar couplings can generate long-range, nonlocal contributions to nuclear spin dynamics in liquids and soft matter. In many experimentally relevant regimes, these effects can be expressed as a distant dipolar field (DDF) that depends on the magnetization distribution over the sample. This mechanism underlies several families of intermolecular multiple-quantum coherence (iMQC) and related sequences, including early demonstrations of coherence pathways that couple spins over mesoscopic distances and can produce contrast mechanisms that differ from conventional local Bloch models~\cite{bib:warrenscience93,bib:warrenjcp93,bib:warrenjcp96,bib:warrenscience96,bib:warrenscience98,bib:bowtelljmr92,bib:bowtellprl96,bib:bowtellimaging,bib:deville}. In these settings the DDF acts as a sign-changing interaction whose net effect depends on geometry, boundary conditions, and the spatial encoding applied by gradients.

This physics has been leveraged for structural and materials characterization and for MRI contrast mechanisms. Examples include reconstruction of porous microstructure from bulk signals that depend on the dipolar field~\cite{bib:bouchardjmr}, vector-field and correlation-length encoding strategies for imaging and characterization~\cite{bouchard2005multiple,chin2003isolating}, and sensitivity to internal field gradients and anisotropy in heterogeneous media such as trabecular bone~\cite{bib:bouchardbone}, together with related diffusion-in-internal-field methods for trabecular bone microstructure characterization~\cite{bib:sigmund2008bone}. Related work has explored coherent diffraction-like phenomena in engineered structures~\cite{tang2004observing} and nonlinear dynamics in strongly polarized fluids where long-range dipolar interactions can drive instabilities~\cite{bib:bouchardxenon}. These applications motivate simulations of Bloch--DDF dynamics beyond idealized periodic boxes.

From a computational standpoint, Bloch--DDF models are nonlinear and nonlocal. Direct evaluation of the dipolar integral couples all source and target points and can dominate cost. Efficient approaches on structured grids often use Fourier methods or hybrid real/Fourier strategies to accelerate dipolar field evaluation~\cite{bib:tilman}. Complementary numerical and perturbative studies on structured models have visualized geometry-dependent DDF patterns and analyzed susceptibility-sensitive CRAZED-type signals~\cite{bib:kirsch2009visualization,bib:wong2010theoretical}. Analytical treatments have also derived steady-state longitudinal profiles for repeated DDF sequences and revisited the validity regime of approximate Bloch--Torrey-based DDF signal formulas~\cite{bib:corum2004spatially,bib:barros2009revisited}. However, FFT-based approaches are naturally aligned with periodic or padded rectangular domains. They can be awkward for curved boundaries, reflective diffusion conditions, and complex geometries, where boundary effects are part of the physical model rather than a numerical artifact.

In this work we develop and analyze a finite-element (FE) weak formulation for the Bloch--DDF system on bounded domains with reflective diffusion boundaries first proposed in Refs.~\cite{bouchard2007finite,bouchard2005characterization}. The weak form reduces regularity requirements and supports complex geometries. We introduce a short-distance regularization of the secular DDF kernel with length scale $a>0$, which yields bounded operator estimates on bounded domains. On this basis we prove boundedness of the regularized DDF operator, derive an $L^2$ energy balance in which precession is neutral and diffusion and transverse relaxation are dissipative, and obtain existence and uniqueness of weak solutions under assumptions stated in the text. For the semi-discrete FE system we show a discrete energy identity that mirrors the continuum estimate.

Algorithmically, we apply the DDF in a matrix-free real-space near/far evaluation and use a second-order implicit--explicit (IMEX) time integrator. Diffusion and relaxation are treated implicitly, while precession (and, when present, advection) is treated explicitly. In the implementation, the explicit precession stage uses a structure-preserving update (Rodrigues rotation at DDF quadrature points followed by an $L^2$ projection), which enables stable multi-cycle lab-frame simulations.

Validation is challenging because general closed-form solutions are not available for nonlinear, nonlocal Bloch--DDF dynamics on bounded domains. We therefore validate against three closed-form benchmarks that isolate distinct model components: a uniform-mode DDF reduction in which the DDF enters as a deterministic kernel average, a periodic plane-wave eigenmode based on the Fourier symbol of the regularized kernel, and a purely longitudinal Neumann diffusion+$T_1$ eigenmode on a bounded domain. These benchmarks provide parameter-free checks of phase evolution, decay rates, and boundary-condition handling. We further quantify a geometry-driven advantage of the FE formulation on curved Neumann boundaries by comparing mapped-geometry FE against a voxel-mask finite-difference baseline on a spherical Neumann eigenmode decay for a boundary-sensitive mode.

The remainder of the paper recalls the Bloch--DDF model and boundary conditions, states the weak form, sets the functional framework and operator properties, presents the FE semi-discretization and its stability, describes the time integrator and implementation details, and provides validation studies and representative dynamics. Long-time lab-frame oscillations and envelope-level DDF effects are shown in \Cref{sec:validation} (see \Cref{fig:osc_long,fig:ddf_envelope,fig:gz200}).

\section{Bloch equations with distant dipolar field}\label{sec:model}

Let $\Omega\subset\mathbb{R}^3$ be a bounded sample domain. The magnetization is a vector field $\vec M(\mathbf r,t)$ defined at every point in time on $\Omega$. When flow is absent or when $\vec v\cdot\hat{\mathbf n}=0$ on $\partial\Omega$, it is natural to view $\vec M$ as confined to $\Omega$. If $\vec v\cdot\hat{\mathbf n}\neq 0$, then an inflow boundary condition is required on $\Gamma_{\mathrm{in}}=\{\mathbf r\in\partial\Omega:\vec v\cdot\hat{\mathbf n}<0\}$ and magnetization is transported across $\partial\Omega$.

Write $\mathbf R=\mathbf r-\mathbf r'$, $R=\lVert \mathbf R\rVert$, and $\hat{\mathbf R}=\mathbf R/R$. In high magnetic fields, the secular DDF can be written as
\begin{equation}
\vec B_d(\mathbf r)
= \int_{\Omega} \frac{1-3\cos^2\theta(\mathbf r,\mathbf r')}{2\,R^{3}}
\Bigl[\,3 M_z(\mathbf r')\,\hat{\mathbf z}-\vec M(\mathbf r')\Bigr]\,d^3\mathbf r',
\label{eq:secular_ddf}
\end{equation}
where $\cos\theta(\mathbf r,\mathbf r')=\hat{\mathbf z}\cdot\hat{\mathbf R}$. Without the secular approximation, the full expression is
\begin{equation}
\vec B(\mathbf r)
= \int_{\Omega} \frac{1}{R^{3}}
\left[\vec M(\mathbf r') - 3\bigl(\vec M(\mathbf r')\cdot\hat{\mathbf R}\bigr)\hat{\mathbf R}\right]\,d^3\mathbf r'.
\label{eq:exact_ddf}
\end{equation}
We absorb constant prefactors (such as $\mu_0/4\pi$) into units. Radiation damping or other offsets can be added as extra precession terms.

\subsection{Regularized secular kernel and optional diffusion-length filtering}

For analysis and for bounded-domain numerics it is convenient to regularize the short-distance singularity. We introduce a length scale $a>0$ and use the regularized secular kernel
\begin{align}
K_a(\mathbf r-\mathbf r')
&= \frac{1-3\cos^2\theta(\mathbf r,\mathbf r')}{2\,\bigl(R^2+a^2\bigr)^{3/2}},
\label{eq:Ka_def}
\\
\vec B_d(\mathbf r)
&= \int_{\Omega} K_a(\mathbf r-\mathbf r')\,
\mathrm{diag}(-1,-1,2)\,\vec M(\mathbf r')\,d^3\mathbf r' .
\label{eq:kernel_reg}
\end{align}
The factor $1/2$ is chosen so that in the formal limit $a\to 0$ one recovers the standard secular kernel in \eqref{eq:secular_ddf}.
At $\mathbf r=\mathbf r'$, the direction $\hat{\mathbf R}$ is undefined.
In the continuum integral this is immaterial (a measure-zero set), but in discrete quadrature and point-sum evaluations we enforce a consistent convention by omitting self-interactions (equivalently, setting the self-kernel contribution to zero).
In this work we develop the analysis for fixed $a>0$, for which the induced operator $\vec M\mapsto \vec B_d[\vec M]$ is bounded on the Sobolev spaces used below.
From a modeling perspective, $a$ can be interpreted as a coarse-graining length that removes sub-resolution contributions to the dipolar field.
In computations, $a$ also acts as a numerical softening scale that prevents near-field singular behavior when $\vec B_d$ is evaluated from discrete point or quadrature samples.
For the bounded-domain simulations reported here, $a$ is chosen as a fixed fraction of the domain length scale and is held fixed under the stated validation tests (unless otherwise stated). A strict excluded-volume model can alternatively be represented by a pair-correlation factor $g(R)$ with $g(R)=0$ for $R<a$ and $g(R)\to 1$ as $R\to\infty$, in which case one replaces $K_a$ by $K_a g$.

In many pulse sequences, diffusion during a characteristic time window $\tau$ suppresses DDF contributions from length scales below the diffusion length $\ell_D=\sqrt{D\tau}$. One can model this effect by replacing $\vec M$ in \eqref{eq:kernel_reg} by a filtered field $\vec M_\tau=\mathcal G_\tau[\vec M]$, where $\mathcal G_\tau$ denotes convolution with the Neumann heat kernel on $\Omega$ (or an equivalent FE heat-step filter). This filtering is an optional extension; the numerical experiments reported in this paper use the unfiltered regularized operator \eqref{eq:kernel_reg}.

\subsection{Bloch--DDF dynamics}

Including diffusion, relaxation, and optional flow, the Bloch--DDF dynamics are
\begin{align}
\frac{\partial \vec M}{\partial t}
&= \gamma\,\vec M \times \bigl(\vec B_d[\vec M] + \delta\vec B(\mathbf r)\bigr)
- \bigl(\vec v(\mathbf r)\cdot\nabla\bigr)\vec M \nonumber\\
&\quad + \nabla\cdot\bigl(D(\mathbf r)\nabla \vec M\bigr)
- \frac{M_x\,\hat{\mathbf x}+M_y\,\hat{\mathbf y}}{T_2(\mathbf r)}
+ \frac{M_0(\mathbf r)-M_z}{T_1(\mathbf r)}\,\hat{\mathbf z},
\label{eq:bloch_ddf}
\end{align}
where $D(\mathbf r)\ge 0$ is the (scalar) diffusion coefficient (a diffusion tensor can be used without changing the main ideas), $T_{1,2}(\mathbf r)$ may vary in space, $\delta\vec B(\mathbf r)$ is a prescribed static offset field, and $\vec v(\mathbf r)$ is a prescribed velocity field. The total effective field entering precession is
$\vec B_{\mathrm{eff}}[\vec M](\mathbf r)=\vec B_d[\vec M](\mathbf r)+\delta\vec B(\mathbf r)$, and $\gamma$ is the gyromagnetic ratio.
In the analysis we retain $\gamma$ explicitly.
In the numerical experiments and implementation we instead work in angular-frequency units by absorbing $\gamma$ into the fields, i.e., $\vec\Omega_d=\gamma\,\vec B_d$ and $\delta\vec\Omega=\gamma\,\delta\vec B$, so the precession term is written as $\vec M\times(\vec\Omega_d+\delta\vec\Omega)$.
Accordingly, when we specify a uniform offset $\omega$ or gradient $g_z$, we state whether it is given in field units or in angular-frequency units; in the analytical benchmarks below, $\omega$ denotes a field offset so the corresponding angular frequency is $\gamma\,\omega$.

We write the transport term in advective form,
$\partial_t \vec M + (\vec v\cdot\nabla)\vec M=\cdots$,
equivalently, \eqref{eq:bloch_ddf} with the advection term moved to the
left-hand side. The system is supplemented by reflective diffusion boundary
conditions in the no-flux form
$\hat{\mathbf n}\cdot\bigl(D(\mathbf r)\nabla \vec M\bigr)=0$ on
$\partial\Omega$, understood componentwise. For scalar diffusion with
$D(\mathbf r)>0$ near the boundary, this reduces to
$\hat{\mathbf n}\cdot\nabla \vec M=0$. If advection is included and
$\vec v\cdot\hat{\mathbf n}\neq 0$ on $\partial\Omega$, an inflow boundary
condition for $\vec M$ is additionally required on
$\Gamma_{\mathrm{in}}$. In the numerical experiments considered here,
$\vec v\equiv 0$, so no advection term appears in the computed cases.

These equations are the basis for the weak formulation in \S\ref{sec:weakform}.

\subsection{Weak solutions}\label{sec:weakform}

Let $\Omega\subset\mathbb{R}^3$ be a bounded Lipschitz domain with boundary $\partial\Omega$ and outward unit normal $\hat{\mathbf n}$. The magnetization is a vector field $\vec M(\mathbf r,t)$ defined on $\Omega$. We include static offsets $\delta\vec B(\mathbf r)$ and the DDF $\vec B_d[\vec M]$. We allow spatial variation in $T_{1,2}(\mathbf r)$ and $D(\mathbf r)$. For reflective diffusion boundaries we impose the homogeneous no-flux condition
\begin{equation}
\hat{\mathbf n}\cdot\bigl(D(\mathbf r)\nabla \vec M\bigr) = 0
\quad \text{on } \partial\Omega,
\label{eq:neumann_bc}
\end{equation}
understood componentwise.
For scalar $D(\mathbf r)$ with $D(\mathbf r)>0$ near $\partial\Omega$, this is equivalent to $\hat{\mathbf n}\cdot\nabla \vec M=0$. If advection is included and $\vec v\cdot\hat{\mathbf n}\neq 0$ on $\partial\Omega$, an inflow boundary condition on $\Gamma_{\mathrm{in}}=\{\mathbf r\in\partial\Omega:\vec v\cdot\hat{\mathbf n}<0\}$ is also required; for simplicity, the analysis below emphasizes the case $\vec v\cdot\hat{\mathbf n}=0$ when flow is present.

In component form ($i=1,2,3$), with $\vec B\equiv\vec B_d[\vec M]$ and
$\epsilon_{ijk}$ denoting the Levi--Civita symbol, the Bloch--DDF system
reads
\begin{equation}
\begin{aligned}
\frac{\partial M_i}{\partial t}
&= \gamma\,\epsilon_{ijk}\,M_j\bigl(B_k+\delta B_k\bigr)
- \frac{\delta_{i1} M_1+\delta_{i2} M_2}{T_2(\mathbf r)} \\
&\quad + \delta_{i3}\,\frac{M_0(\mathbf r)-M_3}{T_1(\mathbf r)} \\
&\quad + \nabla\cdot\bigl(D(\mathbf r)\nabla M_i\bigr)
- \bigl(\vec v(\mathbf r)\cdot\nabla\bigr) M_i,
\end{aligned}
\label{eq:bloch_componform}
\end{equation}
for $\mathbf r\in\Omega$. The advection sign convention in
\eqref{eq:bloch_componform} is therefore consistent with the advective form
$\partial_t M_i + (\vec v\cdot\nabla)M_i=\cdots$. If
$\nabla\cdot\vec v=0$, this is equivalent to the conservative form
$\partial_t M_i + \nabla\cdot(\vec v\,M_i)=\cdots$.

Let $V:=H^1(\Omega)$ and take any test function $v_i\in V$. Multiply \eqref{eq:bloch_componform} by $v_i$ and integrate over $\Omega$:
\begin{equation}
\begin{aligned}
\int_\Omega \frac{\partial M_i}{\partial t}\,v_i\,\mathrm{d}^3\mathbf r
&= \gamma\sum_{j,k=1}^3 \epsilon_{ijk}
   \int_\Omega M_j(B_k+\delta B_k)\,v_i\,\mathrm{d}^3\mathbf r \\
&\quad - \int_\Omega \frac{\delta_{i1} M_1+\delta_{i2} M_2}{T_2(\mathbf r)}\,v_i\,\mathrm{d}^3\mathbf r \\
&\quad - \int_\Omega \frac{\delta_{i3} M_3}{T_1(\mathbf r)}\,v_i\,\mathrm{d}^3\mathbf r
+ \int_\Omega \nabla\cdot\bigl(D(\mathbf r)\nabla M_i\bigr)\,v_i\,\mathrm{d}^3\mathbf r \\
&\quad - \int_\Omega \bigl(\vec v(\mathbf r)\cdot\nabla M_i\bigr)\,v_i\,\mathrm{d}^3\mathbf r
+ \int_\Omega \frac{M_0(\mathbf r)}{T_1(\mathbf r)}\,\delta_{i3}\,v_i\,\mathrm{d}^3\mathbf r .
\end{aligned}
\label{eq:weak_step1}
\end{equation}
Integrate the diffusion term by parts:
\begin{multline}
\int_\Omega \nabla\cdot\bigl(D(\mathbf r)\nabla M_i\bigr)\,v_i\,\mathrm{d}^3\mathbf r
= - \int_\Omega D(\mathbf r)\,\nabla M_i\cdot\nabla v_i\,\mathrm{d}^3\mathbf r \\
+ \int_{\partial\Omega} v_i\,\hat{\mathbf n}\cdot\bigl(D(\mathbf r)\nabla M_i\bigr)\,\mathrm{d}S .
\label{eq:diffusion_ibp}
\end{multline}
Under the homogeneous Neumann condition \eqref{eq:neumann_bc}, the boundary term vanishes; for scalar $D(\mathbf r)$, $\hat{\mathbf n}\cdot\nabla M_i=0$ implies $\hat{\mathbf n}\cdot(D\nabla M_i)=0$.

For advection we denote by $a_{\mathrm{adv}}(M_i,v_i)$ the bilinear form
used in the weak formulation. In the energy-neutral case, assume
$\nabla\cdot\vec v=0$ in $\Omega$ and
$\vec v\cdot\hat{\mathbf n}=0$ on $\partial\Omega$, and define
\begin{equation}
a_{\mathrm{adv}}(M_i,v_i)
:= \frac12\int_\Omega (\vec v\cdot\nabla M_i)\,v_i\,\mathrm{d}^3\mathbf r
   - \frac12\int_\Omega (\vec v\cdot\nabla v_i)\,M_i\,\mathrm{d}^3\mathbf r .
\label{eq:adv_skew}
\end{equation}
With this choice, $a_{\mathrm{adv}}(M_i,M_i)=0$ for each component. If these
conditions do not hold, one may instead use the standard advective bilinear
form
\begin{equation}
a_{\mathrm{adv}}(M_i,v_i)
:=
\int_\Omega (\vec v\cdot\nabla M_i)\,v_i\,\mathrm{d}^3\mathbf r,
\label{eq:adv_standard}
\end{equation}
together with the appropriate inflow/outflow boundary terms.

\begin{definition}[Weak form of the Bloch--DDF system]
With reflective diffusion boundaries, the weak form is: for each component $i$ and for all $v_i\in H^1(\Omega)$,
\begin{equation}
\begin{aligned}
\int_\Omega \frac{\partial M_i}{\partial t}\,v_i\,\mathrm{d}^3\mathbf r
&= \gamma\sum_{j,k=1}^3 \epsilon_{ijk}
   \int_\Omega M_j\bigl(B_k+\delta B_k\bigr)\,v_i\,\mathrm{d}^3\mathbf r \\
&\quad - \int_\Omega
   \frac{\delta_{i1} M_1+\delta_{i2} M_2}{T_2(\mathbf r)}\,v_i\,
   \mathrm{d}^3\mathbf r \\
&\quad - \int_\Omega
   \frac{\delta_{i3} M_3}{T_1(\mathbf r)}\,v_i\,
   \mathrm{d}^3\mathbf r \\
&\quad - \int_\Omega
   D(\mathbf r)\,\nabla M_i\cdot\nabla v_i\,
   \mathrm{d}^3\mathbf r \\
&\quad - a_{\mathrm{adv}}(M_i,v_i) \\
&\quad + \int_\Omega
   \frac{M_0(\mathbf r)}{T_1(\mathbf r)}\,\delta_{i3}\,v_i\,
   \mathrm{d}^3\mathbf r .
\end{aligned}
\label{eq:weak_bloch}
\end{equation}
where $\vec B=\vec B_d[\vec M]$. When the skew form
\eqref{eq:adv_skew} is used, we assume
$\nabla\cdot\vec v=0$ in $\Omega$ and
$\vec v\cdot\hat{\mathbf n}=0$ on $\partial\Omega$. If the standard
advective form \eqref{eq:adv_standard} is used instead, then the
corresponding inflow/outflow boundary terms must be included.
\end{definition}

For the numerical approximation we use a conforming Galerkin method.

\begin{definition}[Galerkin FE formulation]
Let $V_h\subset H^1(\Omega)$ be a finite-dimensional space of scalar shape functions with $\dim V_h=N_h$. Seek
$u_i^h(\mathbf r,t)=\sum_{n=1}^{N_h} w_{in}(t)\,\varphi_n(\mathbf r)\in V_h$ ($i=1,2,3$) such that, for all $v_i=\varphi_m\in V_h$,
\begin{equation}
\begin{aligned}
\int_\Omega \frac{\partial u_i^h}{\partial t}\,v_i\,\mathrm{d}^3\mathbf r
&= \gamma\sum_{j,k=1}^3 \epsilon_{ijk}
   \int_\Omega
   u_j^h\bigl(B_k[u^h]+\delta B_k\bigr)\,v_i\,
   \mathrm{d}^3\mathbf r \\
&\quad - \int_\Omega
   \frac{\delta_{i1} u_1^h+\delta_{i2} u_2^h}{T_2(\mathbf r)}\,v_i\,
   \mathrm{d}^3\mathbf r \\
&\quad - \int_\Omega
   \frac{\delta_{i3} u_3^h}{T_1(\mathbf r)}\,v_i\,
   \mathrm{d}^3\mathbf r \\
&\quad - \int_\Omega
   D(\mathbf r)\,\nabla u_i^h\cdot\nabla v_i\,
   \mathrm{d}^3\mathbf r \\
&\quad - a_{\mathrm{adv}}(u_i^h,v_i) \\
&\quad + \int_\Omega
   \frac{M_0(\mathbf r)}{T_1(\mathbf r)}\,\delta_{i3}\,v_i\,
   \mathrm{d}^3\mathbf r .
\end{aligned}
\label{eq:galerkin_form}
\end{equation}
Here $B_k[u^h]$ is the $k$th component of the DDF evaluated from $u^h$ via the chosen DDF kernel (e.g., \eqref{eq:secular_ddf} or the regularized form \eqref{eq:kernel_reg}).
\end{definition}

\noindent\textbf{Remarks.}
The Galerkin method enforces orthogonality of the residual to the test space $V_h$ in the sense of \eqref{eq:galerkin_form}. Under standard regularity assumptions, $u^h$ converges to the weak solution as the mesh is refined~\cite{bib:quarteroni}. The diffusion term is well-defined for $H^1(\Omega)$ fields. For advection, one may use a conservative or skew-symmetric form (for example, \eqref{eq:adv_skew} when admissible) and add stabilization (such as SUPG) in advection-dominated regimes without changing the structure of \eqref{eq:galerkin_form}.

Choose a scalar FE basis $\{\varphi_n\}_{n=1}^{N_h}\subset H^1(\Omega)$ and expand each component as
\begin{equation}
\begin{aligned}
u_i^h(\mathbf r,t)
&= \sum_{n=1}^{N_h} w_{in}(t)\,\varphi_n(\mathbf r),
\quad i\in\{1,2,3\},
\end{aligned}
\label{eq:FE_expansion}
\end{equation}
where $w_{in}(t)$ are the unknown coefficients. Testing \eqref{eq:galerkin_form} with $v_i=\varphi_m$ defines the standard FE operators
\begin{align}
M_{mn}
&= \int_\Omega \varphi_n(\mathbf r)\,\varphi_m(\mathbf r)\,\mathrm{d}^3\mathbf r,
\label{eq:M_def}\\
K_{mn}
&= \int_\Omega D(\mathbf r)\,
        \nabla\varphi_n(\mathbf r)\cdot\nabla\varphi_m(\mathbf r)\,\mathrm{d}^3\mathbf r,
\label{eq:K_def}\\
(S_{1})_{mn}
&= \int_\Omega \frac{1}{T_1(\mathbf r)}\,
        \varphi_n(\mathbf r)\,\varphi_m(\mathbf r)\,\mathrm{d}^3\mathbf r,
\label{eq:S1_def}\\
(S_{2})_{mn}
&= \int_\Omega \frac{1}{T_2(\mathbf r)}\,
        \varphi_n(\mathbf r)\,\varphi_m(\mathbf r)\,\mathrm{d}^3\mathbf r,
\label{eq:S2_def}\\
(b_{T_1})_m
&= \int_\Omega \frac{M_0(\mathbf r)}{T_1(\mathbf r)}\,
        \varphi_m(\mathbf r)\,\mathrm{d}^3\mathbf r.
\label{eq:b_def}
\end{align}
For advection, we distinguish the non-skew and skew-symmetric discrete operators. The non-skew operator associated with the strong form $-(\vec v\cdot\nabla)M_i$ is
\begin{equation}
(N_v)_{mn}
= - \int_\Omega \bigl(\vec v(\mathbf r)\cdot\nabla\varphi_n(\mathbf r)\bigr)\,
        \varphi_m(\mathbf r)\,\mathrm{d}^3\mathbf r .
\label{eq:Nv_def}
\end{equation}
When $\nabla\cdot\vec v=0$ and $\vec v\cdot\hat{\mathbf n}=0$ on $\partial\Omega$, an energy-neutral alternative is the skew form
\begin{equation}
(N_v^{\mathrm{skew}})_{mn}
= -\frac12\int_\Omega \bigl(\vec v\cdot\nabla\varphi_n\bigr)\,\varphi_m\,\mathrm{d}^3\mathbf r
+\frac12\int_\Omega \bigl(\vec v\cdot\nabla\varphi_m\bigr)\,\varphi_n\,\mathrm{d}^3\mathbf r .
\label{eq:Nv_skew_def}
\end{equation}
In the numerical experiments reported in this paper we take $\vec v\equiv 0$, so $N_v=0$ and advection is omitted from the computed cases; we include $N_v$ here to state the general weak form and to support extensions.

Let $w_i=(w_{i1},\dots,w_{iN_h})^\top$. The semi-discrete system (with the mass matrix retained) reads
\begin{equation}
\begin{aligned}
M\,\dot w_i
&= \gamma\,\mathcal P_i(w)
- \delta_{i1}\,S_2\,w_1
- \delta_{i2}\,S_2\,w_2 \\
&\quad - \delta_{i3}\,S_1\,w_3
- K\,w_i
+ N_v\,w_i
+ \delta_{i3}\,b_{T_1},
\quad i=1,2,3 .
\end{aligned}
\label{eq:semi_discrete}
\end{equation}
where $\mathcal P_i(w)$ is the discrete precession contribution induced by the DDF and any static offset field $\delta\vec B$. In the energy and stability results we will assume either $N_v=N_v^{\mathrm{skew}}$ (when admissible) or else we will explicitly account for the symmetric part of $N_v$.

Given coefficient vectors $w$, we proceed in three steps. First, evaluate the FE fields $M_j^h(\mathbf r)=\sum_n w_{jn}\varphi_n(\mathbf r)$ at the chosen DDF quadrature points $\{\mathbf r_q\}_{q=1}^{N_q}$. Second, compute the DDF field $\vec B_d(\mathbf r_q)$ from $\vec M^h$ using the chosen DDF kernel (e.g., \eqref{eq:secular_ddf} or the regularized form \eqref{eq:kernel_reg}). In the reference implementation, $\vec B_d$ is evaluated in real space by either a direct all-pairs sum over source points for small problem sizes or validation runs, or a Barnes--Hut octree approximation for larger problems. In the Barnes--Hut mode, source points are grouped in an octree; a target point accepts a node's aggregated contribution when the opening criterion $s/d\le\theta$ is satisfied, where $s$ is the node size and $d$ is the distance from the target to the node center.
When a node is accepted, we approximate its contribution using the node's aggregated magnetization (monopole) together with the full regularized secular kernel $K_a(\mathbf R)$ evaluated with $\mathbf R$ taken as the vector from the target point to the node center (so the angular factor $1-3\cos^2\theta$ is retained).
The diagonal factor $\mathrm{diag}(-1,-1,2)$ is applied as in \eqref{eq:kernel_reg}.
When the opening criterion fails, we fall back to direct summation over the node's children, and ultimately over leaf contents.
The user-controlled parameters are the opening angle $\theta$ and the leaf size (maximum points per leaf).
In practice, we validate these choices by comparing the tree-based field $\vec B_d^{\mathrm{tree}}(\mathbf r_q)$ against the direct all-pairs field $\vec B_d^{\mathrm{direct}}(\mathbf r_q)$ on representative small meshes and choose $(\theta,\text{leaf size})$ so that the resulting relative field error remains below the time-discretization error in the reported runs. Third, assemble, for each $i$ and each test index $m$,
\begin{equation}
\bigl[\mathcal P_i(w)\bigr]_m
= \sum_{j,k=1}^3 \epsilon_{ijk}\int_\Omega
   M_j^h(\mathbf r)\,B_k(\mathbf r)\,\varphi_m(\mathbf r)\,\mathrm{d}^3\mathbf r .
\label{eq:P_matrix_free}
\end{equation}
In the implementation, the integral in \eqref{eq:P_matrix_free} is evaluated by the same quadrature rule used to define the DDF quadrature points, i.e.,
\[
\bigl[\mathcal P_i(w)\bigr]_m \approx \sum_{j,k=1}^3 \epsilon_{ijk}\sum_{q=1}^{N_q} w_q\,
M_j^h(\mathbf r_q)\,B_k(\mathbf r_q)\,\varphi_m(\mathbf r_q).
\]
If one prefers a fully assembled representation, define
\begin{align}
R_{k\,nm}
&= \int_\Omega \delta B_k(\mathbf r)\,
    \varphi_n(\mathbf r)\,\varphi_m(\mathbf r)\,
    \mathrm{d}^3\mathbf r,
\label{eq:R_def}\\
T_{k\,nmq}
&= a_k \int_\Omega\!\int_\Omega
    \varphi_n(\mathbf r)\,\varphi_m(\mathbf r)\,
    K_{\mathrm{DDF}}(\mathbf r,\mathbf r')\,
    \varphi_q(\mathbf r')\,
    \mathrm{d}^3\mathbf r'\,\mathrm{d}^3\mathbf r,
\label{eq:T_def}
\end{align}
with $a_1=a_2=-1$ and $a_3=2$. Here $K_{\mathrm{DDF}}(\mathbf r,\mathbf r')$ denotes the scalar DDF kernel used in the model; for the regularized secular choice it is $K_a(\mathbf r-\mathbf r')$ from \eqref{eq:Ka_def}, and for the unregularized secular kernel it is $(1-3\cos^2\theta(\mathbf r,\mathbf r'))/(2\lVert \mathbf r-\mathbf r'\rVert^3)$ interpreted in the principal-value sense.

Then
\begin{equation}
\begin{aligned}
\bigl[\mathcal P_i(w)\bigr]_m
&= \sum_{j,k=1}^3 \epsilon_{ijk}
   \Biggl[
      \sum_{n,q=1}^{N_h}
      w_{jn}\,w_{kq}\,T_{k\,nmq} \\
&\quad\quad\quad
      + \sum_{n=1}^{N_h}
      w_{jn}\,R_{k\,nm}
   \Biggr] .
\end{aligned}
\label{eq:P_tensor_form}
\end{equation}
In practice we do not assemble $T_{k\,nmq}$; the matrix-free approach above controls memory and cost and is the implementation used in \Cref{sec:discrete_stability} and \Cref{sec:time}.

\subsection{Time integration}\label{sec:time}

We advance \eqref{eq:semi_discrete} with a second-order IMEX Strang splitting method. Physically, diffusion and relaxation are the stiff, dissipative processes in the Bloch--Torrey dynamics, so they are advanced implicitly to avoid a severe stability restriction. By contrast, precession is non-dissipative and acts locally as a rotation of the magnetization, which makes an explicit rotation-based update natural for this part of the dynamics. Diffusion and relaxation are treated implicitly, while precession is treated explicitly. For the precession stage, the reference implementation uses a structure-preserving update that rotates the magnetization at DDF quadrature points with a Rodrigues map and then projects back to nodal coefficients by an $L^2$ projection (mass solve). This choice is important for stable multi-cycle lab-frame simulations such as those in \Cref{fig:osc_long}. The $L^2$ projection does not form $M^{-1}$ explicitly; it only requires a
sparse solve with the SPD mass matrix.

Let $w=(w_1,w_2,w_3)$, define the block operators
\begin{align*}
\mathcal M=& \mathrm{blkdiag}(M,M,M),\\
\mathcal K=& \mathrm{blkdiag}(K,K,K),\\
\mathcal S=& \mathrm{blkdiag}(S_2,S_2,S_1),
\end{align*}
and the source vector $f=(0,0,b_{T_1})$.
One time step from $t^n$ to $t^{n+1}=t^n+\Delta t$ proceeds as follows.
\begin{equation}
\begin{aligned}
\text{(i) }&
\bigl(\mathcal M+\tfrac{\Delta t}{4}(\mathcal K+\mathcal S)\bigr)\,w^{(a)}
= \bigl(\mathcal M-\tfrac{\Delta t}{4}(\mathcal K+\mathcal S)\bigr)\,w^{n}
  + \frac{\Delta t}{2}\,f .
\end{aligned}
\label{eq:time_half_impl}
\end{equation}
To describe the explicit precession stage, let $\{\mathbf r_q\}_{q=1}^{N_q}\subset\Omega$ be the DDF quadrature points with weights $\{w_q\}_{q=1}^{N_q}$. Given coefficients $w$, define the quadrature-point magnetization $\vec M_q=\vec M_h(\mathbf r_q)$ and the corresponding effective angular-frequency field
\[
\vec\Omega_q(w)=\gamma\Bigl(\delta\vec B(\mathbf r_q)+\vec B_d[\vec M_h](\mathbf r_q)\Bigr),
\]
evaluated using the same quadrature rules as in \eqref{eq:P_matrix_free}. Let $\mathcal R_{\tau}(\cdot;\vec\Omega)$ denote the Rodrigues rotation that maps $\vec m\in\mathbb{R}^3$ to the solution at time $\tau$ of $\dot{\vec m}=\vec m\times \vec\Omega$ with constant angular-frequency field $\vec\Omega\in\mathbb R^3$. Let $\Pi_h$ denote the componentwise $L^2$ projection from quadrature-point values back to FE coefficients: given values $\{u_q\}_{q=1}^{N_q}$, the projected coefficients $u_h=\sum_{n=1}^{N_h} c_n\varphi_n$ satisfy $M c = p$ with $p_m=\sum_{q=1}^{N_q} w_q\,u_q\,\varphi_m(\mathbf r_q)$.
\begin{equation}
\begin{aligned}
\text{(ii) }&
\vec M_q^{(a)}=\vec M_h^{(a)}(\mathbf r_q),\quad
\vec\Omega_q^{(a)}=\vec\Omega_q\!\bigl(w^{(a)}\bigr),\\
&\tilde{\vec M}_q=\mathcal R_{\Delta t/2}\!\bigl(\vec M_q^{(a)};\vec\Omega_q^{(a)}\bigr),\quad
\tilde w=\Pi_h(\tilde{\vec M}) .
\end{aligned}
\label{eq:time_mid_pred}
\end{equation}

\begin{equation}
\begin{aligned}
\text{(iii) }&
\vec\Omega_q^{(1/2)}=\vec\Omega_q(\tilde w),
\\
&\vec M_q^{(b)}=\mathcal R_{\Delta t}\!\bigl(\vec M_q^{(a)};\vec\Omega_q^{(1/2)}\bigr),\quad
w^{(b)}=\Pi_h(\vec M^{(b)}) .
\end{aligned}
\label{eq:time_full_expl}
\end{equation}
If advection is included, it is advanced explicitly with the same midpoint predictor, using $\tilde w$ as the midpoint state, and applied as an additional update to $w^{(b)}$. In the numerical experiments reported here we take $\vec v\equiv 0$, so the explicit stage consists only of \eqref{eq:time_mid_pred}--\eqref{eq:time_full_expl}. In the implementation, a spatially uniform term in $\delta\vec B$ (for example, $\omega\,\hat{\mathbf z}$) may be applied by an exact $\hat{\mathbf z}$-axis rotation before and after the DDF-driven rotation; this is equivalent to including it in the Rodrigues map above after converting to an angular frequency field $\vec\Omega=\gamma(\vec B_d+\delta\vec B)$. Accordingly, the angular-frequency quantities entering the Rodrigues map are $\gamma\,\omega$ and $\gamma\,g_z$ when $\omega$ and $g_z$ are specified in field units.
\begin{equation}
\begin{aligned}
\text{(iv) }&
\Bigl(\mathcal M+\tfrac{\Delta t}{4}(\mathcal K+\mathcal S)\Bigr)\,w^{n+1} \\
&\quad = \Bigl(\mathcal M-\tfrac{\Delta t}{4}(\mathcal K+\mathcal S)\Bigr)\,w^{(b)}
+ \tfrac{\Delta t}{2}\,f .
\end{aligned}
\label{eq:time_second_half_impl}
\end{equation}
Steps \eqref{eq:time_half_impl} and \eqref{eq:time_second_half_impl} are linear solves with a symmetric positive definite left-hand side when $D(\mathbf r)\ge 0$ and $T_{1,2}(\mathbf r)>0$. In the reference implementation we solve these systems either by sparse direct factorization (reused across time steps when coefficients are constant) or by conjugate gradients with a preconditioner (for example, AMG or incomplete factorization). The explicit stage \eqref{eq:time_mid_pred}--\eqref{eq:time_full_expl} evaluates the precession load by first computing the DDF field at quadrature points and then applying the pointwise Rodrigues map followed by an $L^2$ projection; this uses the same quadrature rules as the matrix-free construction \eqref{eq:P_matrix_free}. In physical terms, this explicit substep isolates the rotational part of the dynamics: during this stage the magnetization is rotated by the local effective field rather than damped. When advection is included, its action is also applied explicitly using the midpoint state $\tilde w$.

The scheme is second order in time for sufficiently smooth solutions. Its
local truncation error is $O(\Delta t^{3})$, and its global error is
$O(\Delta t^{2})$. The implicit Crank--Nicolson substeps are unconditionally
stable for the diffusion and relaxation terms. Accordingly, the time-step
restriction is not set by diffusion or relaxation, because those stiff
dissipative processes are handled implicitly. Instead, it is set by the
explicit part of the algorithm, namely the rate at which the effective
precession field, and any explicitly treated transport, can change the
magnetization over one time step. Thus, any time-step restriction is driven
by the explicit stage and depends on bounds for the effective precession rate
and, if advection is included, on bounds for the transport rate. In the
numerical experiments reported here we take $\vec v\equiv 0$, so the explicit
restriction is determined only by precession. In particular, a sufficient
condition is
\[
\Delta t \le c\,\Bigl(
|\gamma|\,\|\delta\vec B\|_{L^\infty(\Omega)}
+
|\gamma|\,\|\mathcal T_a\|_{L^2\to L^2}\,\sup_n \|w^n\|_M
\Bigr)^{-1},
\]
where $c>0$ is independent of the mesh size and
$\|w\|_M^2=\sum_{i=1}^3 w_i^\top M w_i$. If advection is included, an
additional contribution proportional to
$\|\vec v\|_{W^{1,\infty}(\Omega)}$ enters the sufficient condition.
This makes the role of the IMEX splitting transparent: the implicit part
removes the fast diffusive and relaxational stability barrier, while the
explicit restriction tracks only the physically nondissipative precessional
rotation and any explicitly retained transport. The stability results in
\Cref{sec:time_stability} make the dependence on precession and transport
bounds precise. They also allow either the skew advection operator
\eqref{eq:Nv_skew_def}, which is energy-neutral when admissible, or a
controlled contribution from the symmetric part of $N_v$.

\medskip
\noindent\emph{Remarks.} (a) If advection is included and
$\nabla\cdot\vec v=0$ and $\vec v\cdot\hat{\mathbf n}=0$ on $\partial\Omega$,
then the skew form $N_v^{\mathrm{skew}}$ in \eqref{eq:Nv_skew_def} is
energy-neutral and reduces artificial energy growth. (b) Mass lumping is not
used by default because it perturbs the $M$-inner-product identities used in
the stability analysis; it may be enabled for fully explicit variants when a
different stability argument is adopted. (c) If a far-field DDF approximation
is used, its tolerance should be coupled to $\Delta t$ so that the induced
defect in discrete skew-symmetry does not dominate the time discretization
error. (d) The Rodrigues map preserves $|\vec M|$ pointwise at the DDF
quadrature points during the explicit precession substep for fixed
$\vec\Omega_q$. The subsequent $L^2$ projection $\Pi_h$ preserves the
discrete $L^2$ structure used in the energy estimates, but it does not in
general preserve the pointwise magnetization magnitude at FE nodes.

\begin{figure*}[t]
\centering
\subfloat[\label{fig:analytic_uniform_re}]{\includegraphics[width=0.32\textwidth]{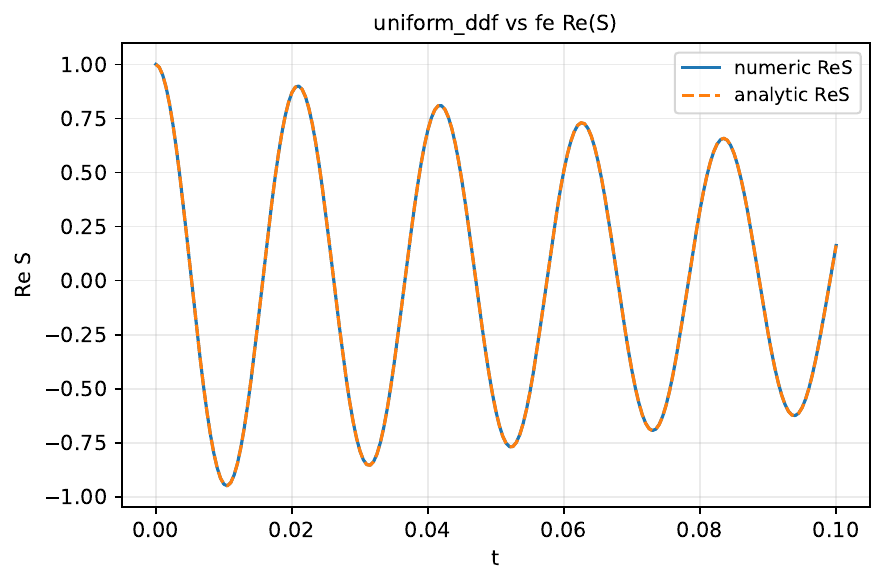}}
\hfill
\subfloat[\label{fig:analytic_uniform_im}]{\includegraphics[width=0.32\textwidth]{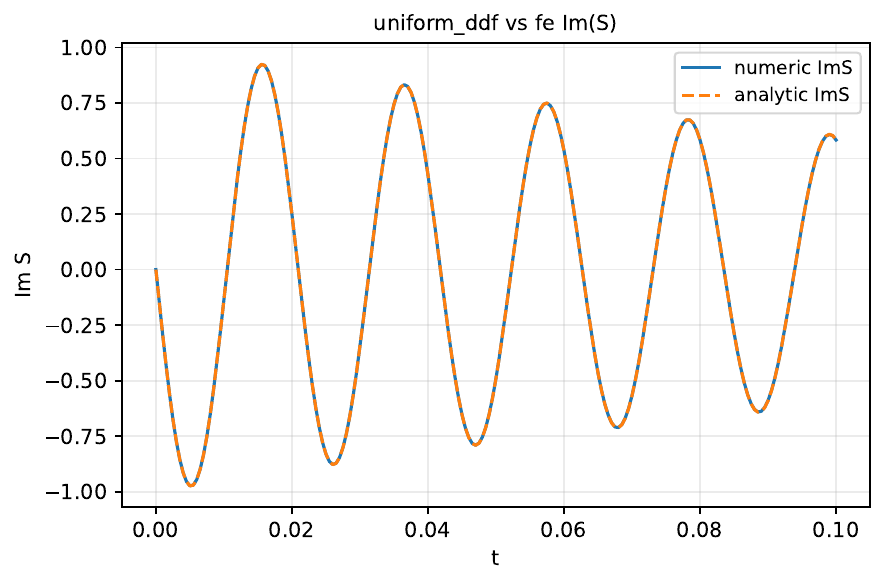}}
\hfill
\subfloat[\label{fig:analytic_uniform_abs}]{\includegraphics[width=0.32\textwidth]{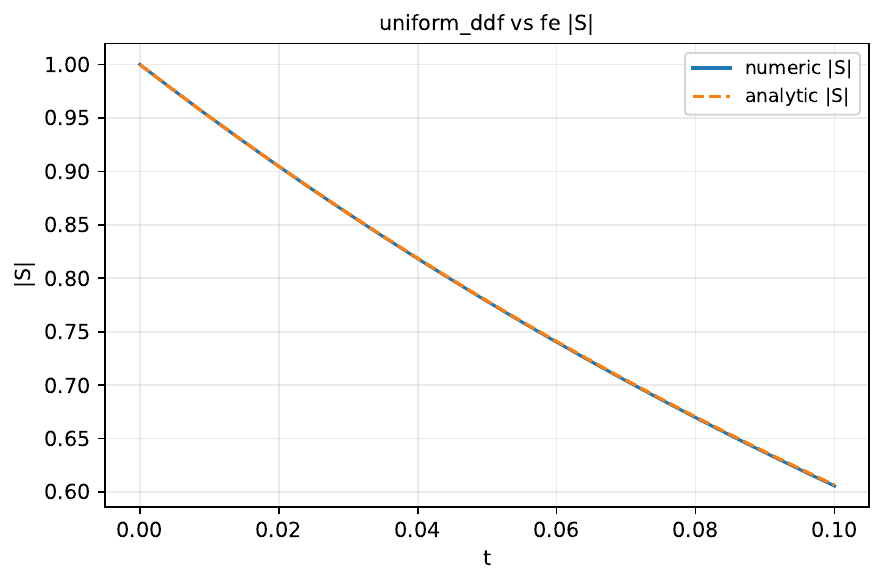}}
\caption{
Uniform-mode analytical benchmark.
Numerical $S(t)=\langle M_x+iM_y\rangle$ (from the FE run), where $\langle\cdot\rangle$ denotes the spatial average over $\Omega$ of the corresponding FE field, compared against the closed-form solution \eqref{eq:uniform_S_sol} with $\kappa_{\mathrm{eff}}$ computed directly from the regularized kernel and the stated quadrature rule.
Panels show $\mathrm{Re}\,S(t)$, $\mathrm{Im}\,S(t)$, and $|S(t)|$.
}
\label{fig:analytic_uniform}
\end{figure*}

\section{Functional setting and assumptions}\label{sec:assumptions}

Let $\Omega\subset\Bb R^3$ be a bounded Lipschitz domain with outward unit normal $\hat{\mathbf n}$. Vector fields are treated componentwise in the standard Sobolev spaces $L^2(\Omega)$ and $H^1(\Omega)$.

Assume $D\in L^\infty(\Omega)$ with $D(\mathbf r)\ge D_{\min}>0$ almost everywhere. (A symmetric positive definite diffusion tensor $\mathbf D(\mathbf r)\in L^\infty(\Omega)^{3\times 3}$ with $\mathbf D(\mathbf r)\xi\cdot\xi\ge D_{\min}\lVert\xi\rVert^2$ can be used without substantive changes.)
Assume $T_{1,2}\in L^\infty(\Omega)$ with $0<T_{\min}\le T_{1,2}(\mathbf r)\le T_{\max}<\infty$.
Let $\delta\vec B\in L^\infty(\Omega)^3$ and $M_0\in L^\infty(\Omega)$.
If advection is included, assume $\vec v\in W^{1,\infty}(\Omega)^3$. For the energy-neutral transport setting used in several stability statements below, we additionally assume
\begin{equation}
\nabla\cdot\vec v=0 \quad \text{in } \Omega,
\quad
\vec v\cdot\hat{\mathbf n}=0 \quad \text{on } \partial\Omega.
\label{eq:transport_assumptions}
\end{equation}
If $\vec v\cdot\hat{\mathbf n}\neq 0$ on $\partial\Omega$, then an inflow boundary condition for $\vec M$ on $\Gamma_{\mathrm{in}}=\{\mathbf r\in\partial\Omega:\vec v\cdot\hat{\mathbf n}<0\}$ is required and additional boundary terms appear in the energy estimates.

We use the regularized secular kernel with a short-distance length scale $a>0$,
\begin{equation}
K_a(\mathbf r)=\frac{1-3\cos^2\theta}{2\bigl(\lVert \mathbf r\rVert^2+a^2\bigr)^{3/2}},
\quad
\mathcal A=\mathrm{diag}(-1,-1,2),
\label{eq:Ka_operator_def}
\end{equation}
and define the linear map $\mathcal T_a:(L^2(\Omega))^3\to (L^2(\Omega))^3$ by
\begin{equation}
\mathcal T_a[\vec M](\mathbf r)=\int_{\Omega} K_a(\mathbf r-\mathbf r')\,\mathcal A\,\vec M(\mathbf r')\,d^3\mathbf r'.
\label{eq:Ta_def}
\end{equation}
The distant dipolar field is $\vec B_d=\mathcal T_a[\vec M]$.

The angular mean of the factor $(1-3\cos^2\theta)$ over a full sphere is zero,
\begin{equation}
\int_{S^2} \bigl(1-3\cos^2\theta\bigr)\,d\Omega=0.
\label{eq:angular_mean_zero}
\end{equation}
This fact helps interpret the DDF as a long-range, sign-changing interaction. On bounded domains, however, neighborhoods near $\partial\Omega$ are not spherically symmetric, so boundary effects can bias local contributions. In this work we keep the bounded-domain integral \eqref{eq:Ta_def} and treat boundary effects as part of the physical model. We develop the analysis for fixed $a>0$; the singular principal-value kernel at $a=0$ requires additional arguments and is not pursued here.

An optional pair-correlation factor $g(\lVert \mathbf r-\mathbf r'\rVert)\in L^\infty$ with $g=0$ near $0$ and $g\to 1$ at large separation may be included. All statements below hold with $K_a$ replaced by $K_a g$.

We impose reflective diffusion boundaries,
\begin{equation}
\hat{\mathbf n}\cdot\bigl(D(\mathbf r)\nabla \vec M\bigr)=0 \quad \text{on } \partial\Omega,
\label{eq:neumann_bc_assump}
\end{equation}
understood componentwise. For advection we use either the skew-symmetric discretization \eqref{eq:Nv_skew_def} under \eqref{eq:transport_assumptions}, or else we retain the standard variational form and explicitly bound the contribution from the symmetric part of the discrete advection operator.

\section{Properties of the DDF operator and an energy identity}\label{sec:ddf_props}

\stepcounter{theorem}
\noindent The regularized DDF operator remains bounded on the Lebesgue and Sobolev spaces used below, so the coarse-grained field does not generate additional singular growth. A detailed statement and proof are given in the Appendix. See \Cref{prop:Ta_bdd}.
\stepcounter{theorem}
\noindent The pointwise orthogonality identity used in the energy estimate is stated and proved in 
\Cref{lem:neutral} in the Appendix.
\stepcounter{theorem}
\noindent The local Lipschitz bound for the precession nonlinearity is stated and proved in 
\Cref{lem:lipschitz} in the Appendix.
\stepcounter{theorem}
\noindent The continuum energy calculation shows that precession is energy-neutral, while diffusion and relaxation are dissipative; transport contributes only through compression and boundary-flux terms. A detailed statement and proof are given in the Appendix. See \Cref{prop:energy}.

\section{Well-posedness of the weak problem}\label{sec:wellposed}

Let $V:=(H^1(\Omega))^3$, $H:=(L^2(\Omega))^3$, and $V':=(H^{-1}(\Omega))^3$. We use the Gelfand triple $V\hookrightarrow H\cong H'\hookrightarrow V'$.
Define
\[
X:=L^2(0,T;V)\cap H^1(0,T;V').
\]
A weak solution on $[0,T]$ is a function $\vec M\in X$ that satisfies \eqref{eq:weak_bloch} for all test functions $v_i\in H^1(\Omega)$ (with the chosen DDF kernel and with any required advection boundary conditions) and the initial condition $\vec M(\cdot,0)=\vec M_{\mathrm{init}}\in H$.

\stepcounter{theorem}
\noindent The time-continuity result used in the Gelfand-triple argument is stated and proved in 
\Cref{lem:LM} in the Appendix.
\stepcounter{theorem}
\noindent The local existence and uniqueness result is stated and proved in 
\Cref{thm:local}  in the Appendix. 
\stepcounter{theorem}
\noindent Weak solutions depend continuously on the initial data, so perturbations in the starting magnetization remain controlled on the local existence interval. A detailed statement and proof are given in the Appendix.  See \Cref{prop:contdep}.
\stepcounter{theorem}
\noindent When transport is absent or is imposed in an energy-neutral form, the a priori bounds extend the local weak solution to any finite time horizon. A detailed statement and proof are given in the Appendix. See \Cref{cor:global}.

\noindent\textbf{Remarks.}
(i) A uniformly elliptic diffusion tensor $\mathbf D(\mathbf r)$ can replace $D(\mathbf r)$ with minor notational changes. (ii) The analysis is developed for fixed $a>0$, for which $\mathcal T_a$ is bounded on the spaces used. A principal-value treatment of the singular kernel at $a=0$ requires additional kernel analysis and is not pursued here.

\section{Discrete stability of the FE semi-discretization}\label{sec:discrete_stability}

Let $M,K,S_1,S_2$ be as in \eqref{eq:M_def}--\eqref{eq:S2_def} and let $b_{T_1}$ be as in \eqref{eq:b_def}. For advection, let $N_v$ denote either the standard operator \eqref{eq:Nv_def} or the skew-symmetric operator $N_v^{\mathrm{skew}}$ in \eqref{eq:Nv_skew_def} when \eqref{eq:transport_assumptions} holds. Let $\mathcal P_i(w)$ be assembled with a quadrature that is exact on the products appearing below, or sufficiently accurate so that any quadrature defect is of higher order in the mesh size $h$. If the DDF is evaluated approximately (e.g., by a compressed far-field), define the discrete skew-symmetry defect
\[
\delta_{\mathrm{skew}}(w):=\sum_{i=1}^3 w_i^\top \mathcal P_i(w).
\]
We assume the approximation tolerance is chosen so that $|\delta_{\mathrm{skew}}(w)|$ remains small compared to the time-discretization error over the time window of interest.

\stepcounter{theorem}
\noindent The discrete skew-symmetry identity for the precession operator is stated and proved in 
\Cref{lem:disc_skew} in the Appendix.
\stepcounter{theorem}
\noindent The matrix positivity properties used in the discrete energy analysis are stated and proved in 
\Cref{lem:spd}  in the Appendix. 
\stepcounter{theorem}
\noindent The semi-discrete energy inequality and its proof are given in  
\Cref{thm:semi_disc_energy}  in the Appendix.

\begin{remark}[Non-energy-neutral advection]
If \eqref{eq:transport_assumptions} does not hold or a non-skew discretization of advection is used, an extra term
\[
\frac12\,w^\top\bigl(N_v+N_v^\top\bigr)w
\]
appears on the right-hand side of \eqref{eq:semi_disc_energy_id}. This term can be bounded by $C\|w\|_M^2$ with $C$ depending on $\|\nabla\cdot\vec v\|_{L^\infty(\Omega)}$ and any boundary flux contributions, yielding a Gr\"onwall-type inequality.
\end{remark}

\section{Time discretization: stability and consistency}\label{sec:time_stability}

We analyze the second-order IMEX splitting scheme described in \S\ref{sec:time}; see \eqref{eq:time_half_impl}--\eqref{eq:time_second_half_impl}. Diffusion and relaxation are treated implicitly by Crank--Nicolson. The explicit stage uses a midpoint evaluation of the effective field. For analysis it is convenient to express this explicit stage in an algebraic midpoint form. The reference implementation realizes the same midpoint-field evaluation by a structure-preserving Rodrigues rotation at DDF quadrature points followed by an $L^2$ projection (mass solve). The stability bound stated in \Cref{thm:imex_stability} applies provided the explicit stage satisfies the discrete skew-symmetry property in \Cref{lem:disc_skew} and the local Lipschitz bounds stated in the Appendix proof; any quadrature or projection defects enter as higher-order perturbations and are controlled in the same way.

\stepcounter{theorem}
The nonlinear IMEX stability result and its proof are given in 
\Cref{thm:imex_stability} in the Appendix.
\stepcounter{theorem}
\noindent The IMEX method remains second order in time for smooth solutions, and the FE spatial error retains the standard conforming approximation rates. A detailed statement and proof are given in the Appendix. See \Cref{prop:consistency}.

\noindent\textbf{Remarks.}
(i) Diffusion and relaxation are unconditionally stable in the Crank--Nicolson substeps; the time-step restriction is driven by explicit precession and advection. (ii) If $\vec B_d$ is evaluated approximately (near/far splitting with a compressed far field), the energy inequality acquires a defect term proportional to the approximation tolerance; this tolerance should be chosen so that the defect is smaller than the desired time-discretization error. (iii) If \eqref{eq:transport_assumptions} does not hold, then additional transport contributions enter both the continuous and discrete energy balances via $\nabla\cdot\vec v$ and boundary fluxes, and the stability estimate requires corresponding bounds.

\section{Validation and representative dynamics}\label{sec:validation}

We report several validation and demonstration cases for the bounded-domain Bloch--DDF solver.
All bounded-domain runs use the real-space DDF evaluator and the structure-preserving FE precession update (Rodrigues rotation at DDF quadrature points followed by an $L^2$ projection).
The periodic plane-wave benchmark uses FFT-based periodic convolution on a uniform grid, as described in \Cref{sec:analytic_planewave}.

\begin{figure*}[t]
\centering
\subfloat[\label{fig:analytic_planewave_re}]{\includegraphics[width=0.48\textwidth]{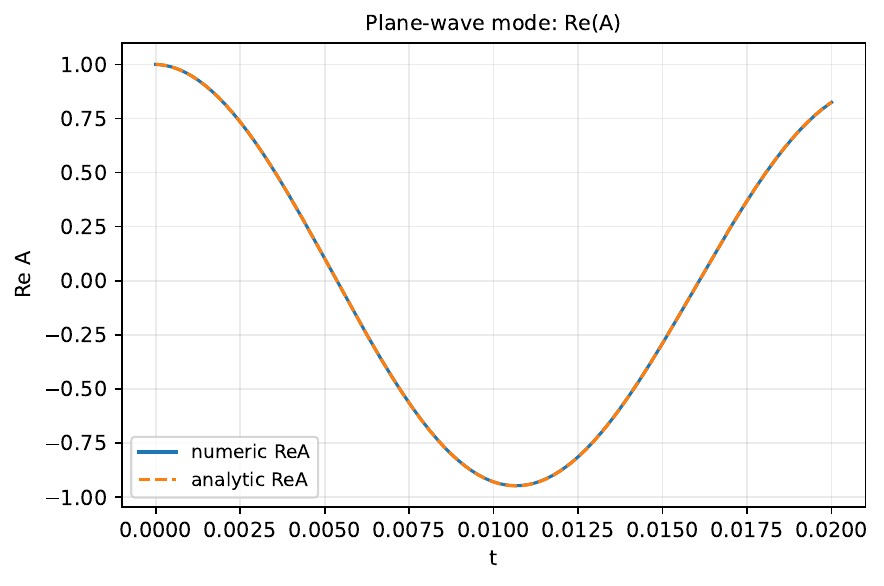}}
\hfill
\subfloat[\label{fig:analytic_planewave_abs}]{\includegraphics[width=0.48\textwidth]{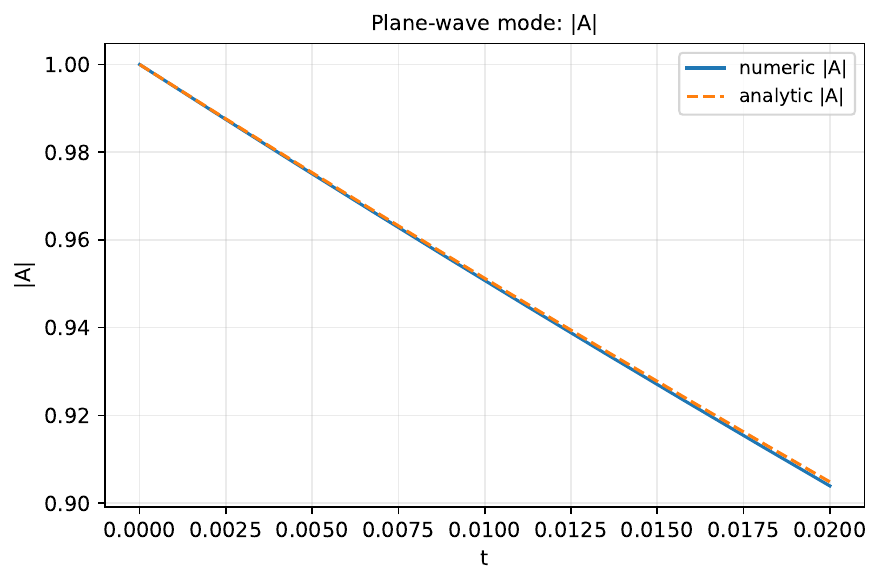}}
\caption{
Periodic plane-wave analytical benchmark for a single Fourier mode.
Numerical evolution of the mode amplitude $A(t)$ compared against the closed-form solution \eqref{eq:planewave_A_sol} using the kernel symbol value $\lambda=-1.3889$ (regularization length $a=0.04$, mode $(m_x,m_y,m_z)=(1,0,0)$, and periodic box $\Omega=[0,1)^3$ discretized on a $32\times 32\times 32$ uniform grid).
Panels show $\mathrm{Re}\,A(t)$ and $|A(t)|$.
The plotted case corresponds to the smallest time step in \Cref{tab:planewave_dt}.
}
\label{fig:analytic_planewave}
\end{figure*}

\subsection{Comparison against closed-form analytical benchmarks}\label{sec:validation_analytic}

This section compares the numerical solver against three analytical benchmarks with closed-form time dependence. Each benchmark isolates a specific subset of the Bloch--DDF dynamics.
No parameters are fitted. All constants that appear in the closed-form expressions (e.g., kernel averages or Fourier symbols) are determined directly from the chosen regularized DDF kernel and the stated geometry and discretization.  A summary of the three analytical benchmarks and the observed relative errors is given in Table~\ref{tab:analytic_summary}.

Given a complex time series $y(t_n)$ sampled at the stored times $\{t_n\}_{n=0}^N$, we report the discrete relative $L^2$ error
\begin{equation}
\epsilon_{\mathrm{rel}}
=
\frac{\bigl(\sum_{n=0}^N |y_{\mathrm{num}}(t_n)-y_{\mathrm{an}}(t_n)|^2\bigr)^{1/2}}
     {\bigl(\sum_{n=0}^N |y_{\mathrm{an}}(t_n)|^2\bigr)^{1/2}}.
\label{eq:analytic_relL2}
\end{equation}
For the longitudinal-mode test, $y(t)=c(t)$ is a real scalar coefficient.

\begin{table}[t]
\caption{Summary of analytical benchmarks and observed agreement.
The reported $\epsilon_{\mathrm{rel}}$ values are computed from the recorded numerical and analytical time series using \eqref{eq:analytic_relL2}.}
\begin{ruledtabular}
\begin{tabular}{lcc}
Benchmark & Observable & $\epsilon_{\mathrm{rel}}$ \\
\parbox[t]{0.48\columnwidth}{\raggedright
Uniform-mode DDF reduction (\Cref{sec:analytic_uniform})}
& $S(t)=\langle M_x+iM_y\rangle$
& $6.89\times 10^{-4}$ \\
\parbox[t]{0.48\columnwidth}{\raggedright
Periodic plane-wave mode (\Cref{sec:analytic_planewave})}
& $A(t)$
& $5.73\times 10^{-4}$ \\
\parbox[t]{0.48\columnwidth}{\raggedright
Longitudinal diffusion+$T_1$ mode (\Cref{sec:analytic_longitudinal})}
& $c(t)$
& $1.40\times 10^{-5}$ \\
\end{tabular}
\end{ruledtabular}
\label{tab:analytic_summary}
\end{table}

\subsubsection{Uniform transverse mode with regularized DDF}\label{sec:analytic_uniform}

Assume (i) constant coefficients, (ii) no gradients ($g_z=0$), (iii) no flow, and (iv) an initially uniform magnetization.
In this setting, diffusion does not act on the uniform mode.
We model the DDF contribution to the uniform-mode dynamics by a spatially uniform effective field of the form
\begin{equation}
\vec B_d(t) = \kappa_{\mathrm{eff}}\,\mathcal A\,\vec M(t),
\quad
\mathcal A=\mathrm{diag}(-1,-1,2),
\label{eq:uniform_Bd}
\end{equation}
where $\kappa_{\mathrm{eff}}=k\,\kappa(\Omega,a)$ is a geometry- and regularization-dependent constant, and $k$ is the dimensionless DDF coupling parameter used in the numerical model. A convenient definition is
\begin{equation}
\kappa(\Omega,a)
=
\frac{1}{|\Omega|}\int_{\Omega}\left(\int_{\Omega} K_a(\mathbf r-\mathbf r')\,d^3\mathbf r'\right)d^3\mathbf r,
\label{eq:kappa_def}
\end{equation}
with $K_a$ from \eqref{eq:Ka_def}. In the implementation used here, $\kappa$ is evaluated directly using the same DDF quadrature rule specified for the run (no fitting).

Let $\delta\vec B=\omega\,\hat{\mathbf z}$, where $\omega$ is a constant field offset (so the corresponding angular frequency is $\gamma\,\omega$), and define the transverse complex signal $S(t)=M_x(t)+iM_y(t)$.
Using \eqref{eq:uniform_Bd}, the transverse dynamics reduce to
\begin{equation}
\frac{dS}{dt}
=
-\frac{1}{T_2}\,S
-i\Bigl(\gamma\,\omega+3\gamma\,\kappa_{\mathrm{eff}}\,M_z(t)\Bigr)S,
\label{eq:uniform_S_ode}
\end{equation}
while the longitudinal component satisfies the standard $T_1$ recovery
\begin{equation}
\frac{dM_z}{dt}=\frac{M_0-M_z}{T_1},
\quad
M_z(0)=M_{z0}.
\label{eq:uniform_Mz_ode}
\end{equation}
The solution of \eqref{eq:uniform_Mz_ode} is
\begin{equation}
M_z(t)=M_0+\bigl(M_{z0}-M_0\bigr)e^{-t/T_1}.
\label{eq:uniform_Mz_sol}
\end{equation}
Substituting \eqref{eq:uniform_Mz_sol} into \eqref{eq:uniform_S_ode} yields the closed-form transverse signal
\begin{multline}
S(t)
=
S(0)\,
e^{-t/T_2}\,
\exp\bigl(-i\,\gamma\,\omega t\bigr) \\
\times
\exp\Bigl(
-i\,3\gamma\,\kappa_{\mathrm{eff}}
\bigl(M_0 t + (M_{z0}-M_0)T_1(1-e^{-t/T_1})\bigr)
\Bigr).
\label{eq:uniform_S_sol}
\end{multline}
For the discretization and parameters used in \Cref{fig:analytic_uniform}, the DDF constant $\kappa$ was computed deterministically from \eqref{eq:kappa_def} using the same quadrature and kernel discretization as in the numerical DDF evaluation (no fitting). For this case, $\kappa=7.8125$ and $\kappa_{\mathrm{eff}}=39.0625$ (with $k=5$, regularization length $a=0.04$, and a $10\times 10\times 10$ hexahedral discretization of the unit box). \Cref{fig:analytic_uniform} compares the numerical $S(t)$ against \eqref{eq:uniform_S_sol} with the same $(\omega,T_1,T_2,M_0,M_{z0})$ and with $\kappa_{\mathrm{eff}}$ determined by \eqref{eq:kappa_def}. The observed relative error is $\epsilon_{\mathrm{rel}}=6.89\times 10^{-4}$ over the reported time samples.

\begin{figure*}[t]
\centering
\subfloat[(a) $\mathrm{Re}\,S(t)$\label{fig:osc_long_re}]%
{\includegraphics[width=0.32\textwidth]{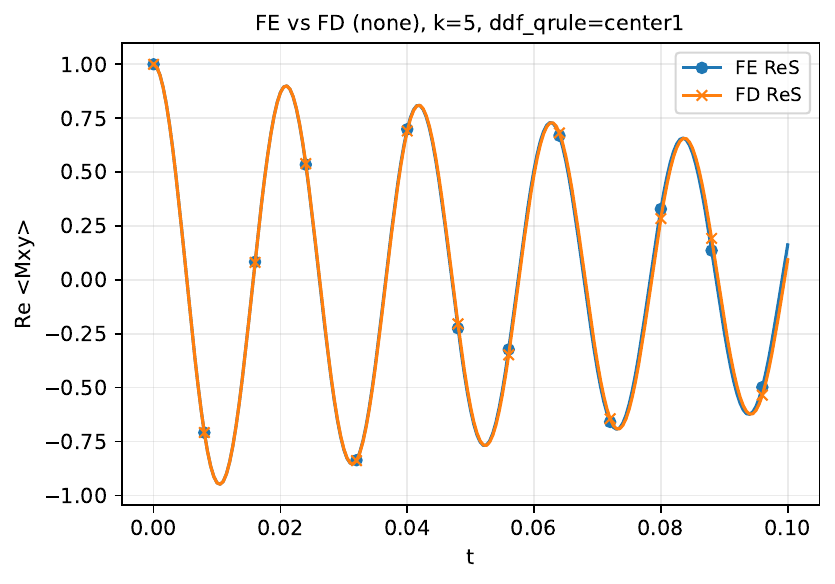}}
\hfill
\subfloat[(b) $\mathrm{Im}\,S(t)$\label{fig:osc_long_im}]%
{\includegraphics[width=0.32\textwidth]{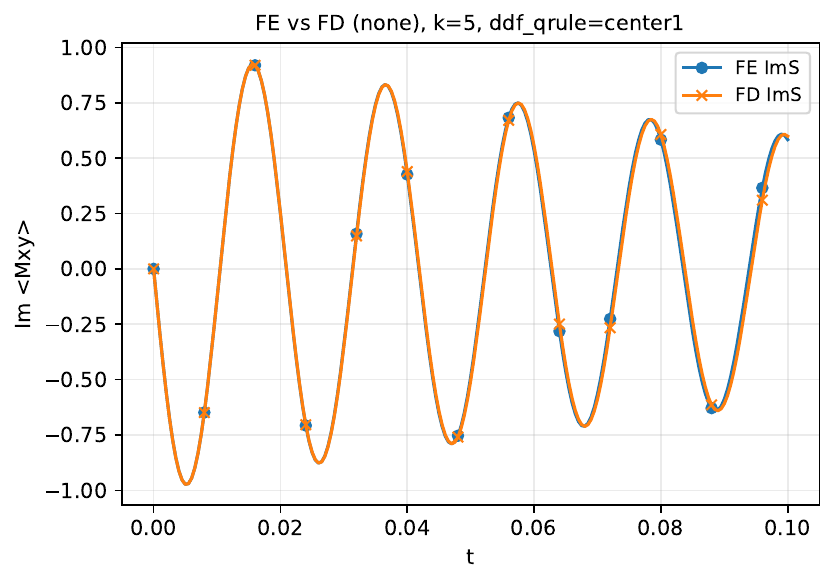}}
\hfill
\subfloat[(c) $|S(t)|$\label{fig:osc_long_abs}]%
{\includegraphics[width=0.32\textwidth]{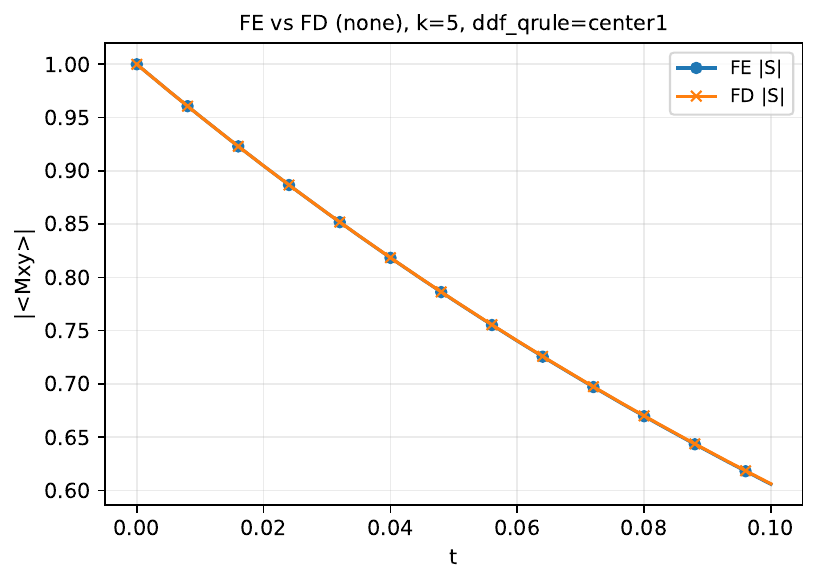}}
\caption{
Long-time lab-frame evolution of the global transverse signal
$S(t)=\langle M_x+iM_y\rangle$ with DDF enabled.
Panels show (a) $\mathrm{Re}\,S(t)$,
(b) $\mathrm{Im}\,S(t)$, and
(c) the envelope $|S(t)|$ for the same run.
}
\label{fig:osc_long}
\end{figure*}

\subsubsection{Periodic plane-wave eigenmode}\label{sec:analytic_planewave}

This benchmark targets the DDF symbol and the phase evolution of a single Fourier mode in a periodic setting.
Consider a periodic box $\Omega=[0,L_x)\times[0,L_y)\times[0,L_z)$, and let the transverse magnetization be a single mode
$M_x+iM_y=A(t)e^{i\mathbf q\cdot\mathbf r}$ with $\mathbf q=(2\pi m_x/L_x,2\pi m_y/L_y,2\pi m_z/L_z)$.
For periodic convolution, the DDF operator is diagonal in Fourier space, so
\begin{equation}
\mathcal T_a\!\left[e^{i\mathbf q\cdot\mathbf r}\right]
=
\lambda(\mathbf q)\,e^{i\mathbf q\cdot\mathbf r},
\label{eq:planewave_eig}
\end{equation}
where $\lambda(\mathbf q)$ is the Fourier symbol of the chosen regularized kernel (including the discretization weights in the discrete setting). This periodic benchmark uses FFT-based convolution on a uniform grid; it is not intended as a bounded-domain reference, since FFT-based DDF evaluation on nonperiodic domains requires padding or extensions that introduce boundary artifacts.
With $M_z$ held constant in the linear test, the complex amplitude satisfies
\begin{equation}
\frac{dA}{dt}
=
-\Bigl(D|\mathbf q|^2+\frac{1}{T_2}\Bigr)A
-i\,\gamma\Bigl(\omega+k\,\lambda(\mathbf q)\Bigr)A,
\label{eq:planewave_A_ode}
\end{equation}
and hence
\begin{equation}
A(t)=A(0)\exp\Bigl(-\bigl(D|\mathbf q|^2+T_2^{-1}\bigr)t\Bigr)\exp\Bigl(-i\,\gamma(\omega+k\lambda(\mathbf q))t\Bigr).
\label{eq:planewave_A_sol}
\end{equation}
For the case $(m_x,m_y,m_z)=(1,0,0)$ with $L_x=L_y=L_z=1$ and $a=0.04$ (a fixed coarse-graining/softening length used throughout unless otherwise stated), the computed symbol value was
$\lambda=-1.3889$.
With $\omega=300$, $k=5$, and $\gamma=1$, this gives $\omega_{\mathrm{eff}}=\omega+k\lambda=293.0553$.
With $D=0$ and $T_2=0.2$, the decay rate is $T_2^{-1}=5$.
\Cref{fig:analytic_planewave} shows the numerical amplitude (computed by FFT-based periodic convolution and Rodrigues precession) against \eqref{eq:planewave_A_sol}. Observed time-step dependence for this periodic plane-wave benchmark is summarized in Table~\ref{tab:planewave_dt}.

\begin{table}[t]
\caption{Plane-wave benchmark: observed time-step dependence for $\epsilon_{\mathrm{rel}}$ in $A(t)$. This benchmark implementation uses a first-order operator splitting (exact decay followed by a precession update), and the measured error decreases linearly with $\Delta t$. This benchmark is used to validate the kernel symbol and the associated phase and decay rates in a controlled periodic setting, rather than to demonstrate the temporal order of the full Bloch--DDF time integrator.
}
\begin{ruledtabular}
\begin{tabular}{ccc}
$\Delta t$ & $\epsilon_{\mathrm{rel}}$ & Observed order \\
\hline
$2.0\times 10^{-4}$ & $2.292\times 10^{-3}$ & --- \\
$1.0\times 10^{-4}$ & $1.146\times 10^{-3}$ & $\approx 1.00$ \\
$5.0\times 10^{-5}$ & $5.728\times 10^{-4}$ & $\approx 1.00$ \\
\end{tabular}
\end{ruledtabular}
\label{tab:planewave_dt}
\end{table}

\subsubsection{Longitudinal diffusion plus $T_1$ relaxation eigenmode}\label{sec:analytic_longitudinal}

This benchmark targets the diffusion operator and $T_1$ recovery under reflective (Neumann) boundaries in a rectangular box.
Let $\Omega=(0,L_x)\times(0,L_y)\times(0,L_z)$ and consider a Neumann Laplacian eigenfunction
\begin{equation}
\phi(\mathbf r)=\cos\Bigl(\frac{m_x\pi x}{L_x}\Bigr)\cos\Bigl(\frac{m_y\pi y}{L_y}\Bigr)\cos\Bigl(\frac{m_z\pi z}{L_z}\Bigr),
\label{eq:neumann_phi}
\end{equation}
which satisfies $-\Delta\phi=\lambda\phi$ with
\begin{equation}
\lambda
=
\pi^2\Bigl(\frac{m_x^2}{L_x^2}+\frac{m_y^2}{L_y^2}+\frac{m_z^2}{L_z^2}\Bigr).
\label{eq:neumann_lambda}
\end{equation}
Assume a purely longitudinal perturbation of the form
\begin{equation}
M_z(\mathbf r,t)=M_0 + c(t)\phi(\mathbf r),
\quad
M_x=M_y=0,
\label{eq:Mz_ansatz}
\end{equation}
with constant $D$ and $T_1$ and reflective diffusion boundaries.
Substituting \eqref{eq:Mz_ansatz} into the $z$ component of \eqref{eq:bloch_ddf} (with no precession terms active) yields
\begin{equation}
\frac{dc}{dt}
=
-\Bigl(D\lambda+\frac{1}{T_1}\Bigr)c,
\label{eq:c_ode}
\end{equation}
so
\begin{equation}
c(t)=c(0)\exp\Bigl(-\bigl(D\lambda+T_1^{-1}\bigr)t\Bigr).
\label{eq:c_sol}
\end{equation}
For the unit box with $(m_x,m_y,m_z)=(1,1,1)$, $\lambda=3\pi^2=29.6088$.
With $D=10^{-3}$ and $T_1=5$, the predicted decay rate is $D\lambda+T_1^{-1}=0.2296$.
The measured agreement is $\epsilon_{\mathrm{rel}}=1.40\times 10^{-5}$ for the coefficient time series.
\Cref{fig:analytic_longitudinal} shows the numerical and analytical $c(t)$.

\begin{figure}[t]
\centering
\includegraphics[width=\columnwidth]{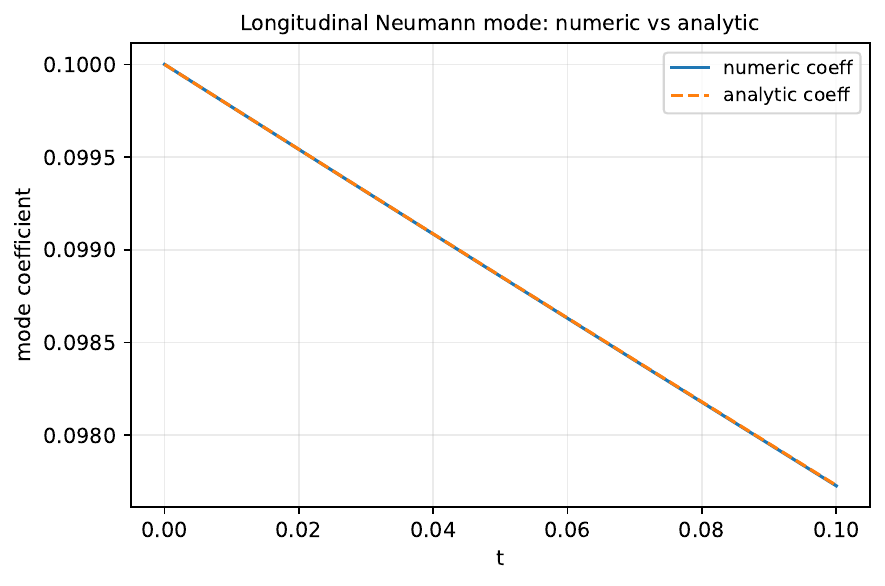}
\caption{
Longitudinal diffusion+$T_1$ analytical benchmark.
Time evolution of the modal coefficient $c(t)$ for the Neumann eigenmode \eqref{eq:neumann_phi} compared against \eqref{eq:c_sol}.
For $(1,1,1)$ in the unit box, $\lambda=3\pi^2$ and the predicted decay rate is $D\lambda+T_1^{-1}=0.2296$.
}
\label{fig:analytic_longitudinal}
\end{figure}

\subsection{Long-time lab-frame oscillations with DDF}\label{sec:validation_osc}

\Cref{fig:osc_long} shows the long-time lab-frame evolution of the global transverse signal
$S(t)=\langle M_x+iM_y\rangle$ with DDF enabled, where $\langle\cdot\rangle$ denotes the spatial average over $\Omega$ of the corresponding FE field. The real and imaginary parts show multiple zero crossings over the plotted interval. The envelope $|S(t)|$ decays smoothly. This run is generated using the structure-preserving explicit precession update in the implementation (Rodrigues rotation at DDF quadrature points followed by an $L^2$ projection). The plotted observable is time-step converged in the following quantitative sense: if $S_{\Delta t}(t_n)$ denotes the stored time series at step size $\Delta t$ and $S_{\Delta t/2}(t_n)$ denotes the stored time series at step size $\Delta t/2$ downsampled to the coarser grid, then
\[
\epsilon_{\mathrm{dt}}
:=\frac{\|S_{\Delta t}-S_{\Delta t/2}\|_2}{\|S_{\Delta t/2}\|_2}
\]
is small for the parameter choices shown, indicating that the displayed trace is insensitive to halving $\Delta t$ at fixed spatial discretization.

\subsection{DDF-on versus DDF-off envelope}\label{sec:validation_envelope}

\Cref{fig:ddf_envelope} isolates the effect of the DDF on the global envelope $|S(t)|$ by comparing DDF off to DDF on at two DDF scalings.
The $k=0$ curve provides a control in which only diffusion, relaxation, and the uniform offset contribute.
Increasing the DDF scaling changes the envelope measurably over the same time window (Fig.~\ref{fig:ddf_envelope}).

\begin{figure}[t]
\centering
\includegraphics[width=\columnwidth]{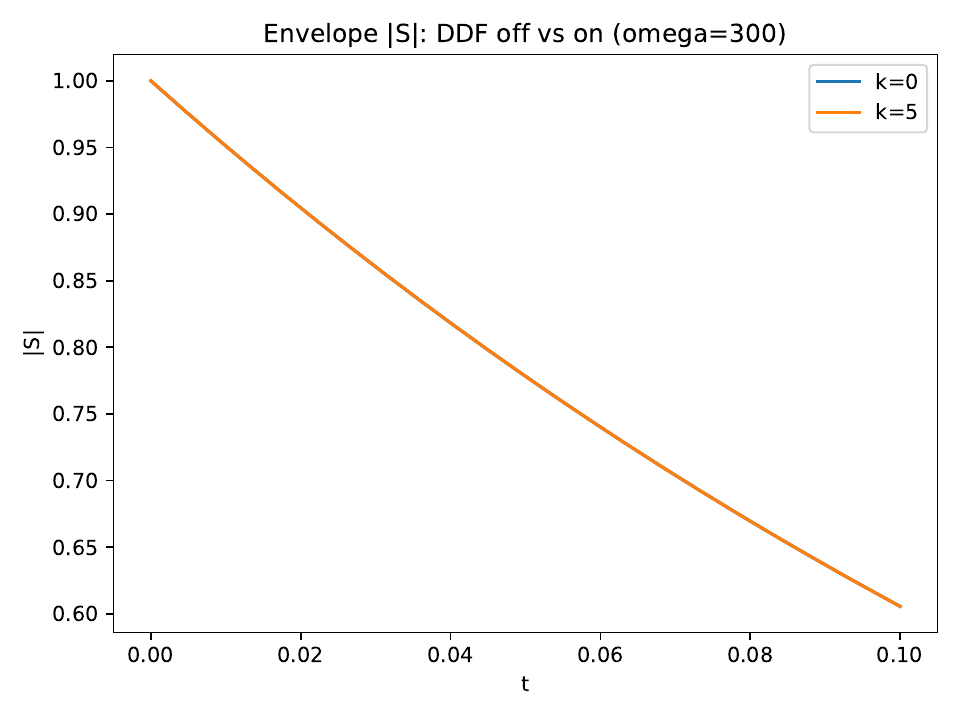}
\caption{
Envelope comparison with DDF off ($k=0$) versus DDF on ($k=5$) at otherwise fixed parameters.
}
\label{fig:ddf_envelope}
\end{figure}

\begin{figure}[t]
\centering
\includegraphics[width=\columnwidth]{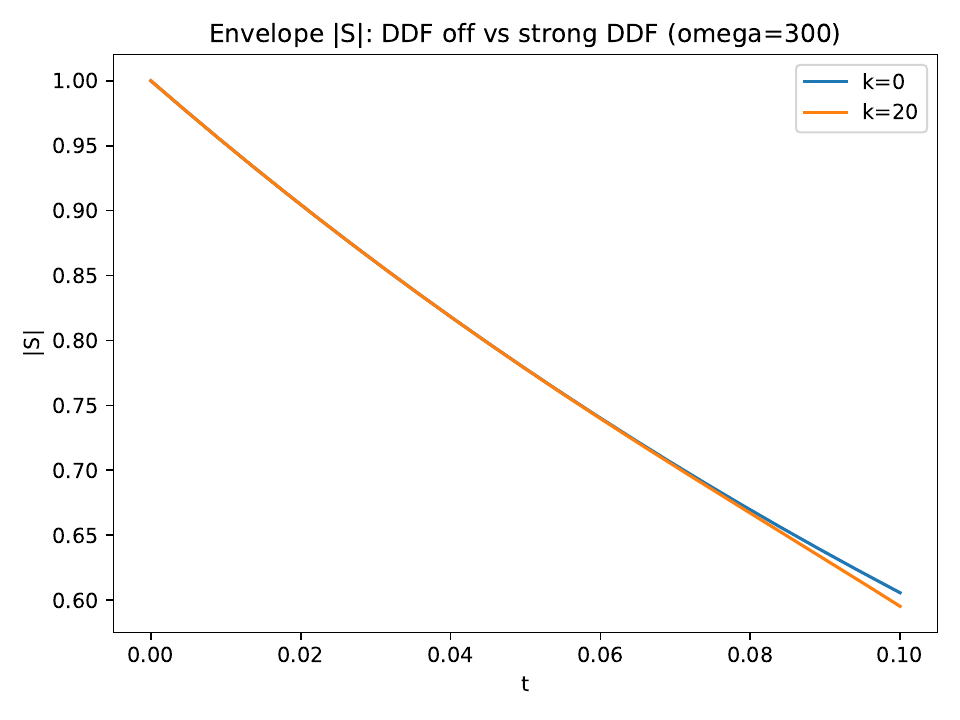}
\caption{
Envelope comparison with DDF off ($k=0$) versus stronger DDF ($k=20$) at otherwise fixed parameters.
}
\label{fig:ddf_envelope_strong}
\end{figure}

\subsection{Gradient dephasing with and without DDF}\label{sec:validation_gradient}

\Cref{fig:gz200} shows the envelope response under a constant $z$-gradient, comparing DDF off to DDF on.
The gradient drives rapid dephasing (and partial rephasing in the bounded domain), while the DDF modifies the envelope through nonlinear nonlocal feedback.
This example is included to illustrate a regime where spatial phase structure is present, which can amplify the macroscopic impact of long-range dipolar interactions.

\begin{figure}[t]
\centering
\includegraphics[width=\columnwidth]{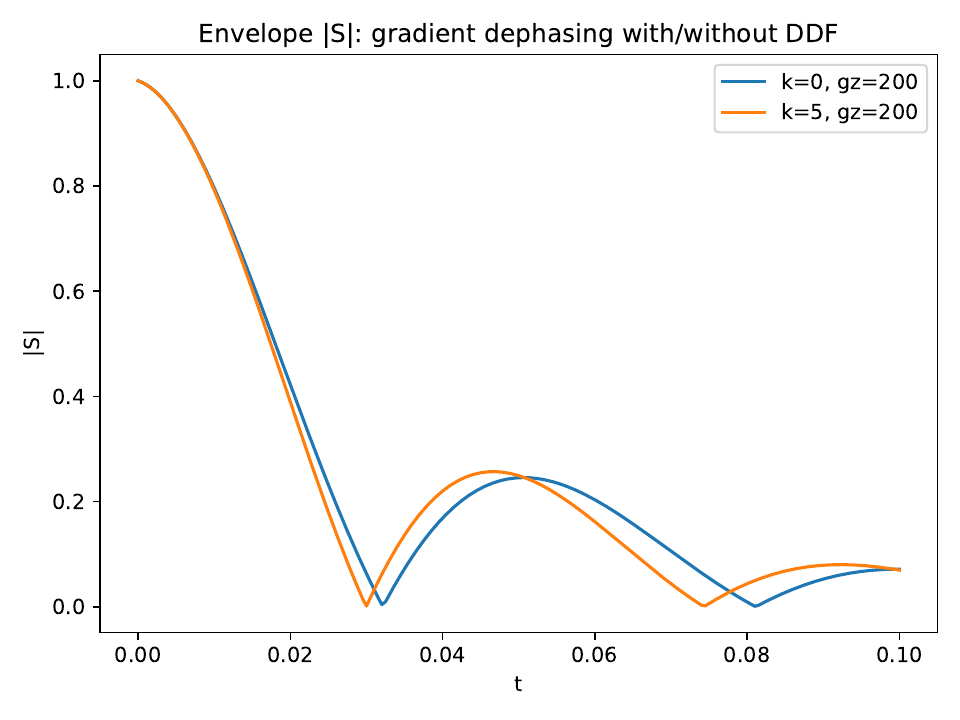}
\caption{
Envelope $|S(t)|$ under a constant $z$-gradient, comparing DDF off ($k=0$) to DDF on ($k=5$) at otherwise fixed parameters.
}
\label{fig:gz200}
\end{figure}

\section{Geometry-driven advantages of finite elements on curved domains}\label{sec:fe_adv_geometry}

This section continues the validation studies by focusing on curved Neumann boundaries, where body-fitted FE discretizations are expected to outperform voxelized FD baselines at comparable resolution. A central motivation for a FE formulation is robust treatment of curved and complex boundaries with reflective diffusion conditions. Finite-difference (FD) baselines on Cartesian grids can be accurate on simple boxes, but on curved domains they typically require either embedded-boundary technology (cut cells, ghost-fluid methods, or level-set reconstructions) or accept staircased boundaries whose discrete Neumann condition is only approximate. In this section we quantify this geometry effect by comparing FE and FD against an analytical reference on a sphere, using a benchmark that is sensitive to boundary accuracy.

\subsection{Analytical reference: Neumann diffusion plus $T_1$ relaxation on a sphere}\label{sec:analytic_sphere_neumann}

Let $\Omega=\{\mathbf r\in\mathbb{R}^3:\|\mathbf r\|<R\}$ be a ball of radius $R$ with reflective (Neumann) boundary condition $\partial_n u=0$ on $\partial\Omega$. Consider the longitudinal perturbation
\begin{equation}
u(\mathbf r,t) := M_z(\mathbf r,t)-M_0,
\end{equation}
with $M_x=M_y=0$ so that precession is inactive. For constant $D$ and $T_1$, the governing equation reduces to the linear problem
\begin{equation}
\partial_t u = D\Delta u - \frac{1}{T_1}u,
\quad
\partial_n u = 0 \ \text{on } \partial\Omega.
\label{eq:sphere_u_pde}
\end{equation}
Separation of variables in spherical coordinates yields eigenfunctions of the form
\begin{equation}
u_{\ell m n}(\mathbf r,t) = c_{\ell m n}(t)\, j_\ell\!\left(\alpha_{\ell n}\frac{r}{R}\right)Y_{\ell m}(\theta,\phi),
\end{equation}
where $j_\ell$ is a spherical Bessel function and $Y_{\ell m}$ is a spherical harmonic. The Neumann boundary condition implies $j_\ell'(\alpha_{\ell n})=0$. In this work we use the radially symmetric family $\ell=0$, for which
\begin{equation}
j_0(x)=\frac{\sin x}{x},
\quad
j_0'(x)=\frac{x\cos x-\sin x}{x^2}.
\end{equation}
Thus the Neumann condition $j_0'(\alpha_{0n})=0$ is equivalent to
\begin{equation}
\tan \alpha = \alpha,
\label{eq:tan_alpha_eq}
\end{equation}
with roots $0<\alpha_1<\alpha_2<\cdots$. The corresponding eigenvalue is $\lambda_n=(\alpha_n/R)^2$, and the coefficient satisfies
\begin{multline}
\frac{dc_n}{dt}
= -\left(D\lambda_n+\frac{1}{T_1}\right)c_n, \\
c_n(t)
= c_n(0)\exp\left[-\left(D\left(\frac{\alpha_n}{R}\right)^2+\frac{1}{T_1}\right)t\right].
\label{eq:c_sphere_decay}
\end{multline}
We emphasize that \eqref{eq:c_sphere_decay} is a closed-form, parameter-free reference once $(D,T_1,R)$ and the root index $n$ are specified. It is therefore well suited as a gold-standard check for boundary handling in numerical discretizations on curved domains.

\subsection{Numerical experiment: mapped-geometry FE versus voxelized FD}\label{sec:demo2_geometry}

We compare two discretizations of \eqref{eq:sphere_u_pde}. In the FE approach (mapped sphere), the sphere is represented by a geometry-mapped hexahedral mesh (isoparametric map from the reference cube to the sphere), and diffusion is assembled in the FE weak form with reflective boundary conditions imposed naturally through the variational formulation. The modal coefficient $c_n(t)$ is extracted by an $L^2$ projection onto the analytical eigenfunction evaluated on the numerical quadrature points. In the FD baseline (voxel mask), the sphere is represented as a voxel mask on a Cartesian grid, and diffusion is advanced by an explicit 7-point Laplacian restricted to the mask. At the mask boundary we use a simple no-flux treatment based on omitting fluxes to neighbors outside the mask. This voxel-mask FD is included as a straightforward Cartesian baseline; it is not an embedded-boundary or cut-cell method, and it inherits staircase boundary geometry at fixed grid resolution. The same analytical mode is used to extract the coefficient.

To stress the geometric boundary, we use the second nontrivial radial Neumann root ($n=2$ in \eqref{eq:tan_alpha_eq}), which yields a more oscillatory mode and larger boundary gradients than the fundamental mode. For $n=2$, the root is $\alpha_2\approx 7.7253$ and the analytic decay rate in \eqref{eq:c_sphere_decay} is
\begin{equation}
\rho = D\left(\frac{\alpha_2}{R}\right)^2 + \frac{1}{T_1}.
\end{equation}

\subsection{Results: FE achieves smaller error at fixed resolution}\label{sec:demo2_results}

\Cref{fig:sphere_diffusion_root2_R030,fig:sphere_diffusion_root2_R025} show the coefficient error $|c_{\mathrm{num}}(t)-c_{\mathrm{an}}(t)|$ for the mapped-geometry FE discretization and the voxel-mask FD baseline on two challenging sphere configurations. In both cases, FE yields a smaller error over most of the time window. The aggregate relative $L^2$ errors confirm the visual trend: for $R=0.30$ and $nx=12$, FE attains $2.66\times 10^{-3}$ while the voxel-mask FD baseline attains $8.26\times 10^{-3}$ (FD/FE $\approx 3.1$). For $R=0.25$ and $nx=10$, FE attains $5.48\times 10^{-3}$ while the voxel-mask FD baseline attains $2.59\times 10^{-2}$ (FD/FE $\approx 4.7$). These results indicate a geometry-driven benefit of the mapped-geometry FE formulation for boundary-sensitive Neumann diffusion modes when compared against the analytical reference \eqref{eq:c_sphere_decay}.

\begin{table}[t]
\centering
\caption{Curved-boundary Neumann diffusion benchmark on a sphere using the second radial Neumann root ($n=2$, $\alpha_2\approx 7.7253$). Errors are relative $L^2$ errors of the extracted modal coefficient $c(t)$ against the analytical decay \eqref{eq:c_sphere_decay}. The FD method is a voxel-mask baseline with a simple no-flux treatment at the mask boundary.}
\begin{tabular}{lccc}
\hline
Case & rel.\ $L^2$ (FE) & rel.\ $L^2$ (FD) & FD/FE \\
\hline
$R=0.30$, $nx=12$ & $2.66\times 10^{-3}$ & $8.26\times 10^{-3}$ & $3.1$ \\
$R=0.25$, $nx=10$ & $5.48\times 10^{-3}$ & $2.59\times 10^{-2}$ & $4.7$ \\
\hline
\end{tabular}
\label{tab:sphere_diffusion_root2}
\end{table}

\begin{figure}[t]
\centering
\includegraphics[width=\columnwidth]{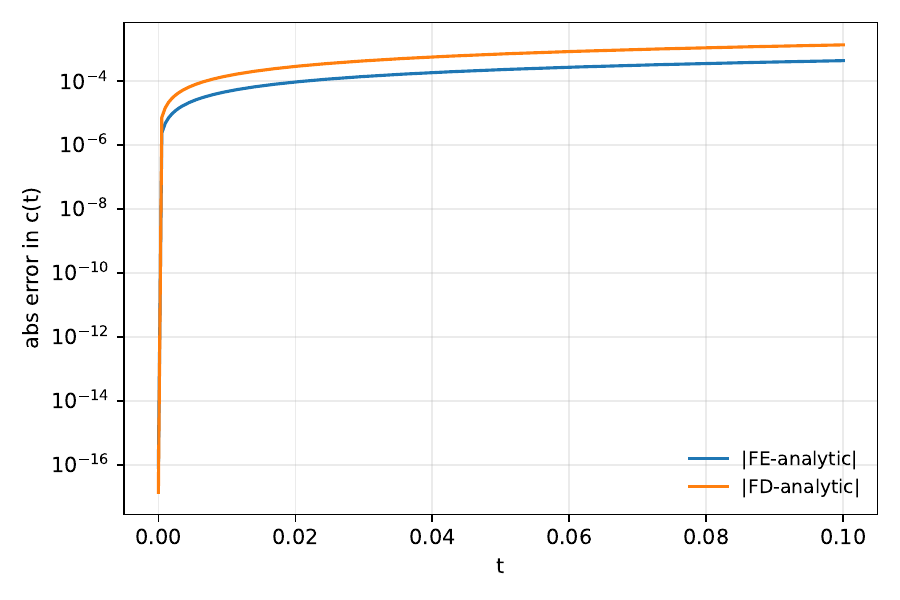}
\caption{Sphere diffusion benchmark ($R=0.30$, $nx=12$, root-index $n=2$). Plotted is the absolute error in the extracted coefficient $c(t)$ relative to the analytical decay \eqref{eq:c_sphere_decay} for FE (mapped geometry) and the voxel-mask FD baseline.}
\label{fig:sphere_diffusion_root2_R030}
\end{figure}

\begin{figure}[t]
\centering
\includegraphics[width=\columnwidth]{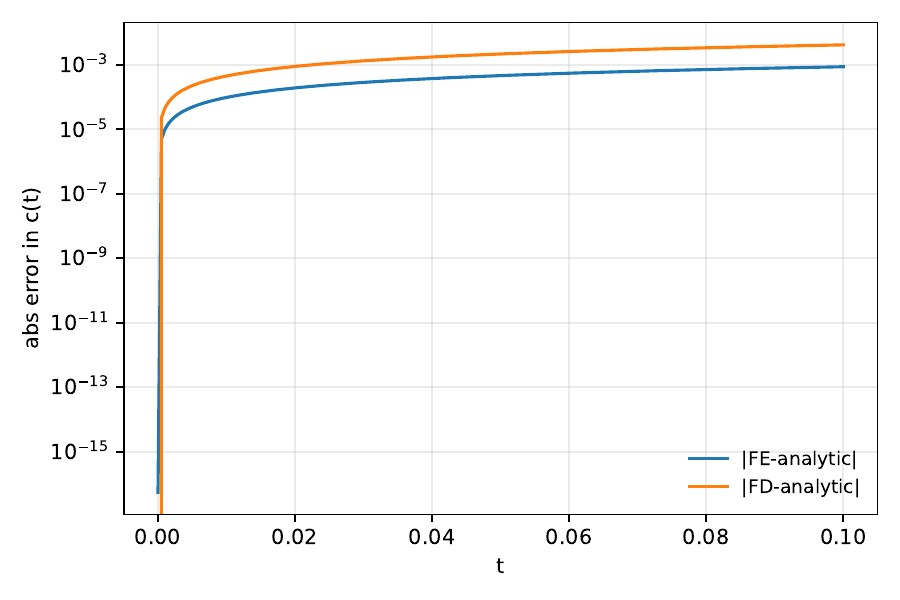}
\caption{Sphere diffusion benchmark ($R=0.25$, $nx=10$, root-index $n=2$). Same diagnostic as \Cref{fig:sphere_diffusion_root2_R030}, showing a larger separation between FE and FD as the curved boundary becomes more poorly resolved on the Cartesian grid.}
\label{fig:sphere_diffusion_root2_R025}
\end{figure}

On a voxelized curved boundary, the discrete no-flux condition is enforced only approximately and the effective boundary location is grid-dependent. For smooth modes this may be adequate, but for more oscillatory modes the staircased boundary geometry and boundary-condition approximation can introduce a systematic bias in the decay rate and mode shape. In contrast, the FE formulation imposes the reflective diffusion condition through the weak form on a geometry-mapped boundary, which yields smaller coefficient error for this benchmark at comparable grid resolution. These sphere results quantify the effect against the analytical reference \eqref{eq:c_sphere_decay} and motivate the use of FE discretizations when curved Neumann boundaries are important (Table~\ref{tab:sphere_diffusion_root2}).

\section{Conclusion}\label{sec:conclusion}

We presented a FE weak formulation of the Bloch equations with the distant dipolar field and derived the semi-discrete system \eqref{eq:semi_discrete} for bounded domains with spatially varying material parameters and optional flow. Representative long-time lab-frame oscillations and envelope-level DDF effects are shown in \Cref{sec:validation} (see \Cref{fig:osc_long,fig:ddf_envelope,fig:gz200}). The formulation relies on a regularized DDF kernel with a short-distance length scale $a>0$. The regularization yields a bounded DDF operator (\Cref{prop:Ta_bdd}) and enables standard variational analysis on bounded domains.

We established an $L^2$ energy balance and associated a priori estimate in which precession is neutral and diffusion and transverse relaxation are dissipative (\Cref{prop:energy}). We proved local well-posedness with continuous dependence on the data (\Cref{thm:local}, \Cref{prop:contdep}) and obtained global existence under additional conditions that ensure transport does not inject energy through volume compression or boundary fluxes (\Cref{cor:global}). At the discrete level we showed an energy identity for the FE semi-discretization under an energy-neutral advection discretization (\Cref{thm:semi_disc_energy}) and a corresponding stability result for a second-order IMEX scheme under a time-step restriction controlled by effective precession and transport bounds (\Cref{thm:imex_stability}).

A major emphasis of this work is validation and practical robustness on bounded domains. We validated the implementation against closed-form analytical benchmarks that isolate key components of the model: a uniform-mode DDF reduction with a deterministically computed kernel average, a periodic plane-wave eigenmode based on the kernel symbol, and a longitudinal Neumann diffusion+$T_1$ mode. We also quantified a geometry-driven effect of curved Neumann boundaries by comparing mapped-geometry FE against a voxel-mask finite-difference baseline on a spherical Neumann eigenmode decay for a boundary-sensitive mode, where FE yields smaller error at comparable grid resolution (see \Cref{sec:fe_adv_geometry}).

On the algorithmic side, geometry-dependent operators (mass, diffusion, relaxation, and auxiliary structures for near/far evaluation) are assembled once. For bounded-domain runs, the nonlocal DDF is applied at each step in a matrix-free manner using a real-space near/far evaluation, so no periodicity assumptions are required. The time integrator treats diffusion and relaxation implicitly and treats precession and advection explicitly. In the implementation, a structure-preserving explicit precession stage (Rodrigues rotation at DDF quadrature points followed by an $L^2$ projection) enables stable multi-cycle lab-frame simulations.

There are limitations. Results depend on the chosen regularization length $a$ and on the accuracy of the quadrature and far-field approximation used in the DDF evaluation. Energy-neutral transport requires $\nabla\cdot\vec v=0$ and $\vec v\cdot\hat{\mathbf n}=0$ on $\partial\Omega$ (or else an appropriate inflow treatment and additional bounds). Fully rigorous error estimates for compressed far-field operators (e.g., FMM or $\mathcal H$-matrices) are not included. These topics are natural targets for future work, together with adaptive $h/p$ refinement, anisotropic diffusion tensors, a principal-value analysis of the singular kernel limit $a\to 0$, and validation against  alternative nonlocal solvers on canonical geometries.

Overall, the combination of a bounded-operator setting, discrete energy structure, and validated matrix-free implementation supports Bloch--DDF simulation on bounded domains where geometry and boundaries influence the dynamics.

\appendix
\section{Collected theorem-like statements and proofs}\label{app:collected_results}
\noindent In this Appendix we have collected lemma, proposition, corollary, and theorem statements as well as their proofs in order to streamline the presentation in the main body.

\subsection{Continuum operator and weak-solution results}
\noindent This subsection collects the continuum lemmas, propositions, corollary, and theorem used in the operator, energy, and weak-solution analysis.

\begin{proposition}[Boundedness of $\mathcal T_a$]\label{prop:Ta_bdd}
Assume \Cref{sec:assumptions} and fix $a>0$. There exists a constant $C_a=C(a,\Omega)>0$ such that for all $1\le p\le\infty$ and all $\vec M\in (L^p(\Omega))^3$,
\begin{equation}
\|\mathcal T_a[\vec M]\|_{L^p(\Omega)} \le C_a \|\vec M\|_{L^p(\Omega)}.
\end{equation}
Moreover, for any $s\in[0,1]$, there exists $C_{a,s}=C(a,\Omega,s)>0$ such that for all $\vec M\in H^s(\Omega)^3$,
\begin{equation}
\|\mathcal T_a[\vec M]\|_{H^s(\Omega)} \le C_{a,s}\,\|\vec M\|_{H^s(\Omega)}.
\end{equation}
\end{proposition}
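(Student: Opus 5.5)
The plan is to exploit the fact that, for fixed $a>0$, the kernel $K_a$ is a bounded, smooth function on $\mathbb R^3$ (away from the removable angular ambiguity at the origin handled below). The $L^p$ bound then follows from elementary kernel estimates on the bounded domain. The $H^s$ bound follows by proving the cases $s=0$ and $s=1$ and interpolating.

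\emph{Step 1 ($L^p$ bound).} Since $|1-3\cos^2\theta|\le 2$, we have $|K_a(\mathbf R)|\le (R^2+a^2)^{-3/2}\le a^{-3}$ for all $\mathbf R\ne 0$. The matrix $\mathcal A$ has operator norm $2$. Hence
\[
\sup_{\mathbf r\in\Omega}\int_\Omega |K_a(\mathbf r-\mathbf r')|\,d^3\mathbf r' \le a^{-3}|\Omega|,
\]
and the same bound holds with the roles of $\mathbf r$ and $\mathbf r'$ exchanged. Schur's test, equivalently Young's inequality applied to the zero extension of $\vec M$ with $K_a\mathbf 1_{B(0,\mathrm{diam}\,\Omega)}\in L^1$, gives $\|\mathcal T_a\vec M\|_{L^p}\le 2a^{-3}|\Omega|\,\|\vec M\|_{L^p}$ for every $1\le p\le\infty$. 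One may therefore take $C_a=2a^{-3}|\Omega|$. The same argument applies with $K_a g$ in place of $K_a$, since $g\in L^\infty$.

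\emph{Step 2 ($s=1$).} For $\vec M\in L^2$ we differentiate under the integral in $\mathbf r$. Because the convolution variable is $\mathbf r-\mathbf r'$, the derivative falls on the kernel, so no boundary terms arise and no regularity of $\vec M$ is needed. This requires $\nabla K_a\in L^1_{\mathrm{loc}}$ with a uniform bound on the relevant ball. Write $K_a(\mathbf R)=\tfrac12\bigl(R^2-3z^2\bigr)\,R^{-2}(R^2+a^2)^{-3/2}$. The factor $(R^2-3z^2)/R^2$ is homogeneous of degree $0$, so its gradient is $O(R^{-1})$. The other factor is smooth and bounded with bounded derivatives. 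Therefore $|\nabla K_a(\mathbf R)|\le C(a)\,(R^{-1}+1)$, which is integrable over $B(0,\mathrm{diam}\,\Omega)$ in three dimensions.

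This is the step I expect to be the main obstacle. The angular factor is discontinuous at $\mathbf R=0$, so one must justify that the distributional gradient of $K_a$ equals the pointwise gradient, with no delta-type contribution at the origin. The justification is that $K_a\in L^\infty$ and $\nabla K_a\in L^1_{\mathrm{loc}}$ away from a point. Cutting out a small ball $B_\varepsilon$, the boundary term is bounded by $\|K_a\|_\infty\cdot O(\varepsilon^2)\to 0$. Hence $K_a\in W^{1,1}_{\mathrm{loc}}$.

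Schur's test applied to $\nabla K_a$ then gives $\|\nabla\mathcal T_a\vec M\|_{L^2}\le C\|\vec M\|_{L^2}$. Combined with Step 1, this yields the stronger bound $\|\mathcal T_a\vec M\|_{H^1}\le C\|\vec M\|_{L^2}\le C\|\vec M\|_{H^1}$.

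\emph{Step 3 (interpolation).} Since $\mathcal T_a$ is bounded both $L^2\to L^2$ and $H^1\to H^1$, complex interpolation gives boundedness on $[L^2,H^1]_s=H^s(\Omega)$ for $s\in[0,1]$. This identification holds on Lipschitz domains through a universal extension operator. If one prefers to avoid interpolation-space identifications, one can instead use the smoothing estimate $\|\mathcal T_a\vec M\|_{H^1}\lesssim\|\vec M\|_{L^2}$ directly. Together with the continuous embeddings $H^1\hookrightarrow H^s$ and $H^s\hookrightarrow L^2$, it gives $\|\mathcal T_a\vec M\|_{H^s}\le C\|\mathcal T_a\vec M\|_{H^1}\le C\|\vec M\|_{L^2}\le C\|\vec M\|_{H^s}$ with no interpolation needed.

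In the $K_a g$ variant, the $H^1$ step additionally requires $g$ to be Lipschitz. Otherwise only the $L^p$ statement, and hence the $s=0$ case, carries over verbatim.
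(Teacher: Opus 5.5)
Your proposal is correct and follows essentially the same route as the paper. Both arguments use the bounds $|K_a|\le a^{-3}$ and $|\nabla K_a|\le C_a(1+R^{-1})$, apply Young's inequality or Schur's test to get the $L^p$ bound and the smoothing bound $L^2\to H^1$, and then pass to $H^s$ by interpolation or embedding. Your explicit check that the distributional gradient of $K_a$ carries no point mass at the origin (the vanishing $O(\varepsilon^2)$ surface term) fills in a step the paper takes for granted when it asserts $G_a\in W^{1,1}(\mathbb R^3)$ from pointwise gradient bounds.
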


\begin{proof}
Fix $a>0$. For $\mathbf x\ne \mathbf 0$, define
\begin{equation}
K_a(\mathbf x)
:=
\frac{1-3(\hat{\mathbf z}\cdot \hat{\mathbf x})^2}{2(\|\mathbf x\|^2+a^2)^{3/2}},
\quad
\hat{\mathbf x}:=\mathbf x/\|\mathbf x\|,
\end{equation}
and set $K_a(\mathbf 0):=0$. The value at $\mathbf x=\mathbf 0$ is immaterial for the integral operator.

Since $\Omega\subset\mathbb R^3$ is bounded, there exists $R>0$ such that
\begin{equation}
\Omega-\Omega:=\{\mathbf r-\mathbf r' : \mathbf r,\mathbf r'\in\Omega\}\subset B_R(0).
\end{equation}
Choose $\chi\in C_c^\infty(\mathbb R^3)$ such that $\chi\equiv 1$ on $B_R(0)$, and define
\begin{equation}
G_a(\mathbf x):=\chi(\mathbf x)\,K_a(\mathbf x).
\end{equation}
Then $G_a$ is compactly supported, and for every $\mathbf r,\mathbf r'\in\Omega$ one has
\begin{equation}
G_a(\mathbf r-\mathbf r')=K_a(\mathbf r-\mathbf r').
\label{eq:Ta_cutoff_identity}
\end{equation}
We first show that
\begin{equation}
G_a\in W^{1,1}(\mathbb R^3).
\label{eq:Ga_W11}
\end{equation}
Indeed, for $\mathbf x\ne \mathbf 0$,
\begin{equation}
|K_a(\mathbf x)|
\le
\frac{1}{(\|\mathbf x\|^2+a^2)^{3/2}}
\le a^{-3},
\end{equation}
because $|1-3(\hat{\mathbf z}\cdot \hat{\mathbf x})^2|\le 2$. Hence $K_a\in L^\infty(B_R(0))$, so $G_a\in L^1(\mathbb R^3)$.

For the gradient, write
\begin{align*}
K_a(\mathbf x)=& q(\hat{\mathbf x})\,\rho_a(\|\mathbf x\|), \\
q(\omega):=&\frac{1-3(\hat{\mathbf z}\cdot \omega)^2}{2}, \\ 
\rho_a(r):= & (r^2+a^2)^{-3/2}.
\end{align*}
The function $q$ is smooth on the unit sphere, and therefore its homogeneous extension satisfies
\begin{equation}
|\nabla q(\hat{\mathbf x})|\le C\,\|\mathbf x\|^{-1}
\quad (\mathbf x\ne \mathbf 0).
\end{equation}
Also,
\begin{equation}
|\rho_a(r)|\le a^{-3},
\quad
|\rho_a'(r)|=\frac{3r}{(r^2+a^2)^{5/2}}\le C_a .
\end{equation}
By the product rule,
\begin{equation}
|\nabla K_a(\mathbf x)|
\le
C_a\bigl(1+\|\mathbf x\|^{-1}\bigr)
\quad
(\mathbf x\ne \mathbf 0).
\label{eq:grad_Ka_bound}
\end{equation}
Since $\|\mathbf x\|^{-1}\in L^1(B_R(0))$ in three dimensions, it follows that $\nabla K_a\in L^1(B_R(0))$. Because $\chi$ is smooth and compactly supported,
\begin{equation}
\nabla G_a = \chi\,\nabla K_a + K_a\,\nabla\chi \in L^1(\mathbb R^3),
\end{equation}
which proves \eqref{eq:Ga_W11}.

Now let $\vec M\in (L^p(\Omega))^3$, $1\le p\le \infty$, and let $\widetilde{\vec M}$ denote its zero extension to $\mathbb R^3$. Set
\begin{equation}
\widetilde{\vec F}:=\mathcal A\,\widetilde{\vec M}.
\end{equation}
For $\mathbf r\in\Omega$, using \eqref{eq:Ta_cutoff_identity},
\begin{align}
\mathcal T_a[\vec M](\mathbf r)
&=
\int_\Omega K_a(\mathbf r-\mathbf r')\,\mathcal A\,\vec M(\mathbf r')\,\mathrm d^3\mathbf r'
\notag\\
&=
\int_{\mathbb R^3} G_a(\mathbf r-\mathbf y)\,\widetilde{\vec F}(\mathbf y)\,\mathrm d^3\mathbf y
=
(G_a\ast \widetilde{\vec F})(\mathbf r).
\label{eq:Ta_as_convolution}
\end{align}
Therefore, by Young's inequality on $\mathbb R^3$,
\begin{align}
\|\mathcal T_a[\vec M]\|_{L^p(\Omega)}
&\le
\|G_a\ast \widetilde{\vec F}\|_{L^p(\mathbb R^3)}
\notag\\
&\le
\|G_a\|_{L^1(\mathbb R^3)}\,\|\widetilde{\vec F}\|_{L^p(\mathbb R^3)}
\notag\\
&\le
\|G_a\|_{L^1(\mathbb R^3)}\,\|\mathcal A\|\,\|\widetilde{\vec M}\|_{L^p(\mathbb R^3)}
\notag\\
&=
C_a\,\|\vec M\|_{L^p(\Omega)}.
\end{align}
This proves the $L^p$ bound.

We next prove the $H^1$ estimate, in the stronger form
\begin{equation}
\|\mathcal T_a[\vec M]\|_{H^1(\Omega)}
\le
C_a\,\|\vec M\|_{L^2(\Omega)}
\,
\text{for all } \vec M\in (L^2(\Omega))^3.
\label{eq:Ta_L2_to_H1}
\end{equation}
By \eqref{eq:Ta_as_convolution} and the fact that $G_a\in W^{1,1}(\mathbb R^3)$, differentiation in the sense of distributions gives
\begin{equation}
\nabla \mathcal T_a[\vec M]
=
\bigl((\nabla G_a)\ast \widetilde{\vec F}\bigr)\big|_\Omega .
\end{equation}
Hence, using Young's inequality again,
\begin{align}
\|\nabla \mathcal T_a[\vec M]\|_{L^2(\Omega)}
&\le
\|(\nabla G_a)\ast \widetilde{\vec F}\|_{L^2(\mathbb R^3)}
\notag\\
&\le
\|\nabla G_a\|_{L^1(\mathbb R^3)}\,\|\widetilde{\vec F}\|_{L^2(\mathbb R^3)}
\notag\\
&\le
\|\nabla G_a\|_{L^1(\mathbb R^3)}\,\|\mathcal A\|\,\|\vec M\|_{L^2(\Omega)}.
\end{align}
Combining this with the already proved $L^2$ bound yields \eqref{eq:Ta_L2_to_H1}.

Now fix $s\in[0,1]$. Since $\mathcal T_a$ is bounded
\begin{equation}
\mathcal T_a : (L^2(\Omega))^3 \to (L^2(\Omega))^3
\, \text{and} \, 
\mathcal T_a : (L^2(\Omega))^3 \to H^1(\Omega)^3,
\end{equation}
standard interpolation on bounded Lipschitz domains gives
\begin{equation}
\mathcal T_a : (L^2(\Omega))^3 \to H^s(\Omega)^3
\end{equation}
and
\begin{equation}
\|\mathcal T_a[\vec M]\|_{H^s(\Omega)}
\le
C_{a,s}\,\|\vec M\|_{L^2(\Omega)}
\quad
\text{for all } \vec M\in (L^2(\Omega))^3.
\label{eq:Ta_L2_to_Hs}
\end{equation}
Finally, if $\vec M\in H^s(\Omega)^3$, then $H^s(\Omega)\hookrightarrow L^2(\Omega)$ continuously because $\Omega$ is bounded. Therefore \eqref{eq:Ta_L2_to_Hs} implies
\begin{equation}
\|\mathcal T_a[\vec M]\|_{H^s(\Omega)}
\le
C_{a,s}\,\|\vec M\|_{L^2(\Omega)}
\le
C_{a,s}\,\|\vec M\|_{H^s(\Omega)}.
\end{equation}
This proves the stated $H^s$ bound.
\end{proof}

\begin{proposition}[Energy balance and $L^2$ estimate]\label{prop:energy}
Let $\vec M$ be a smooth solution of \eqref{eq:bloch_ddf} on $\Omega\times(0,T)$ satisfying reflective diffusion boundary conditions
\begin{equation}
\hat{\mathbf n}\cdot\bigl(D(\mathbf r)\nabla M_i\bigr)=0
\quad
\text{on } \partial\Omega\times(0,T),
\quad i\in\{x,y,z\}.
\label{eq:reflective_bc_energy}
\end{equation}
Then
\begin{multline}
\frac{\mathrm{d}}{\mathrm{d}t}\,
\frac12 \int_\Omega \lVert \vec M\rVert^2\,\mathrm{d}^3\mathbf r
+ \int_\Omega D(\mathbf r)\,\lVert\nabla \vec M\rVert^2\,\mathrm{d}^3\mathbf r \\
+ \int_\Omega \frac{M_x^2+M_y^2}{T_2(\mathbf r)}\,\mathrm{d}^3\mathbf r
+ \int_\Omega \frac{M_z^2}{T_1(\mathbf r)}\,\mathrm{d}^3\mathbf r \\
= \int_\Omega \frac{M_0(\mathbf r)\,M_z}{T_1(\mathbf r)}\,\mathrm{d}^3\mathbf r
+ \frac12\int_\Omega \bigl(\nabla\cdot\vec v\bigr)\,\lVert \vec M\rVert^2\,\mathrm{d}^3\mathbf r \\
- \frac12\int_{\partial\Omega}
\bigl(\vec v\cdot\hat{\mathbf n}\bigr)\,\lVert \vec M\rVert^2\,\mathrm{d}S .
\label{eq:energy_identity}
\end{multline}
In particular, if $\nabla\cdot\vec v=0$ in $\Omega$ and $\vec v\cdot\hat{\mathbf n}=0$ on $\partial\Omega$, then the transport terms vanish and
\begin{multline}
\frac{\mathrm{d}}{\mathrm{d}t}\,
\frac12 \int_\Omega \lVert \vec M\rVert^2\,\mathrm{d}^3\mathbf r
+ \int_\Omega D(\mathbf r)\,\lVert\nabla \vec M\rVert^2\,\mathrm{d}^3\mathbf r \\
+ \int_\Omega \frac{M_x^2+M_y^2}{T_2(\mathbf r)}\,\mathrm{d}^3\mathbf r
+ \int_\Omega \frac{M_z^2}{T_1(\mathbf r)}\,\mathrm{d}^3\mathbf r \\
= \int_\Omega \frac{M_0(\mathbf r)\,M_z}{T_1(\mathbf r)}\,\mathrm{d}^3\mathbf r .
\end{multline}
Moreover, the right-hand side admits the bound
\begin{equation}
\int_\Omega \frac{M_0\,M_z}{T_1}\,\mathrm{d}^3\mathbf r
\le
\frac12\int_\Omega \frac{M_0^2}{T_1}\,\mathrm{d}^3\mathbf r
+
\frac12\int_\Omega \frac{M_z^2}{T_1}\,\mathrm{d}^3\mathbf r ,
\label{eq:energy_rhs_young}
\end{equation}
which yields the a priori estimate
\begin{multline}
\frac{\mathrm{d}}{\mathrm{d}t}\,
\frac12 \int_\Omega \lVert \vec M\rVert^2\,\mathrm{d}^3\mathbf r
+ \int_\Omega D(\mathbf r)\,\lVert\nabla \vec M\rVert^2\,\mathrm{d}^3\mathbf r \\
+ \int_\Omega \frac{M_x^2+M_y^2}{T_2(\mathbf r)}\,\mathrm{d}^3\mathbf r
+ \frac12\int_\Omega \frac{M_z^2}{T_1(\mathbf r)}\,\mathrm{d}^3\mathbf r \\
\le
\frac12\int_\Omega \frac{M_0(\mathbf r)^2}{T_1(\mathbf r)}\,\mathrm{d}^3\mathbf r
+ \frac12\int_\Omega \bigl(\nabla\cdot\vec v\bigr)\,\lVert \vec M\rVert^2\,\mathrm{d}^3\mathbf r \\
- \frac12\int_{\partial\Omega}
\bigl(\vec v\cdot\hat{\mathbf n}\bigr)\,\lVert \vec M\rVert^2\,\mathrm{d}S .
\label{eq:energy_inequality}
\end{multline}
\end{proposition}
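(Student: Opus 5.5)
The plan is the standard $L^2$ energy method: take the Euclidean dot product of \eqref{eq:bloch_ddf} with $\vec M$, integrate over $\Omega$, and handle each term separately. Because $\vec M$ is smooth, $\partial_t\vec M\cdot\vec M=\tfrac12\partial_t\lVert\vec M\rVert^2$. Differentiating under the integral then turns the left side into $\frac{\mathrm d}{\mathrm dt}\frac12\int_\Omega\lVert\vec M\rVert^2$.

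The terms on the right are treated in order.

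\textbf{Precession.} The integrand $\gamma\,(\vec M\times\vec B_{\mathrm{eff}})\cdot\vec M$ vanishes pointwise, since a cross product is orthogonal to each factor. This is the orthogonality identity of \Cref{lem:neutral}. It holds for any value of $\vec B_d[\vec M]$, which \Cref{prop:Ta_bdd} guarantees is well defined (indeed $H^1$), so the nonlocal DDF plays no role in this step.

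\textbf{Diffusion.} For each component, integrate by parts as in \eqref{eq:diffusion_ibp} with $v_i=M_i$. This gives $-\int_\Omega D\lVert\nabla M_i\rVert^2$ plus a boundary term, and the boundary term vanishes by \eqref{eq:reflective_bc_energy}. Summing over $i$ produces the diffusion dissipation.

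\textbf{Relaxation.} These terms contribute directly $-\int_\Omega (M_x^2+M_y^2)/T_2-\int_\Omega M_z^2/T_1+\int_\Omega M_0M_z/T_1$.

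\textbf{Transport.} Write $-\sum_i(\vec v\cdot\nabla M_i)M_i=-\tfrac12\,\vec v\cdot\nabla\lVert\vec M\rVert^2$. Applying the divergence theorem,
\[
-\tfrac12\int_\Omega \vec v\cdot\nabla\lVert\vec M\rVert^2
=\tfrac12\int_\Omega(\nabla\cdot\vec v)\lVert\vec M\rVert^2-\tfrac12\int_{\partial\Omega}(\vec v\cdot\hat{\mathbf n})\lVert\vec M\rVert^2\,\mathrm dS .
\]
Collecting all terms yields \eqref{eq:energy_identity}. The special case follows immediately, because under \eqref{eq:transport_assumptions} both transport terms vanish.

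For the a priori bound, apply Young's inequality $ab\le\tfrac12a^2+\tfrac12b^2$ pointwise with weight $1/T_1>0$ to obtain \eqref{eq:energy_rhs_young}. Absorbing $\tfrac12\int M_z^2/T_1$ into the left-hand side then gives \eqref{eq:energy_inequality}.

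There is no real obstacle here. The only point that needs care is the regularity and trace setting required to justify the divergence theorem on a Lipschitz domain for the transport and diffusion boundary terms. Since smoothness of $\vec M$ is assumed, I would simply note that $\lVert\vec M\rVert^2\vec v\in W^{1,1}$ with a well-defined trace, so the Gauss–Green formula applies. I would also check the sign convention of the advection term, namely that \eqref{eq:bloch_ddf} contains $-(\vec v\cdot\nabla)\vec M$, so that the boundary flux appears with a minus sign as stated.
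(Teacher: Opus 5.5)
Your proposal is correct and follows the paper's proof essentially step by step. Like the paper, you dot the equation with $\vec M$ and use the pointwise vanishing of the triple product for precession, the divergence theorem on $\tfrac12\vec v\,\lVert\vec M\rVert^2$ for transport, and integration by parts with the no-flux condition for diffusion; you then read off the relaxation terms directly and finish with a pointwise Young inequality weighted by $1/T_1$, absorbing $\tfrac12\int_\Omega M_z^2/T_1$ into the left-hand side.
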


\begin{proof}
Since $\vec M$ is smooth, each term below is classically well defined and the integrations by parts are justified.

Take the $L^2(\Omega)^3$ inner product of \eqref{eq:bloch_ddf} with $\vec M$ and integrate over $\Omega$. This gives
\begin{align}
\int_\Omega \vec M\cdot \partial_t\vec M\,\mathrm d^3\mathbf r
&=
\gamma\int_\Omega \vec M\cdot\Bigl(\vec M\times(\vec B_d[\vec M]+\delta\vec B)\Bigr)\,\mathrm d^3\mathbf r \nonumber \\
& 
-
\int_\Omega \vec M\cdot\bigl((\vec v\cdot\nabla)\vec M\bigr)\,\mathrm d^3\mathbf r
\notag\\
&\quad
+
\int_\Omega \vec M\cdot \nabla\cdot\bigl(D(\mathbf r)\nabla \vec M\bigr)\,\mathrm d^3\mathbf r \nonumber \\
& -
\int_\Omega \vec M\cdot \frac{M_x\,\hat{\mathbf x}+M_y\,\hat{\mathbf y}}{T_2(\mathbf r)}\,\mathrm d^3\mathbf r
\notag\\
&\quad
+
\int_\Omega \vec M\cdot \frac{M_0(\mathbf r)-M_z}{T_1(\mathbf r)}\,\hat{\mathbf z}\,\mathrm d^3\mathbf r .
\label{eq:energy_start}
\end{align}
We evaluate the terms on the right-hand side one by one.

For the time derivative, using $\vec M\cdot\partial_t\vec M=\frac12\partial_t|\vec M|^2$, we obtain
\begin{equation}
\int_\Omega \vec M\cdot \partial_t\vec M\,\mathrm d^3\mathbf r
=
\frac{\mathrm d}{\mathrm dt}\,
\frac12\int_\Omega |\vec M|^2\,\mathrm d^3\mathbf r .
\label{eq:energy_time}
\end{equation}
For the precession term, the pointwise identity
\begin{equation}
\vec a\cdot(\vec a\times \vec b)=0
\quad
\text{for all } \vec a,\vec b\in\mathbb R^3
\end{equation}
implies
\begin{equation}
\gamma\int_\Omega \vec M\cdot\Bigl(\vec M\times(\vec B_d[\vec M]+\delta\vec B)\Bigr)\,\mathrm d^3\mathbf r
=0.
\label{eq:energy_precession}
\end{equation}
For the advection term, since
\begin{equation}
\vec M\cdot\bigl((\vec v\cdot\nabla)\vec M\bigr)
=
\frac12\,\vec v\cdot\nabla |\vec M|^2,
\end{equation}
the divergence theorem gives
\begin{align}
-\int_\Omega  \vec M\cdot\bigl((\vec v\cdot\nabla)\vec M\bigr)\,\mathrm d^3\mathbf r  
& = -\frac12\int_\Omega \vec v\cdot\nabla |\vec M|^2\,\mathrm d^3\mathbf r
\notag\\
&=
-\frac12\int_\Omega \nabla\cdot\bigl(\vec v\,|\vec M|^2\bigr)\,\mathrm d^3\mathbf r \nonumber \\
& 
+
\frac12\int_\Omega (\nabla\cdot \vec v)\,|\vec M|^2\,\mathrm d^3\mathbf r
\notag\\
&=
\frac12\int_\Omega (\nabla\cdot \vec v)\,|\vec M|^2\,\mathrm d^3\mathbf r  \nonumber \\
& 
-
\frac12\int_{\partial\Omega} (\vec v\cdot\hat{\mathbf n})\,|\vec M|^2\,\mathrm dS .
\label{eq:energy_transport}
\end{align}
For the diffusion term, writing componentwise and integrating by parts,
\begin{align}
\int_\Omega \vec M\cdot & \nabla\cdot\bigl(D(\mathbf r)\nabla \vec M\bigr)\,\mathrm d^3\mathbf r
\\
& = 
\sum_{i\in\{x,y,z\}}
\int_\Omega M_i\,\nabla\cdot\bigl(D(\mathbf r)\nabla M_i\bigr)\,\mathrm d^3\mathbf r
\notag\\
&=
-\sum_{i\in\{x,y,z\}}
\int_\Omega D(\mathbf r)\,|\nabla M_i|^2\,\mathrm d^3\mathbf r  \nonumber \\
&
+
\sum_{i\in\{x,y,z\}}
\int_{\partial\Omega} M_i\,\hat{\mathbf n}\cdot\bigl(D(\mathbf r)\nabla M_i\bigr)\,\mathrm dS .
\end{align}
The boundary term vanishes by \eqref{eq:reflective_bc_energy}, so
\begin{equation}
\int_\Omega \vec M\cdot \nabla\cdot\bigl(D(\mathbf r)\nabla \vec M\bigr)\,\mathrm d^3\mathbf r
=
-\int_\Omega D(\mathbf r)\,|\nabla \vec M|^2\,\mathrm d^3\mathbf r .
\label{eq:energy_diffusion}
\end{equation}
For the transverse relaxation term,
\begin{equation}
-\int_\Omega \vec M\cdot \frac{M_x\,\hat{\mathbf x}+M_y\,\hat{\mathbf y}}{T_2(\mathbf r)}\,\mathrm d^3\mathbf r
=
-\int_\Omega \frac{M_x^2+M_y^2}{T_2(\mathbf r)}\,\mathrm d^3\mathbf r .
\label{eq:energy_T2}
\end{equation}
For the longitudinal relaxation and recovery term,
\begin{align}
\int_\Omega \vec M\cdot \frac{M_0(\mathbf r)-M_z}{T_1(\mathbf r)}\,\hat{\mathbf z}\,\mathrm d^3\mathbf r
&=
\int_\Omega \frac{(M_0(\mathbf r)-M_z)M_z}{T_1(\mathbf r)}\,\mathrm d^3\mathbf r
\notag\\
&=
\int_\Omega \frac{M_0(\mathbf r)\,M_z}{T_1(\mathbf r)}\,\mathrm d^3\mathbf r \\
& \quad 
-
\int_\Omega \frac{M_z^2}{T_1(\mathbf r)}\,\mathrm d^3\mathbf r .
\label{eq:energy_T1}
\end{align}
Substituting \eqref{eq:energy_time}, \eqref{eq:energy_precession}, \eqref{eq:energy_transport}, \eqref{eq:energy_diffusion}, \eqref{eq:energy_T2}, and \eqref{eq:energy_T1} into \eqref{eq:energy_start} yields
\eqref{eq:energy_identity}. The stated special case $\nabla\cdot\vec v=0$ in $\Omega$ and $\vec v\cdot\hat{\mathbf n}=0$ on $\partial\Omega$ follows immediately.

To derive \eqref{eq:energy_rhs_young}, use the pointwise Young inequality
\begin{equation}
ab\le \frac12 a^2+\frac12 b^2
\quad
\text{for all } a,b\in\mathbb R,
\end{equation}
with
\begin{equation}
a=\frac{M_0(\mathbf r)}{\sqrt{T_1(\mathbf r)}},
\quad
b=\frac{M_z(\mathbf r)}{\sqrt{T_1(\mathbf r)}}.
\end{equation}
Because $T_1(\mathbf r)>0$, this gives
\begin{equation}
\frac{M_0(\mathbf r)\,M_z(\mathbf r)}{T_1(\mathbf r)}
\le
\frac12\,\frac{M_0(\mathbf r)^2}{T_1(\mathbf r)}
+
\frac12\,\frac{M_z(\mathbf r)^2}{T_1(\mathbf r)} .
\label{eq:energy_young_pointwise}
\end{equation}
Integrating \eqref{eq:energy_young_pointwise} over $\Omega$ yields \eqref{eq:energy_rhs_young}. Inserting \eqref{eq:energy_rhs_young} into \eqref{eq:energy_identity} gives \eqref{eq:energy_inequality}.
\end{proof}

\begin{proposition}[Continuous dependence]\label{prop:contdep}
Let $\vec M^{(1)}$ and $\vec M^{(2)}$ be weak solutions on $[0,T^\ast]$
with initial data $\vec M_{\mathrm{init}}^{(1)}$ and
$\vec M_{\mathrm{init}}^{(2)}$. Then, for all $t\in[0,T^\ast]$,
\begin{multline}
\|\vec M^{(1)}(t)-\vec M^{(2)}(t)\|_{H}^2
\\
\le
\exp\Biggl(
C t
+
C\sum_{\ell=1}^2
\int_0^t \|\vec M^{(\ell)}(s)\|_{V}^{4/3}\,\mathrm ds
\Biggr)
\|\vec M_{\mathrm{init}}^{(1)}-\vec M_{\mathrm{init}}^{(2)}\|_{H}^2,
\label{eq:contdep_explicit}
\end{multline}
where $C$ depends only on $D_{\min}^{-1}$, $\Omega$, the Sobolev and trace
constants of $\Omega$, and $\|\mathcal T_a\|_{L^2(\Omega)^3\to L^2(\Omega)^3}$,
and, in the non-skew advection formulation, also on
$\|\nabla\cdot\vec v\|_{L^\infty(\Omega)}$,
$\|\vec v\cdot\hat{\mathbf n}\|_{L^\infty(\partial\Omega)}$,
and the chosen advection boundary conditions.

Consequently, since
$\vec M^{(1)},\vec M^{(2)}\in L^2(0,T^\ast;V)$, there exists a constant
\begin{equation}
C_\ast
=
C_\ast\Bigl(
T^\ast,
\|\vec M^{(1)}\|_{L^2(0,T^\ast;V)},
\|\vec M^{(2)}\|_{L^2(0,T^\ast;V)}
\Bigr)
\end{equation}
such that
\begin{equation}
\|\vec M^{(1)}(t)-\vec M^{(2)}(t)\|_{H}^2
\le
C_\ast e^{C_\ast t}
\|\vec M_{\mathrm{init}}^{(1)}-\vec M_{\mathrm{init}}^{(2)}\|_{H}^2
\end{equation}
for all $t\in[0,T^\ast]$.
\end{proposition}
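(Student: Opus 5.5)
We argue by a Grönwall estimate on the difference $\vec W:=\vec M^{(1)}-\vec M^{(2)}$. Subtracting the weak forms \eqref{eq:weak_bloch} for the two solutions, the source term $M_0/T_1$ cancels and $\vec W$ satisfies a linear-in-$\vec W$ equation plus the difference of the precession nonlinearities. Since $\vec W\in X=L^2(0,T^\ast;V)\cap H^1(0,T^\ast;V')$, the Lions--Magenes time-continuity result (\Cref{lem:LM}) lets us test with $\vec W$ itself and gives $\langle\partial_t\vec W,\vec W\rangle=\frac12\frac{d}{dt}\|\vec W\|_H^2$ a.e. Diffusion then yields $-\int_\Omega D|\nabla\vec W|^2\le -D_{\min}\|\nabla\vec W\|^2_{L^2}$, and the relaxation terms are nonpositive and are dropped. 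For advection, the skew form \eqref{eq:adv_skew} gives zero. In the non-skew form, the term is bounded by $\frac12\|\nabla\cdot\vec v\|_\infty\|\vec W\|_H^2$ plus a boundary term $\|\vec v\cdot\hat{\mathbf n}\|_{L^\infty(\partial\Omega)}\|\vec W\|^2_{L^2(\partial\Omega)}$. We control that boundary term by the trace inequality $\|\vec W\|_{L^2(\partial\Omega)}^2\le C\|\vec W\|_H\|\vec W\|_V$ and absorb it into the diffusion term via Young.

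The key step is the precession difference. Write
\[
\vec M^{(1)}\times\vec B^{(1)}-\vec M^{(2)}\times\vec B^{(2)}=\vec W\times\vec B^{(1)}+\vec M^{(2)}\times\mathcal T_a[\vec W],
\]
with $\vec B^{(\ell)}=\mathcal T_a[\vec M^{(\ell)}]+\delta\vec B$. Tested against $\vec W$, the first piece vanishes pointwise (\Cref{lem:neutral}), including its $\delta\vec B$ part. The remaining term is $\gamma\int_\Omega \vec W\cdot(\vec M^{(2)}\times\mathcal T_a[\vec W])$. Symmetrizing instead, as $\vec W\times\vec B^{(2)}+\vec M^{(1)}\times\mathcal T_a[\vec W]$, gives the same bound with $\vec M^{(1)}$; this is why both solutions appear in \eqref{eq:contdep_explicit}. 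By Hölder with exponents $(3,6,2)$, the remaining term is bounded by
\[
|\gamma|\,\|\vec W\|_{L^3}\,\|\vec M^{(2)}\|_{L^6}\,\|\mathcal T_a\vec W\|_{L^2},
\]
and then by $C\|\vec W\|_{L^3}\|\vec M^{(2)}\|_V\|\vec W\|_H$, using Sobolev $H^1\hookrightarrow L^6$ and \Cref{prop:Ta_bdd}.

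The main obstacle is producing exactly the exponent $4/3$. We use the Gagliardo--Nirenberg interpolation $\|\vec W\|_{L^3}\le C\|\vec W\|_H^{1/2}\|\vec W\|_V^{1/2}$. This bounds the term by $C\|\vec M^{(2)}\|_V\|\vec W\|_H^{3/2}\|\vec W\|_V^{1/2}$. Young with exponents $(4/3,4)$ then gives
\[
\tfrac{D_{\min}}{2}\|\vec W\|_V^2+C\|\vec M^{(2)}\|_V^{4/3}\|\vec W\|_H^2.
\]
Here $\|\vec W\|_V^2=\|\nabla\vec W\|^2+\|\vec W\|_H^2$, so the extra $\|\vec W\|_H^2$ term goes into the $Ct$ part of the exponent. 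We could alternatively exploit $\mathcal T_a:L^2\to H^1\subset L^6$ (the stronger estimate \eqref{eq:Ta_L2_to_H1}) for an even cheaper bound, but the stated form suffices. If needed, we would cite \Cref{lem:lipschitz} for this step.

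Collecting the estimates gives
\[
\frac{d}{dt}\|\vec W\|_H^2\le\Bigl(C+C\sum_\ell\|\vec M^{(\ell)}\|_V^{4/3}\Bigr)\|\vec W\|_H^2.
\]
The coefficient is in $L^1(0,T^\ast)$ because, by Hölder in time, $\int_0^t\|\vec M^{(\ell)}\|_V^{4/3}\le t^{1/3}\|\vec M^{(\ell)}\|_{L^2(0,t;V)}^{4/3}$. Grönwall's inequality, with $\|\vec W(0)\|_H$ well defined by \Cref{lem:LM}, gives \eqref{eq:contdep_explicit}. The same Hölder bound then yields the constant $C_\ast$ in the consequence.
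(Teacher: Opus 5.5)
Your proposal is correct and follows the paper's proof essentially step for step. It uses the same decomposition $\vec W\times\vec B^{(1)}+\vec M^{(2)}\times\mathcal T_a[\vec W]$, the same $L^6$--$L^3$--$L^2$ H\"older split with the $L^3$ interpolation and Young with exponents $(4,4/3)$, the same symmetrization in $\vec M^{(1)},\vec M^{(2)}$, the same trace-inequality absorption for non-skew advection, and the same Gr\"onwall-plus-H\"older-in-time argument for $C_\ast$. One caveat: your fallback to \Cref{lem:lipschitz} would not help, because that bound carries the full $\|\vec W\|_{H^1}$ with a non-small prefactor and cannot be absorbed into the diffusion term, but your main argument never needs it.
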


\begin{proof}
Set
\begin{equation}
\vec W := \vec M^{(1)}-\vec M^{(2)},
\quad
\vec B^{(\ell)}
:=
\mathcal T_a[\vec M^{(\ell)}]+\delta\vec B,
\quad
\ell\in\{1,2\}.
\end{equation}
Since $\vec M^{(1)},\vec M^{(2)}\in X\cap C([0,T^\ast];H)$, we have
\begin{equation}
\vec W\in X\cap C([0,T^\ast];H).
\end{equation}
Subtract the two weak formulations. Then, for almost every
$t\in(0,T^\ast)$ and every $\vec\Phi\in V$,
\begin{align}
\langle \partial_t \vec W,\vec\Phi\rangle_{V',V}
&+
\int_\Omega
D(\mathbf r)\,
\nabla \vec W:\nabla \vec\Phi\,
\mathrm d^3\mathbf r
\notag\\
&+
\int_\Omega
\left(
\frac{W_1\Phi_1+W_2\Phi_2}{T_2(\mathbf r)}
+
\frac{W_3\Phi_3}{T_1(\mathbf r)}
\right)
\mathrm d^3\mathbf r
\notag\\
&+
\sum_{i=1}^3 a_{\mathrm{adv}}(W_i,\Phi_i)
\notag\\
&=
\gamma
\int_\Omega
\Bigl(
\vec M^{(1)}\times \vec B^{(1)}
-
\vec M^{(2)}\times \vec B^{(2)}
\Bigr)\cdot \vec\Phi\,
\mathrm d^3\mathbf r .
\label{eq:contdep_weak_difference}
\end{align}
The inhomogeneous recovery term cancels because it is the same in both
equations.

Choose $\vec\Phi=\vec W(t)$. Since
$\vec W\in L^2(0,T^\ast;V)\cap H^1(0,T^\ast;V')$, the Lions--Magenes
identity yields
\begin{equation}
\langle \partial_t\vec W(t),\vec W(t)\rangle_{V',V}
=
\frac12\frac{\mathrm d}{\mathrm dt}\|\vec W(t)\|_H^2
\end{equation}
for almost every $t\in(0,T^\ast)$. Therefore
\begin{align}
\frac12\frac{\mathrm d}{\mathrm dt}\|\vec W(t)\|_H^2
&+
\int_\Omega
D(\mathbf r)\,|\nabla \vec W|^2\,
\mathrm d^3\mathbf r
\notag\\
&+
\int_\Omega
\frac{W_1^2+W_2^2}{T_2(\mathbf r)}\,
\mathrm d^3\mathbf r
+
\int_\Omega
\frac{W_3^2}{T_1(\mathbf r)}\,
\mathrm d^3\mathbf r
\notag\\
&=
\gamma
\int_\Omega
\Bigl(
\vec M^{(1)}\times \vec B^{(1)}
-
\vec M^{(2)}\times \vec B^{(2)}
\Bigr)\cdot \vec W\,
\mathrm d^3\mathbf r
\notag\\
&\quad
-
\sum_{i=1}^3 a_{\mathrm{adv}}(W_i,W_i).
\label{eq:contdep_energy_identity}
\end{align}
We first estimate the nonlinear precession term. Using
\begin{equation}
\vec M^{(1)}\times \vec B^{(1)}
-
\vec M^{(2)}\times \vec B^{(2)}
=
\vec W\times \vec B^{(1)}
+
\vec M^{(2)}\times \mathcal T_a[\vec W],
\label{eq:contdep_cross_decomp}
\end{equation}
and the pointwise identity
\begin{equation}
\vec W\cdot\bigl(\vec W\times \vec B^{(1)}\bigr)=0,
\end{equation}
we obtain
\begin{align}
&
\left|
\int_\Omega
\Bigl(
\vec M^{(1)}\times \vec B^{(1)}
-
\vec M^{(2)}\times \vec B^{(2)}
\Bigr)\cdot \vec W\,
\mathrm d^3\mathbf r
\right|
\notag\\
&\quad =
\left|
\int_\Omega
\bigl(\vec M^{(2)}\times \mathcal T_a[\vec W]\bigr)\cdot \vec W\,
\mathrm d^3\mathbf r
\right|
\notag\\
&\quad \le
\|\vec M^{(2)}\|_{L^6(\Omega)}
\|\vec W\|_{L^3(\Omega)}
\|\mathcal T_a[\vec W]\|_{L^2(\Omega)} .
\label{eq:contdep_precession_est1}
\end{align}
By Sobolev embedding, interpolation, and the $L^2$ boundedness of
$\mathcal T_a$,
\begin{align*}
\|\vec M^{(2)}\|_{L^6(\Omega)}
&\le
C\|\vec M^{(2)}\|_V,
\\
\|\vec W\|_{L^3(\Omega)}
&\le
C\|\vec W\|_H^{1/2}\|\vec W\|_V^{1/2},
\\
\|\mathcal T_a[\vec W]\|_{L^2(\Omega)}
&\le
C\|\vec W\|_H.
\end{align*}
Hence
\begin{multline}
\left|
\gamma
\int_\Omega
\Bigl(
\vec M^{(1)}\times \vec B^{(1)}
-
\vec M^{(2)}\times \vec B^{(2)}
\Bigr)\cdot \vec W\,
\mathrm d^3\mathbf r
\right| \\
\le
C\|\vec M^{(2)}\|_V
\|\vec W\|_H^{3/2}
\|\vec W\|_V^{1/2}.
\label{eq:contdep_precession_est2}
\end{multline}
Applying Young's inequality with exponents $4$ and $4/3$ gives, for every
$\varepsilon>0$,
\begin{equation}
C\|\vec M^{(2)}\|_V
\|\vec W\|_H^{3/2}
\|\vec W\|_V^{1/2}
\le
\varepsilon \|\vec W\|_V^2
+
C_\varepsilon
\|\vec M^{(2)}\|_V^{4/3}
\|\vec W\|_H^2.
\label{eq:contdep_precession_est3}
\end{equation}
Since
\begin{equation}
\|\vec W\|_V^2
=
\|\vec W\|_H^2
+
\|\nabla \vec W\|_{L^2(\Omega)}^2
\end{equation}
and $D(\mathbf r)\ge D_{\min}>0$, we may choose $\varepsilon$ small enough
to obtain
\begin{multline}
\left|
\gamma
\int_\Omega
\Bigl(
\vec M^{(1)}\times \vec B^{(1)}
-
\vec M^{(2)}\times \vec B^{(2)}
\Bigr)\cdot \vec W\,
\mathrm d^3\mathbf r
\right|
\\
\le
\frac{D_{\min}}{4}
\|\nabla \vec W\|_{L^2(\Omega)}^2
+
C\Bigl(
1+\|\vec M^{(2)}\|_V^{4/3}
\Bigr)
\|\vec W\|_H^2 .
\label{eq:contdep_precession_est4}
\end{multline}
Interchanging the roles of $\vec M^{(1)}$ and $\vec M^{(2)}$ gives the
symmetric bound
\begin{multline}
\left|
\gamma
\int_\Omega
\Bigl(
\vec M^{(1)}\times \vec B^{(1)}
-
\vec M^{(2)}\times \vec B^{(2)}
\Bigr)\cdot \vec W\,
\mathrm d^3\mathbf r
\right|
\\
\le
\frac{D_{\min}}{4}
\|\nabla \vec W\|_{L^2(\Omega)}^2 \\
+
C\Bigl(
1
+
\|\vec M^{(1)}\|_V^{4/3} 
+
\|\vec M^{(2)}\|_V^{4/3}
\Bigr)
\|\vec W\|_H^2 .
\label{eq:contdep_precession_final}
\end{multline}
We next estimate the advection contribution. If advection is written in the
skew-symmetric form under \eqref{eq:transport_assumptions}, then
\begin{equation}
a_{\mathrm{adv}}(W_i,W_i)=0,
\quad i=1,2,3.
\label{eq:contdep_adv_skew}
\end{equation}
If instead one uses the standard advective form
\begin{equation}
a_{\mathrm{adv}}(W_i,W_i)
=
\int_\Omega
(\vec v\cdot \nabla W_i)\,W_i\,
\mathrm d^3\mathbf r,
\end{equation}
then integration by parts yields
\begin{multline}
-\sum_{i=1}^3 a_{\mathrm{adv}}(W_i,W_i)
=
\frac12
\int_\Omega
(\nabla\cdot \vec v)\,
|\vec W|^2\,
\mathrm d^3\mathbf r
\\
-
\frac12
\int_{\partial\Omega}
(\vec v\cdot \hat{\mathbf n})\,
|\vec W|^2\,
\mathrm dS .
\label{eq:contdep_adv_ibp}
\end{multline}
The volume term satisfies
\begin{equation}
\left|
\frac12
\int_\Omega
(\nabla\cdot \vec v)\,
|\vec W|^2\,
\mathrm d^3\mathbf r
\right|
\le
\frac12
\|\nabla\cdot \vec v\|_{L^\infty(\Omega)}
\|\vec W\|_H^2 .
\label{eq:contdep_adv_volume}
\end{equation}
For the boundary term, the trace inequality gives
\begin{equation}
\|\vec W\|_{L^2(\partial\Omega)}^2
\le
C_{\mathrm{tr}}
\|\vec W\|_H
\|\vec W\|_V.
\end{equation}
Hence, for every $\varepsilon>0$,
\begin{equation}
\|\vec W\|_{L^2(\partial\Omega)}^2
\le
\varepsilon \|\vec W\|_V^2
+
C_{\varepsilon,\Omega}\|\vec W\|_H^2,
\label{eq:contdep_trace}
\end{equation}
and therefore
\begin{multline}
\left|
\frac12
\int_{\partial\Omega}
(\vec v\cdot \hat{\mathbf n})\,
|\vec W|^2\,
\mathrm dS
\right|
\\
\le
\varepsilon \|\vec W\|_V^2
+
C_{\varepsilon,\Omega}
\|\vec v\cdot \hat{\mathbf n}\|_{L^\infty(\partial\Omega)}
\|\vec W\|_H^2 .
\label{eq:contdep_adv_boundary}
\end{multline}
Substituting \eqref{eq:contdep_precession_final} into
\eqref{eq:contdep_energy_identity}, and, in the non-skew advection case,
also using \eqref{eq:contdep_adv_volume} and
\eqref{eq:contdep_adv_boundary}, we choose $\varepsilon>0$ sufficiently
small so that all gradient terms are absorbed by the diffusion term. Since
the relaxation terms are nonnegative, they may be discarded. We obtain, for
almost every $t\in(0,T^\ast)$,
\begin{equation}
\frac{\mathrm d}{\mathrm dt}\|\vec W(t)\|_H^2
\le
g(t)\,\|\vec W(t)\|_H^2,
\label{eq:contdep_diffineq}
\end{equation}
where one may take
\begin{multline}
g(t)
=
C\Bigl(
1
+
\|\vec M^{(1)}(t)\|_V^{4/3}
+
\|\vec M^{(2)}(t)\|_V^{4/3}
\Bigr)
\end{multline}
in the skew-advection case, and in the non-skew case one adds the constant
terms coming from
$\|\nabla\cdot \vec v\|_{L^\infty(\Omega)}$ and
$\|\vec v\cdot \hat{\mathbf n}\|_{L^\infty(\partial\Omega)}$.

Because $\vec M^{(1)},\vec M^{(2)}\in L^2(0,T^\ast;V)$ and $4/3<2$, we have
$g\in L^1(0,T^\ast)$. Gr\"onwall's inequality therefore yields
\begin{multline}
\|\vec W(t)\|_H^2
\le
\exp\left(
\int_0^t g(s)\,\mathrm ds
\right)
\|\vec W(0)\|_H^2
\\
=
\exp\left(
\int_0^t g(s)\,\mathrm ds
\right)
\|\vec M_{\mathrm{init}}^{(1)}
-
\vec M_{\mathrm{init}}^{(2)}\|_H^2,
\quad
0\le t\le T^\ast.
\label{eq:contdep_gronwall}
\end{multline}
This proves \eqref{eq:contdep_explicit}.

Finally, Hölder's inequality gives
\begin{multline*}
\int_0^t
\|\vec M^{(\ell)}(s)\|_V^{4/3}\,\mathrm ds
\le
t^{1/3}
\|\vec M^{(\ell)}\|_{L^2(0,t;V)}^{4/3} \\
\le
(T^\ast)^{1/3}
\|\vec M^{(\ell)}\|_{L^2(0,T^\ast;V)}^{4/3},
\end{multline*}
so the exponential factor in \eqref{eq:contdep_explicit} is bounded by
$C_\ast e^{C_\ast t}$ for a suitable constant $C_\ast$ depending on
$T^\ast$ and the two $L^2(0,T^\ast;V)$ norms. This proves the final stated
estimate.
\end{proof}

\begin{corollary}[Global existence under additional conditions]\label{cor:global}
Assume \Cref{sec:assumptions} and fix $a>0$. If either $\vec v\equiv 0$
or \eqref{eq:transport_assumptions} holds and advection is imposed in an
energy-neutral form, then the weak solution in \Cref{thm:local} extends to
$[0,T]$ for any $T>0$.
\end{corollary}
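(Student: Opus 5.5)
We would argue by the standard continuation principle: derive an a priori bound on the solution norm that holds uniformly on $[0,T]$, and show that the local existence time from \Cref{thm:local} depends only on that norm.

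\emph{Step 1: uniform local time.} We inspect \Cref{thm:local}. Its local existence time $T^\ast$ is obtained from a contraction or Galerkin argument built on \Cref{prop:Ta_bdd} and the local Lipschitz bound \Cref{lem:lipschitz}. For this reason we expect $T^\ast$ to depend on the data only through $\|\vec M_{\mathrm{init}}\|_H$ and the fixed coefficient bounds ($D_{\min}$, $T_{\min}$, $T_{\max}$, $\|\delta\vec B\|_{L^\infty}$, $\|M_0\|_{L^\infty}$, $\|\mathcal T_a\|$). More precisely, $T^\ast$ should be a nonincreasing function $\tau(\|\vec M_{\mathrm{init}}\|_H)>0$. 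If the theorem is not stated in this form, we would extract this dependence from its proof.

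\emph{Step 2: a priori energy bound.} Under $\vec v\equiv 0$, or under \eqref{eq:transport_assumptions} with the skew form \eqref{eq:adv_skew}, the transport contributions vanish, since $a_{\mathrm{adv}}(M_i,M_i)=0$. We then repeat the computation of \Cref{prop:energy} at the weak level. We test \eqref{eq:weak_bloch} with $\vec M(t)\in V$ and use the Lions--Magenes identity \Cref{lem:LM} to write $\langle\partial_t\vec M,\vec M\rangle=\frac12\frac{d}{dt}\|\vec M\|_H^2$. The precession term vanishes pointwise by \Cref{lem:neutral}; this is legitimate because $\vec M\times\vec B$ with $\vec B\in H^1\subset L^6$ makes the integrand integrable. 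Applying \eqref{eq:energy_rhs_young}, we obtain
\[
\|\vec M(t)\|_H^2+2D_{\min}\int_0^t\|\nabla\vec M\|_{L^2}^2\,ds\le \|\vec M_{\mathrm{init}}\|_H^2+t\,T_{\min}^{-1}\|M_0\|_{L^2}^2 .
\]
Hence on any interval $[0,t]\subset[0,T]$ on which the solution exists, $\|\vec M(t)\|_H^2\le R_T^2:=\|\vec M_{\mathrm{init}}\|_H^2+T\,T_{\min}^{-1}\|M_0\|_{L^2}^2$. The $L^2(0,t;V)$ norm is bounded by a constant depending only on $T$ and the data.

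\emph{Step 3: continuation.} Set $\tau_T:=\tau(R_T)>0$. We restart \Cref{thm:local} at times $k\tau_T/2$, using $\vec M(k\tau_T/2)\in H$ as initial data; this is well defined because of the $C([0,T^\ast];H)$ continuity from \Cref{lem:LM}. By Step 2 each restart has $H$-norm at most $R_T$, so every restart lives at least $\tau_T$. Uniqueness from \Cref{thm:local}, together with \Cref{prop:contdep}, ensures that the pieces agree on overlaps and glue to a single function in $X$ on $[0,T]$. Finitely many steps reach $T$. An equivalent formulation: let $T_{\max}$ be the supremum of existence times; if $T_{\max}\le T$, the bound on $\|\vec M(t)\|_H$ contradicts blow-up.

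\emph{Main obstacle.} We expect Step 1 to be the hardest part: confirming that the local time depends only on the $H$-norm of the data, and not on a $V$-norm. The precession nonlinearity is only locally Lipschitz, with growth measured through $\|\vec M\|_V^{4/3}$ as in \Cref{prop:contdep}. If the local theorem instead required $V$-regularity, we would use parabolic smoothing to bound $\|\vec M(t_0)\|_V$ at some $t_0$ in every interval, via the integrated gradient bound from Step 2. We would then restart from such a $t_0$, which requires a matching $H^1$ a priori estimate obtained by testing with $-\nabla\cdot(D\nabla\vec M)$. This last estimate is delicate for rough $D$, so we would first try to close the argument purely in $H$.
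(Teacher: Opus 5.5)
Your Steps 2 and 3 are fine. Testing the weak form directly with $\vec M(t)$, via \Cref{lem:LM} and \Cref{lem:neutral}, is actually a cleaner justification of the energy bound than the paper's appeal to regularization. The argument, however, hinges on Step 1, and as proposed that step is a genuine gap. You need a lower bound $T^\ast\ge\tau(\|\vec M_{\mathrm{init}}\|_H)$, and nothing you cite supplies it. \Cref{thm:local} asserts only existence for each datum. \Cref{lem:lipschitz} is an $H^1\to H^{-1}$ estimate with $H^1$ norms on the right, which is exactly the $V$-dependence you want to avoid. The paper's contraction bounds $\|\mathcal T_a[\vec Z]\|_{L^6}$ by $\|\vec Z\|_V$ and picks $T^\ast$ through the decay of $\|\vec U_0\|_{L^2(0,T;V)}$ as $T\downarrow 0$. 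That decay is not uniform on $H$-bounded sets: for a unit-norm high Neumann eigenmode with constant $D$, $\|\vec U_0\|_{L^2(0,T;V)}^2\to 1/(2D)$ at every fixed $T$. So the dependence cannot simply be extracted from that proof. Your ``equivalent formulation'' does not escape this either, since ``the $H$-bound contradicts blow-up'' presupposes a blow-up alternative, which is the same uniformity statement. The fallback you sketch (parabolic smoothing plus an $H^1$ a priori estimate) is unnecessary and, as you note, problematic for rough $D$.

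The paper avoids any uniformity. On the maximal interval $[0,T_{\max})$ the energy bound gives $\vec M\in L^\infty(0,T_{\max};H)\cap L^2(0,T_{\max};V)$. Inserting this into the weak form yields $\|\partial_t\vec M\|_{V'}\le C(\|\vec M\|_V\|\vec M\|_H+\|\vec M\|_V+\|\vec M\|_H+1)\in L^2(0,T_{\max})$, so $\vec M\in X_{T_{\max}}\hookrightarrow C([0,T_{\max}];H)$ by \Cref{lem:LM}. Hence the limit $\vec M_\ast=\lim_{t\uparrow T_{\max}}\vec M(t)$ exists in $H$, and one application of \Cref{thm:local} at $T_{\max}$ with datum $\vec M_\ast$ contradicts maximality; existence for a single datum suffices.

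If you prefer to keep your iteration, close Step 1 by exploiting the smoothing of $\mathcal T_a$ for fixed $a>0$. Since $|K_a|\le a^{-3}$ and $\Omega$ is bounded, $\|\mathcal T_a[\vec Z]\|_{L^\infty}\le C_a\|\vec Z\|_{L^2}$ (alternatively, use $\|\mathcal T_a[\vec Z]\|_{H^1}\le C_a\|\vec Z\|_{L^2}$ from the proof of \Cref{prop:Ta_bdd}). This gives $\|\mathcal N(\vec Z)\|_{V'}\le C(\|\vec Z\|_H^2+\|\vec Z\|_H)$ and $\|\mathcal N(\vec Z_1)-\mathcal N(\vec Z_2)\|_{V'}\le C(1+\|\vec Z_1\|_H+\|\vec Z_2\|_H)\|\vec Z_1-\vec Z_2\|_H$. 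In $L^2(0,T;V')$ both bounds carry a factor $T^{1/2}$ with constants depending only on $C([0,T];H)$ norms, so the contraction time depends only on $\|\vec M_{\mathrm{init}}\|_H$. That repaired route buys a quantitative step count and a true blow-up alternative, at the cost of re-proving the local theorem. The paper's route uses the local theorem as a black box but needs the extra $\partial_t\vec M$ estimate up to $T_{\max}$.
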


\begin{proof}
Let $[0,T_{\max})$ denote the maximal interval of existence of the unique
weak solution furnished by \Cref{thm:local}. We prove that
$T_{\max}=\infty$. Suppose, for contradiction, that $T_{\max}<\infty$.

Under the hypotheses of the corollary, the advection contribution is either
absent or energy-neutral. Therefore the weak solution satisfies the same
a priori estimate as in \eqref{eq:energy_inequality}, with the transport
terms omitted. More precisely, this estimate follows by the standard
regularization and density argument for weak solutions: one first derives
the energy identity for sufficiently smooth approximants and then passes to
the limit using weak lower semicontinuity. Hence, for almost every
$t\in(0,T_{\max})$,
\begin{multline}
\frac{\mathrm d}{\mathrm dt}\,
\frac12\|\vec M(t)\|_H^2
+
\int_\Omega
D(\mathbf r)\,
|\nabla \vec M(t)|^2\,
\mathrm d^3\mathbf r
\\
+
\int_\Omega
\frac{M_1(t)^2+M_2(t)^2}{T_2(\mathbf r)}\,
\mathrm d^3\mathbf r
+
\frac12
\int_\Omega
\frac{M_3(t)^2}{T_1(\mathbf r)}\,
\mathrm d^3\mathbf r \\
\le
\frac12
\int_\Omega
\frac{M_0(\mathbf r)^2}{T_1(\mathbf r)}\,
\mathrm d^3\mathbf r .
\label{eq:global_energy_start}
\end{multline}
Integrating \eqref{eq:global_energy_start} from $0$ to $\tau<T_{\max}$
yields
\begin{multline}
\frac12\|\vec M(\tau)\|_H^2
+
\int_0^\tau\!\!\int_\Omega
D(\mathbf r)\,
|\nabla \vec M(t)|^2\,
\mathrm d^3\mathbf r\,\mathrm dt
\\
+
\int_0^\tau\!\!\int_\Omega
\frac{M_1(t)^2+M_2(t)^2}{T_2(\mathbf r)}\,
\mathrm d^3\mathbf r\,\mathrm dt \\
+
\frac12
\int_0^\tau\!\!\int_\Omega
\frac{M_3(t)^2}{T_1(\mathbf r)}\,
\mathrm d^3\mathbf r\,\mathrm dt
\\
\le
\frac12\|\vec M_{\mathrm{init}}\|_H^2
+
\frac{\tau}{2}
\int_\Omega
\frac{M_0(\mathbf r)^2}{T_1(\mathbf r)}\,
\mathrm d^3\mathbf r .
\label{eq:global_energy_integrated}
\end{multline}
It follows that
\begin{equation}
\sup_{0\le t<T_{\max}}\|\vec M(t)\|_H^2
\le
C_{T_{\max}},
\label{eq:global_H_bound}
\end{equation}
where $C_{T_{\max}}$ depends only on $T_{\max}$,
$\|\vec M_{\mathrm{init}}\|_H$, and
$\|M_0/\sqrt{T_1}\|_{L^2(\Omega)}$. Since
$D(\mathbf r)\ge D_{\min}>0$, \eqref{eq:global_energy_integrated} also
implies
\begin{equation}
\int_0^{T_{\max}}
\|\nabla \vec M(t)\|_{L^2(\Omega)}^2\,
\mathrm dt
\le
C_{T_{\max}}.
\label{eq:global_grad_bound}
\end{equation}
Combining \eqref{eq:global_H_bound} and \eqref{eq:global_grad_bound}, we
obtain
\begin{equation}
\vec M\in
L^\infty(0,T_{\max};H)\cap L^2(0,T_{\max};V).
\label{eq:global_X_part1}
\end{equation}
We next estimate $\partial_t\vec M$ in $L^2(0,T_{\max};V')$. Let
$\vec\Phi\in V$. Using the weak formulation \eqref{eq:weak_bloch}, we have,
for almost every $t\in(0,T_{\max})$,
\begin{align}
\left|
\langle \partial_t \vec M(t),\vec\Phi\rangle_{V',V}
\right|
&\le
|\gamma|
\left|
\int_\Omega
\bigl(\vec M(t)\times \mathcal T_a[\vec M(t)]\bigr)\cdot \vec\Phi\,
\mathrm d^3\mathbf r
\right|
\notag\\
&
+
|\gamma|
\left|
\int_\Omega
\bigl(\vec M(t)\times \delta\vec B\bigr)\cdot \vec\Phi\,
\mathrm d^3\mathbf r
\right|
\notag\\
&
+
\left|
\int_\Omega
D(\mathbf r)\,
\nabla \vec M(t):\nabla \vec\Phi\,
\mathrm d^3\mathbf r
\right|
\notag\\
&
+
\Biggl|
\int_\Omega
\Bigl(
\frac{M_1(t)\Phi_1+M_2(t)\Phi_2}{T_2(\mathbf r)}  \nonumber \\
& \quad +
\frac{M_3(t)\Phi_3}{T_1(\mathbf r)}
\Bigr)
\mathrm d^3\mathbf r
\Biggr|
\notag \\
& 
+
\left|
\sum_{i=1}^3 a_{\mathrm{adv}}(M_i(t),\Phi_i)
\right|  \nonumber \\
& \quad 
+
\left|
\int_\Omega
\frac{M_0(\mathbf r)}{T_1(\mathbf r)}\,
\Phi_3\,
\mathrm d^3\mathbf r
\right|.
\label{eq:global_dt_start}
\end{align}
Each term on the right-hand side is bounded by a multiple of
$\|\vec\Phi\|_V$. For the DDF term, Sobolev embedding,
$V\hookrightarrow L^3(\Omega)^3$, and boundedness of $\mathcal T_a$ on
$L^2(\Omega)^3$ yield
\begin{align}
\left|
\int_\Omega
\bigl(\vec M\times \mathcal T_a[\vec M]\bigr)\cdot \vec\Phi\,
\mathrm d^3\mathbf r
\right|
&\le
\|\vec M\|_{L^6(\Omega)}
\|\mathcal T_a[\vec M]\|_{L^2(\Omega)}
\|\vec\Phi\|_{L^3(\Omega)}
\notag\\
&\le
C\,
\|\vec M\|_V\,
\|\vec M\|_H\,
\|\vec\Phi\|_V .
\label{eq:global_dt_ddf}
\end{align}
For the offset-field term,
\begin{multline}
\left|
\int_\Omega
\bigl(\vec M\times \delta\vec B\bigr)\cdot \vec\Phi\,
\mathrm d^3\mathbf r
\right|
\le
\|\delta\vec B\|_{L^\infty(\Omega)}
\|\vec M\|_H
\|\vec\Phi\|_H \\
\le
C\,
\|\vec M\|_H
\|\vec\Phi\|_V .
\label{eq:global_dt_offset}
\end{multline}
For diffusion,
\begin{equation}
\left|
\int_\Omega
D(\mathbf r)\,
\nabla \vec M:\nabla \vec\Phi\,
\mathrm d^3\mathbf r
\right|
\le
\|D\|_{L^\infty(\Omega)}
\|\vec M\|_V
\|\vec\Phi\|_V .
\label{eq:global_dt_diff}
\end{equation}
For relaxation,
\begin{align}
\left|
\int_\Omega
\left(
\frac{M_1\Phi_1+M_2\Phi_2}{T_2(\mathbf r)}
+
\frac{M_3\Phi_3}{T_1(\mathbf r)}
\right)
\mathrm d^3\mathbf r
\right|
&\le
C\,
\|\vec M\|_H
\|\vec\Phi\|_H
\notag\\
&\le
C\,
\|\vec M\|_H
\|\vec\Phi\|_V .
\label{eq:global_dt_relax}
\end{align}
If $\vec v\equiv 0$, the advection term is absent. If advection is imposed
in the skew form under \eqref{eq:transport_assumptions}, then
\begin{equation}
|a_{\mathrm{adv}}(M_i,\Phi_i)|
\le
C\,
\|M_i\|_{H^1(\Omega)}
\|\Phi_i\|_{H^1(\Omega)},
\quad i=1,2,3,
\label{eq:global_dt_adv}
\end{equation}
so the total advection contribution is bounded by
$C\|\vec M\|_V\|\vec\Phi\|_V$. Finally,
\begin{equation}
\left|
\int_\Omega
\frac{M_0(\mathbf r)}{T_1(\mathbf r)}\,
\Phi_3\,
\mathrm d^3\mathbf r
\right|
\le
\left\|
\frac{M_0}{T_1}
\right\|_{L^2(\Omega)}
\|\vec\Phi\|_H
\le
C\|\vec\Phi\|_V .
\label{eq:global_dt_source}
\end{equation}
Combining
\eqref{eq:global_dt_start}--\eqref{eq:global_dt_source}
and taking the supremum over $\vec\Phi\in V$ with $\|\vec\Phi\|_V=1$, we
obtain
\begin{equation}
\|\partial_t \vec M(t)\|_{V'}
\le
C\Bigl(
\|\vec M(t)\|_V\|\vec M(t)\|_H
+
\|\vec M(t)\|_V
+
\|\vec M(t)\|_H
+
1
\Bigr)
\label{eq:global_dt_bound_pointwise}
\end{equation}
for almost every $t\in(0,T_{\max})$. By \eqref{eq:global_X_part1}, the
right-hand side belongs to $L^2(0,T_{\max})$. Therefore
\begin{equation}
\partial_t \vec M\in L^2(0,T_{\max};V').
\label{eq:global_dt_bound}
\end{equation}
Together with \eqref{eq:global_X_part1}, this gives
\begin{equation}
\vec M\in
L^2(0,T_{\max};V)\cap H^1(0,T_{\max};V').
\label{eq:global_in_X}
\end{equation}
By \Cref{lem:LM}, $\vec M$ admits an $H$-continuous representative on
$[0,T_{\max}]$. In particular, the one-sided limit
\begin{equation}
\vec M_\ast
:=
\lim_{t\uparrow T_{\max}} \vec M(t)
\end{equation}
exists in $H$.

We now restart the local theory at time $T_{\max}$. Since the equation is
autonomous and the coefficients are time-independent, the same argument as
in \Cref{thm:local} applies with initial time $T_{\max}$ and initial datum
$\vec M_\ast$. Therefore there exist $\varepsilon>0$ and a weak solution
\begin{multline*}
\widetilde{\vec M}\in
L^2(T_{\max},T_{\max}+\varepsilon;V)
\cap
H^1(T_{\max},T_{\max}+\varepsilon;V') \\
\cap
C([T_{\max},T_{\max}+\varepsilon];H)
\end{multline*}
such that
\begin{equation}
\widetilde{\vec M}(T_{\max})=\vec M_\ast .
\end{equation}
Define
\begin{equation}
\widehat{\vec M}(t):=
\begin{cases}
\vec M(t), & 0\le t\le T_{\max},\\
\widetilde{\vec M}(t), & T_{\max}\le t\le T_{\max}+\varepsilon.
\end{cases}
\end{equation}
Because both pieces satisfy the weak formulation on their respective time
intervals and coincide at $t=T_{\max}$ in $H$, the function
$\widehat{\vec M}$ is a weak solution on $[0,T_{\max}+\varepsilon]$. This
contradicts the maximality of $T_{\max}$.

Hence $T_{\max}=\infty$. Therefore the weak solution extends to $[0,T]$ for
every finite $T>0$.
\end{proof}

\begin{lemma}[Precession is $L^2$-neutral]\label{lem:neutral}
For any $\vec M,\vec B\in (L^2(\Omega))^3$,
\begin{equation}
\int_\Omega
\vec M\cdot\bigl(\vec M\times \vec B\bigr)\,
\mathrm d^3\mathbf r
=0.
\end{equation}
\end{lemma}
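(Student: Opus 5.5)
The plan is to reduce the statement to the elementary pointwise identity $\vec a\cdot(\vec a\times\vec b)=0$ for $\vec a,\vec b\in\mathbb R^3$, which was already used in the proof of \Cref{prop:energy}. The proof then consists of two steps: first check that the integrand is integrable, so that the integral is meaningful; then observe that the integrand vanishes almost everywhere.

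\emph{Pointwise identity.} For fixed $\mathbf r$, the scalar triple product $\vec a\cdot(\vec a\times\vec b)=\det(\vec a,\vec a,\vec b)$ has two equal columns, so it is zero. Equivalently, $\epsilon_{ijk}a_ia_jb_k=0$ because $\epsilon_{ijk}$ is antisymmetric in $i,j$ while $a_ia_j$ is symmetric. Applying this with $\vec a=\vec M(\mathbf r)$ and $\vec b=\vec B(\mathbf r)$ at every point where both are defined, i.e.\ almost everywhere, gives an integrand that is zero a.e. Hence its Lebesgue integral is zero.

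\emph{Integrability (the only real obstacle).} The integrand is cubic in the fields, $|\vec M\cdot(\vec M\times\vec B)|\le|\vec M|^2|\vec B|$. This bound is not integrable for general $\vec M,\vec B\in L^2$, so the integral need not be defined as an absolutely convergent Lebesgue integral. I would handle this in one of two ways. The first option is to observe that the integrand is itself identically zero a.e. as a measurable function, so $\int_\Omega 0=0$ regardless of any bound on $|\vec M|^2|\vec B|$; the product of measurable functions is measurable, so the integral is well defined and zero. The second option, if the lemma is to be read in the setting where it is applied, is to note the relevant integrability there. For $\vec M\in V\hookrightarrow L^6$ and $\vec B=\mathcal T_a[\vec M]+\delta\vec B\in L^2$ by \Cref{prop:Ta_bdd}, Hölder with exponents $(6,3,2)$ and $L^6\subset L^3$ on bounded $\Omega$ give absolute integrability. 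I will state the first argument as the proof and add the second as a remark, matching the usage in \Cref{prop:contdep} and the energy estimate.
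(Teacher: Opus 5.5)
Your proposal is correct and follows the paper's proof: the integrand is measurable as a polynomial in the components, the pointwise identity $\vec a\cdot(\vec a\times\vec b)=0$ makes it vanish almost everywhere, and so its integral is zero. You also note explicitly that $|\vec M|^2|\vec B|$ need not be integrable for general $L^2$ data, and that this does not matter because the integrand is zero a.e.; the paper's proof leaves this point implicit.
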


\begin{proof}
Since $\vec M,\vec B\in (L^2(\Omega))^3$, all component functions are
measurable, and therefore
\begin{equation}
\mathbf r\mapsto
\vec M(\mathbf r)\cdot
\bigl(\vec M(\mathbf r)\times \vec B(\mathbf r)\bigr)
\end{equation}
is measurable on $\Omega$, being a polynomial expression in the components
of $\vec M$ and $\vec B$.

For almost every $\mathbf r\in\Omega$, the vectors
$\vec M(\mathbf r)$ and $\vec B(\mathbf r)$ are defined, and the algebraic
identity
\begin{equation}
\vec a\cdot(\vec a\times \vec b)=0
\quad
\text{for all } \vec a,\vec b\in\mathbb R^3
\end{equation}
gives
\begin{equation}
\vec M(\mathbf r)\cdot
\bigl(\vec M(\mathbf r)\times \vec B(\mathbf r)\bigr)
=0
\quad
\text{for a.e. } \mathbf r\in\Omega.
\end{equation}
Hence the integrand vanishes almost everywhere, so its integral is zero.
\end{proof}

\begin{lemma}[Lipschitz estimate for the precession nonlinearity]\label{lem:lipschitz}
Define
\begin{equation}
\mathcal N(\vec M)
=
\vec M\times
\Bigl(
\mathcal T_a[\vec M]+\delta\vec B
\Bigr).
\end{equation}
There exists a constant $C>0$, depending only on $a$, $\Omega$, and
$\|\delta\vec B\|_{L^\infty(\Omega)}$, such that for all
$\vec M,\vec N\in H^1(\Omega)^3$,
\begin{multline}
\bigl\|
\mathcal N(\vec M)-\mathcal N(\vec N)
\bigr\|_{H^{-1}(\Omega)^3}
\\
\le
C\Bigl(
\|\vec M\|_{H^1(\Omega)}
+
\|\vec N\|_{H^1(\Omega)}
+
1
\Bigr)
\|\vec M-\vec N\|_{H^1(\Omega)} .
\label{eq:lipschitz_estimate}
\end{multline}
\end{lemma}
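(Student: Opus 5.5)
We argue by duality: for a test field $\vec\Phi\in H^1(\Omega)^3$ with $\|\vec\Phi\|_{H^1}\le 1$ we bound $\int_\Omega(\mathcal N(\vec M)-\mathcal N(\vec N))\cdot\vec\Phi\,\mathrm d^3\mathbf r$ and then take the supremum. Here $H^{-1}(\Omega)^3$ is understood as $V'$ in the Gelfand triple of \Cref{sec:wellposed}, and the norm is realized through the $L^2$ pairing. Set $\vec W=\vec M-\vec N$. Since $\mathcal T_a$ is linear, the difference splits algebraically as
\begin{equation*}
\mathcal N(\vec M)-\mathcal N(\vec N)
=\vec W\times\bigl(\mathcal T_a[\vec M]+\delta\vec B\bigr)+\vec N\times\mathcal T_a[\vec W].
\end{equation*}
This is the same decomposition used in \eqref{eq:contdep_cross_decomp}. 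The cross product satisfies the pointwise bound $|\vec a\times\vec b|\le|\vec a||\vec b|$, so the problem reduces to three trilinear or bilinear integrals, each controlled by H\"older's inequality together with the Sobolev embedding $H^1(\Omega)\hookrightarrow L^6(\Omega)$. This embedding holds because $\Omega$ is a bounded Lipschitz domain in $\mathbb R^3$, and it also gives $H^1\hookrightarrow L^3$.

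For the offset term we estimate
\begin{equation*}
\Bigl|\int_\Omega(\vec W\times\delta\vec B)\cdot\vec\Phi\Bigr|\le\|\delta\vec B\|_{L^\infty}\|\vec W\|_{L^2}\|\vec\Phi\|_{L^2}\le C\|\vec W\|_{H^1}\|\vec\Phi\|_{H^1}.
\end{equation*}
This produces the ``$+1$'' in \eqref{eq:lipschitz_estimate}.

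For the term containing $\mathcal T_a[\vec M]$, we use the exponent triple $(6,2,3)$:
\begin{equation*}
\Bigl|\int_\Omega(\vec W\times\mathcal T_a[\vec M])\cdot\vec\Phi\Bigr|\le\|\vec W\|_{L^6}\|\mathcal T_a[\vec M]\|_{L^2}\|\vec\Phi\|_{L^3}\le C\|\vec W\|_{H^1}\|\vec M\|_{L^2}\|\vec\Phi\|_{H^1}.
\end{equation*}
Here we invoke \Cref{prop:Ta_bdd} with $p=2$. The term $\vec N\times\mathcal T_a[\vec W]$ is handled in the same way, giving the bound $C\|\vec N\|_{H^1}\|\vec W\|_{L^2}\|\vec\Phi\|_{H^1}$. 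Alternatively, one can place $\mathcal T_a[\vec W]$ in $L^\infty$ using the $p=\infty$ case of \Cref{prop:Ta_bdd}; this is not needed, but it is available.

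Adding the three bounds and using $\|\cdot\|_{L^2}\le\|\cdot\|_{H^1}$, we obtain
\begin{equation*}
C\bigl(\|\vec M\|_{H^1}+\|\vec N\|_{H^1}+\|\delta\vec B\|_{L^\infty}\bigr)\|\vec W\|_{H^1}\|\vec\Phi\|_{H^1}.
\end{equation*}
Taking the supremum over $\vec\Phi$ yields \eqref{eq:lipschitz_estimate}. The resulting constant depends only on the Sobolev constants of $\Omega$, on $C_a$ from \Cref{prop:Ta_bdd} (hence on $a$ and $\Omega$), and on $\|\delta\vec B\|_{L^\infty}$, as the statement requires.

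The only delicate point is the choice of H\"older exponents in the nonlocal terms, since the product of two $H^1$ fields and a test function must remain integrable in three dimensions. The triple $(6,2,3)$ closes the estimate precisely because $\mathcal T_a$ maps $L^2$ to $L^2$; without the regularization $a>0$ this step would fail. I expect no further obstacle, and none of the estimates requires the diffusion or relaxation structure.
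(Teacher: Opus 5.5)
Your proof is correct and follows the paper's structure: duality against $H^1$ test fields, the splitting $\mathcal N(\vec M)-\mathcal N(\vec N)=\vec W\times(\mathcal T_a[\vec M]+\delta\vec B)+\vec N\times\mathcal T_a[\vec W]$, and H\"older's inequality with $H^1\hookrightarrow L^6$. The difference is that you swap the roles of $L^2$ and $L^3$.

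The paper uses the triple $(6,3,2)$. It puts $\mathcal T_a[\cdot]$ in $L^3$, using the $p=3$ case of \Cref{prop:Ta_bdd} and $H^1\hookrightarrow L^3$ on its argument. The test function then enters only through $\|\vec v\|_{L^2}$, so the paper's argument actually bounds $\mathcal N(\vec M)-\mathcal N(\vec N)$ in $L^2(\Omega)^3$, which is stronger than the stated $H^{-1}$ estimate.

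Your triple $(6,2,3)$ needs only the $L^2$-boundedness of $\mathcal T_a$ and gives only a dual-space bound. In exchange, the right-hand side involves weaker norms: $C\bigl(\|\vec M\|_{L^2}\|\vec W\|_{H^1}+(1+\|\vec N\|_{H^1})\|\vec W\|_{L^2}\bigr)$. This is the same placement the paper itself uses for the $\vec Z^{(2)}\times\mathcal T_a[\vec W]$ term in the proof of \Cref{thm:local}.

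Two of your side remarks are inaccurate, although the proof does not rely on either.
\begin{itemize}
\item The $p=\infty$ case of \Cref{prop:Ta_bdd} bounds $\|\mathcal T_a[\vec W]\|_{L^\infty}$ by $\|\vec W\|_{L^\infty}$, and $\|\vec W\|_{H^1}$ does not control $\|\vec W\|_{L^\infty}$ in three dimensions. An $L^\infty$ bound in terms of $\|\vec W\|_{L^2}$ has to come instead from $|K_a|\le a^{-3}$ and Cauchy--Schwarz.
\item Your $(6,2,3)$ step would not fail at $a=0$. The secular kernel is homogeneous of degree $-3$, smooth on the sphere, and has zero spherical mean by \eqref{eq:angular_mean_zero}. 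Its principal-value operator is therefore a Calder\'on--Zygmund operator, which is bounded on $L^2$. What $a>0$ actually buys is the elementary Young-inequality proof of \Cref{prop:Ta_bdd}, together with the $L^1$, $L^\infty$ and $L^2\to H^1$ bounds. The $L^2$ bound you use does not depend on it.
\end{itemize}
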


\begin{proof}
Let
\begin{equation}
\vec W:=\vec M-\vec N.
\end{equation}
Then
\begin{align}
\mathcal N(\vec M)-\mathcal N(\vec N)
&=
\vec W\times
\Bigl(
\mathcal T_a[\vec M]+\delta\vec B
\Bigr)
+
\vec N\times \mathcal T_a[\vec W].
\label{eq:N_diff_decomp}
\end{align}
We estimate the $H^{-1}(\Omega)^3$ norm through the duality with
$H^1(\Omega)^3$:
\begin{equation}
\|\vec F\|_{H^{-1}(\Omega)^3}
=
\sup_{
\substack{
\vec v\in H^1(\Omega)^3\\
\|\vec v\|_{H^1(\Omega)}=1
}}
\left|
\int_\Omega
\vec F\cdot \vec v\,
\mathrm d^3\mathbf r
\right|.
\label{eq:Hminus1_dual_norm}
\end{equation}
Fix $\vec v\in H^1(\Omega)^3$ with
$\|\vec v\|_{H^1(\Omega)}=1$.

We begin with the first term in \eqref{eq:N_diff_decomp}. By the pointwise
bound
\begin{equation}
|(\vec a\times \vec b)\cdot \vec c|
\le
|\vec a|\,|\vec b|\,|\vec c|,
\end{equation}
followed by H\"older's inequality with exponents $(6,3,2)$, we obtain
\begin{align}
&
\left|
\int_\Omega
\Bigl(
\vec W\times
\bigl(
\mathcal T_a[\vec M]+\delta\vec B
\bigr)
\Bigr)\cdot \vec v\,
\mathrm d^3\mathbf r
\right|
\notag\\
&\quad
\le
\|\vec W\|_{L^6(\Omega)}
\Bigl(
\|\mathcal T_a[\vec M]\|_{L^3(\Omega)}
+
\|\delta\vec B\|_{L^3(\Omega)}
\Bigr)
\|\vec v\|_{L^2(\Omega)}.
\label{eq:term1_start}
\end{align}
Since $\Omega$ is bounded and Lipschitz,
\begin{align}
\|\vec W\|_{L^6(\Omega)}
&\le
C\|\vec W\|_{H^1(\Omega)},
\\
\|\vec v\|_{L^2(\Omega)}
&\le
C\|\vec v\|_{H^1(\Omega)}
=
C.
\end{align}
Also, by \Cref{prop:Ta_bdd},
\begin{equation}
\|\mathcal T_a[\vec M]\|_{L^3(\Omega)}
\le
C\|\vec M\|_{L^3(\Omega)}
\le
C\|\vec M\|_{H^1(\Omega)},
\end{equation}
and since $\delta\vec B\in L^\infty(\Omega)^3$ and $\Omega$ is bounded,
\begin{equation}
\|\delta\vec B\|_{L^3(\Omega)}
\le
|\Omega|^{1/3}
\|\delta\vec B\|_{L^\infty(\Omega)}.
\end{equation}
Substituting these bounds into \eqref{eq:term1_start} gives
\begin{multline}
\left|
\int_\Omega
\Bigl(
\vec W\times
\bigl(
\mathcal T_a[\vec M]+\delta\vec B
\bigr)
\Bigr)\cdot \vec v\,
\mathrm d^3\mathbf r
\right|
\\
\le
C\Bigl(
\|\vec M\|_{H^1(\Omega)}+1
\Bigr)
\|\vec W\|_{H^1(\Omega)}.
\label{eq:term1_bound}
\end{multline}
For the second term in \eqref{eq:N_diff_decomp}, the same pointwise bound
and the same H\"older exponents give
\begin{align}
&
\left|
\int_\Omega
\bigl(
\vec N\times \mathcal T_a[\vec W]
\bigr)\cdot \vec v\,
\mathrm d^3\mathbf r
\right|
\notag\\
&\quad
\le
\|\vec N\|_{L^6(\Omega)}
\|\mathcal T_a[\vec W]\|_{L^3(\Omega)}
\|\vec v\|_{L^2(\Omega)}.
\label{eq:term2_start}
\end{align}
Using again the Sobolev embedding
$H^1(\Omega)\hookrightarrow L^6(\Omega)$,
the boundedness of $\mathcal T_a$ on $L^3(\Omega)^3$, and the embedding
$H^1(\Omega)\hookrightarrow L^3(\Omega)$, we obtain
\begin{align}
\|\vec N\|_{L^6(\Omega)}
&\le
C\|\vec N\|_{H^1(\Omega)},
\\
\|\mathcal T_a[\vec W]\|_{L^3(\Omega)}
&\le
C\|\vec W\|_{L^3(\Omega)}
\le
C\|\vec W\|_{H^1(\Omega)},
\\
\|\vec v\|_{L^2(\Omega)}
&\le
C.
\end{align}
Therefore
\begin{multline}
\left|
\int_\Omega
\bigl(
\vec N\times \mathcal T_a[\vec W]
\bigr)\cdot \vec v\,
\mathrm d^3\mathbf r
\right|
\\
\le
C
\|\vec N\|_{H^1(\Omega)}
\|\vec W\|_{H^1(\Omega)}.
\label{eq:term2_bound}
\end{multline}
Combining \eqref{eq:term1_bound} and \eqref{eq:term2_bound}, we find
\begin{multline}
\left|
\int_\Omega
\bigl(
\mathcal N(\vec M)-\mathcal N(\vec N)
\bigr)\cdot \vec v\,
\mathrm d^3\mathbf r
\right|
\\
\le
C\Bigl(
\|\vec M\|_{H^1(\Omega)}
+
\|\vec N\|_{H^1(\Omega)}
+
1
\Bigr)
\|\vec M-\vec N\|_{H^1(\Omega)}.
\end{multline}
Taking the supremum over all
$\vec v\in H^1(\Omega)^3$ with $\|\vec v\|_{H^1(\Omega)}=1$ and using
\eqref{eq:Hminus1_dual_norm} proves \eqref{eq:lipschitz_estimate}.
\end{proof}

\begin{lemma}[Time continuity]\label{lem:LM}
If $\vec M\in X$, then $\vec M\in C([0,T];H)$.
\end{lemma}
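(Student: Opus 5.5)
This is the Lions--Magenes lemma for the Gelfand triple $V\hookrightarrow H\cong H'\hookrightarrow V'$ with $V=(H^1(\Omega))^3$, $H=(L^2(\Omega))^3$. We prove it by density and a Cauchy argument. The precise claim is that $\vec M$ has a representative in $C([0,T];H)$. We will also obtain the identity $\tfrac{\mathrm d}{\mathrm dt}\|\vec M\|_H^2=2\langle\partial_t\vec M,\vec M\rangle_{V',V}$, which is what is used later, for instance in \Cref{prop:contdep}.

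\emph{Step 1: density.} Since $V\hookrightarrow H$ densely and $\Omega$ is Lipschitz, we approximate $\vec M$ by smooth-in-time functions $\vec M_k\in C^1([0,T];V)$. These converge to $\vec M$ in $L^2(0,T;V)$, with $\partial_t\vec M_k\to\partial_t\vec M$ in $L^2(0,T;V')$. To construct them, extend $\vec M$ to a slightly larger interval by reflection across $t=0$ and $t=T$; reflection preserves both $L^2(V)$ and $H^1(V')$ regularity. Then mollify in time. The $V'$-valued derivative of the mollification is the mollification of the derivative, and this is exactly where $H^1(0,T;V')$ enters.

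\emph{Step 2: energy identity for smooth functions.} For smooth $\vec U\in C^1([0,T];V)$ we have
\[
\|\vec U(t)\|_H^2-\|\vec U(s)\|_H^2=2\int_s^t\langle\partial_t\vec U(\tau),\vec U(\tau)\rangle_{V',V}\,\mathrm d\tau .
\]
This holds because on $V$ the $V'$--$V$ pairing coincides with the $H$ inner product. Apply it to $\vec U=\vec M_k-\vec M_l$ and average over $s\in(0,T)$. This gives
\[
\sup_t\|\vec U(t)\|_H^2\le \tfrac1T\|\vec U\|_{L^2(H)}^2+2\|\partial_t\vec U\|_{L^2(V')}\|\vec U\|_{L^2(V)} .
\]
Hence $(\vec M_k)$ is Cauchy in $C([0,T];H)$.

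\emph{Step 3: conclusion.} The uniform limit is a continuous $H$-valued function. It agrees with $\vec M$ almost everywhere, because $\vec M_k\to\vec M$ in $L^2(0,T;H)$. Passing to the limit in the Step 2 identity gives the energy equality for $\vec M$ itself.

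The main obstacle is Step 1, specifically handling the endpoints so that the approximants converge on the closed interval $[0,T]$. The reflection extension handles this. Alternatively, one can use a smooth partition of unity in time together with one-sided mollifiers, or simply cite the standard result (Lions--Magenes; Evans, \S5.9, Theorem 3), which applies verbatim with $H^1_0$ replaced by $H^1(\Omega)$, since only the Gelfand-triple structure is used.
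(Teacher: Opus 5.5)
Your proof is correct and rests on the same fact as the paper's proof, which simply cites the classical Lions--Magenes embedding $L^2(0,T;V)\cap H^1(0,T;V')\hookrightarrow C([0,T];H)$ for the Gelfand triple; you instead supply the standard proof of that embedding (reflection and time-mollification, the averaged bound $\sup_t\|\vec U(t)\|_H^2\le T^{-1}\|\vec U\|_{L^2(H)}^2+2\|\partial_t\vec U\|_{L^2(V')}\|\vec U\|_{L^2(V)}$, and a Cauchy argument in $C([0,T];H)$), with $V'$ correctly read as the dual of $(H^1(\Omega))^3$. Your version also establishes the identity $\frac{\mathrm d}{\mathrm dt}\|\vec M\|_H^2=2\langle\partial_t\vec M,\vec M\rangle_{V',V}$, which the paper uses in \Cref{prop:contdep} but does not include in the lemma's statement.
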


\begin{proof}
Recall that
\begin{equation}
V=H^1(\Omega)^3,
\quad
H=L^2(\Omega)^3,
\quad
V'=H^{-1}(\Omega)^3,
\end{equation}
and that $V\hookrightarrow H$ is continuous and dense, while
$H\hookrightarrow V'$ continuously via the canonical identification of
$H$ with a subspace of $V'$. Hence
\begin{equation}
V\hookrightarrow H\hookrightarrow V'
\end{equation}
is a Gelfand triple.

By definition,
\begin{equation}
X
=
L^2(0,T;V)\cap H^1(0,T;V').
\end{equation}
The classical Lions--Magenes continuity theorem for Hilbert triples states
that
\begin{equation}
L^2(0,T;V)\cap H^1(0,T;V')
\hookrightarrow
C([0,T];H)
\end{equation}
continuously. Therefore every $\vec M\in X$ admits an
$H$-continuous representative on $[0,T]$. In particular,
\begin{equation}
\vec M\in C([0,T];H).
\end{equation}
\end{proof}

\begin{theorem}[Local existence and uniqueness]\label{thm:local}
Assume \Cref{sec:assumptions} and fix $a>0$. For any
$\vec M_{\mathrm{init}}\in H$ there exists $T^\ast>0$ and a unique weak
solution
\begin{equation}
\vec M\in X\cap C([0,T^\ast];H)
\end{equation}
on $[0,T^\ast]$.
\end{theorem}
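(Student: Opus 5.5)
I will prove \Cref{thm:local} by a Faedo--Galerkin argument combined with the a priori energy estimate and the compactness theorem of Aubin--Lions; uniqueness then follows from the Gr\"onwall argument already carried out in \Cref{prop:contdep}.

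\textbf{Galerkin approximations.} First, take the eigenfunctions $\{\psi_n\}$ of the Neumann operator $-\nabla\cdot(D\nabla\cdot)+I$ on $\Omega$. They form an orthonormal basis of $L^2(\Omega)$ and an orthogonal basis of $H^1(\Omega)$. Let $V_N$ be the span of the first $N$ of them, used componentwise, and let $P_N$ be the $L^2$-orthogonal projection, which is also stable in $H^1$. Seek $\vec M_N(t)\in V_N^3$ satisfying \eqref{eq:weak_bloch} for all test functions in $V_N^3$, with $\vec M_N(0)=P_N\vec M_{\mathrm{init}}$.

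This is an ODE system for the coefficients with a quadratic right-hand side. It is locally Lipschitz, since by \Cref{prop:Ta_bdd} and \Cref{lem:lipschitz} all norms are equivalent on $V_N$. Picard's theorem therefore gives a local solution.

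Testing with $\vec M_N$ itself and using \Cref{lem:neutral}, precession drops out. The skew advection term vanishes; in the non-skew case I control it as in \eqref{eq:contdep_adv_ibp}--\eqref{eq:contdep_adv_boundary}. This reproduces \eqref{eq:energy_inequality} at the discrete level and, via Gr\"onwall, a bound on $\|\vec M_N\|_{L^\infty(0,T^\ast;H)}$ and $\|\vec M_N\|_{L^2(0,T^\ast;V)}$ that is uniform in $N$. Any $T^\ast$ works in the energy-neutral case; in general a fixed $T^\ast$ is enough for the local statement. These bounds extend the Galerkin solutions to $[0,T^\ast]$.

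\textbf{Time-derivative bound and passage to the limit.} Next, bound $\partial_t\vec M_N$ in $L^2(0,T^\ast;V')$ exactly as in \eqref{eq:global_dt_start}--\eqref{eq:global_dt_bound_pointwise}. The DDF term is controlled by $C\|\vec M_N\|_V\|\vec M_N\|_H$, which lies in $L^2$ in time because of the $L^\infty(H)$ bound. Because the $H^1$-stability of $P_N$ is what allows testing against $P_N\vec\Phi$, the choice of Neumann eigenbasis is essential here.

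Aubin--Lions, using the compact embedding $V\hookrightarrow H$, then gives a subsequence with $\vec M_N\to\vec M$ weakly in $L^2(V)$, weakly in $H^1(V')$, and strongly in $L^2(0,T^\ast;H)$.

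\textbf{Main obstacle: the nonlinear term.} The main difficulty is passing to the limit in $\int_0^{T^\ast}\!\int_\Omega(\vec M_N\times\mathcal T_a[\vec M_N])\cdot\vec\Phi\,\eta(t)$ for fixed $\vec\Phi\in V_K$ and $\eta\in C_c^\infty(0,T^\ast)$. I will write the difference as
\[
(\vec M_N-\vec M)\times\mathcal T_a[\vec M_N]+\vec M\times\mathcal T_a[\vec M_N-\vec M].
\]
Use H\"older with exponents $(2,3,6)$. Then the strong $L^2(H)$ convergence of $\vec M_N$, boundedness of $\mathcal T_a$ from $L^2$ to $H^1\hookrightarrow L^3$ (from \eqref{eq:Ta_L2_to_H1}), and $\vec\Phi\in L^6$ give convergence of each piece. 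The linear terms pass to the limit by weak convergence. Density of $\bigcup_K V_K$ in $V$ then yields \eqref{eq:weak_bloch}.

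\textbf{Initial datum, continuity and uniqueness.} \Cref{lem:LM} gives $\vec M\in C([0,T^\ast];H)$. The initial condition is identified by the standard integration by parts in time against $\eta$ with $\eta(0)\ne0$, together with $P_N\vec M_{\mathrm{init}}\to\vec M_{\mathrm{init}}$ in $H$. Finally, uniqueness follows by applying \Cref{prop:contdep} to two solutions with the same initial data, since the right-hand side of \eqref{eq:contdep_explicit} then vanishes.
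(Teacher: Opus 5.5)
Your proof is correct, but it takes a different route from the paper's. The paper never builds approximate solutions. It runs a Banach fixed point in $Y_T=X_T\cap C([0,T];H)$: it solves the linear problem $\partial_t\vec U+A\vec U=\gamma\,\mathcal N(\vec Z)+\vec F$ by Lions' theory, bounds $\mathcal N$ and its increments in $L^2(0,T;V')$, and shrinks $T^\ast$ to get a contraction on a ball around the linear solution $\vec U_0$. Uniqueness then comes from \Cref{prop:contdep}, exactly as in your last step.

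Your Faedo--Galerkin/Aubin--Lions argument rests instead on the cancellation in \Cref{lem:neutral}. As a result your a priori bounds do not depend on the size of $\vec M_{\mathrm{init}}$ and hold on every finite interval. This includes non-skew advection: since $|a_{\mathrm{adv}}(U_i,U_i)|\le\|\vec v\|_{L^\infty}\|\nabla U_i\|_{L^2}\|U_i\|_{L^2}$, the term is absorbed by diffusion and Gr\"onwall applies, so your hedge ``a fixed $T^\ast$ is enough'' is unnecessary. In effect you get \Cref{cor:global} directly, with the energy inequality produced at the Galerkin level rather than by the regularization argument invoked there.

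The price is that you need compactness (Rellich on the bounded Lipschitz $\Omega$, plus a non-constructive subsequence) and the energy structure. The contraction argument is constructive and would survive perturbations that are not precession-neutral.

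Your route also avoids a weakness in the paper's closing step. Since $K_H(T,\rho)\ge\|\vec M_{\mathrm{init}}\|_H+\rho$ and $K_V(T,\rho)\ge\rho$ for every $T$, condition \eqref{eq:local_smallness_1} can only hold if $C_{T_0}|\gamma|(\|\vec M_{\mathrm{init}}\|_H+\rho)<1$, i.e.\ for small data. Shrinking $T^\ast$ does not help, because the nonlinear bound \eqref{eq:local_map_into_ball} carries no positive power of $T$. The fix is to use \eqref{eq:Ta_L2_to_H1} in \eqref{eq:local_N_est_start}. This gives $\|\mathcal N(\vec Z)\|_{V'}\le C(\|\vec Z\|_H^2+\|\vec Z\|_H)$ and an analogous Lipschitz bound involving only $\|\cdot\|_H$, hence a factor $T^{1/2}$ in both smallness conditions. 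Your energy-based approach never needs this repair.
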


\begin{proof}
For $T>0$, define
\begin{equation}
X_T
:=
L^2(0,T;V)\cap H^1(0,T;V'),
\end{equation}
and
\begin{multline*}
Y_T
:=
X_T\cap C([0,T];H),
\quad
\|\vec U\|_{Y_T}
:=
\|\vec U\|_{L^2(0,T;V)} \\
+
\|\partial_t\vec U\|_{L^2(0,T;V')}
+
\|\vec U\|_{C([0,T];H)}.
\end{multline*}
Since $X_T\hookrightarrow C([0,T];H)$ by \Cref{lem:LM}, the space $Y_T$ is a Banach space with the above norm.

We write the weak problem in the semilinear form
\begin{equation}
\partial_t \vec M + A\vec M
=
\gamma\,\mathcal N(\vec M)+\vec F
\quad
\text{in } V',
\label{eq:local_semilinear}
\end{equation}
where
\begin{equation}
\mathcal N(\vec M)
=
\vec M\times
\bigl(
\mathcal T_a[\vec M]+\delta\vec B
\bigr),
\end{equation}
and $\vec F\in V'$ is defined by
\begin{equation}
\langle \vec F,\vec \Phi\rangle_{V',V}
=
\int_\Omega
\frac{M_0(\mathbf r)}{T_1(\mathbf r)}\,
\Phi_3\,
\mathrm d^3\mathbf r .
\label{eq:local_F_def}
\end{equation}
The linear operator $A:V\to V'$ is induced by the bilinear form
\begin{align}
a(\vec U,\vec \Phi)
&:=
\int_\Omega
D(\mathbf r)\,
\nabla \vec U:\nabla \vec \Phi\,
\mathrm d^3\mathbf r
\notag\\
&\quad
+
\int_\Omega
\left(
\frac{U_1\Phi_1+U_2\Phi_2}{T_2(\mathbf r)}
+
\frac{U_3\Phi_3}{T_1(\mathbf r)}
\right)
\mathrm d^3\mathbf r
\notag\\
&\quad
+
\sum_{i=1}^3 a_{\mathrm{adv}}(U_i,\Phi_i),
\label{eq:local_bilinear_form}
\end{align}
through
\begin{equation}
\langle A\vec U,\vec \Phi\rangle_{V',V}
=
a(\vec U,\vec \Phi).
\end{equation}
By the assumptions on $D$, $T_1$, $T_2$, and $\vec v$, the form $a$ is
bounded on $V\times V$:
\begin{equation}
|a(\vec U,\vec \Phi)|
\le
C_A\|\vec U\|_V\|\vec \Phi\|_V
\quad
\text{for all } \vec U,\vec \Phi\in V.
\label{eq:local_a_bounded}
\end{equation}
Moreover, there exist constants $\alpha_A>0$ and $\lambda_A\ge 0$ such that
\begin{equation}
a(\vec U,\vec U)
+
\lambda_A\|\vec U\|_H^2
\ge
\alpha_A\|\vec U\|_V^2
\quad
\text{for all } \vec U\in V.
\label{eq:local_garding}
\end{equation}
Indeed, the diffusion and relaxation terms are coercive, while the
advection term is either skew and contributes nothing to
$a(\vec U,\vec U)$, or else is controlled by
$\|\vec v\|_{W^{1,\infty}(\Omega)}$, the trace theorem, and the prescribed
advection boundary conditions.

Fix $T_0>0$. By the Lions--Magenes theory for linear parabolic equations
associated with bounded bilinear forms satisfying
\eqref{eq:local_garding}, for every
$\vec G\in L^2(0,T_0;V')$ and every
$\vec M_{\mathrm{init}}\in H$ there exists a unique solution
$\vec U\in X_{T_0}$ of
\begin{equation}
\partial_t \vec U + A\vec U
=
\vec G+\vec F,
\quad
\vec U(0)=\vec M_{\mathrm{init}}.
\label{eq:local_linear_problem}
\end{equation}
By \Cref{lem:LM}, this solution belongs to $Y_{T_0}$, and there exists a
constant $C_{T_0}>0$ such that
\begin{multline}
\|\vec U\|_{Y_{T_0}}
\\
\le
C_{T_0}
\Bigl(
\|\vec M_{\mathrm{init}}\|_H
+
\|\vec G\|_{L^2(0,T_0;V')}
+
\|\vec F\|_{L^2(0,T_0;V')}
\Bigr).
\label{eq:local_linear_estimate}
\end{multline}
We next estimate the nonlinear term. Let $\vec Z\in Y_T$ with
$0<T\le T_0$. For any $\vec \Phi\in V$ with $\|\vec \Phi\|_V=1$,
\begin{align}
\left|
\langle \mathcal N(\vec Z),\vec \Phi\rangle_{V',V}
\right|
&=
\left|
\int_\Omega
\bigl(
\vec Z\times (\mathcal T_a[\vec Z]+\delta\vec B)
\bigr)\cdot \vec \Phi\,
\mathrm d^3\mathbf r
\right|
\notag\\
&\le
\|\vec Z\|_{L^2(\Omega)}
\|\mathcal T_a[\vec Z]\|_{L^6(\Omega)}
\|\vec \Phi\|_{L^3(\Omega)}
\notag\\
&\quad
+
\|\delta\vec B\|_{L^\infty(\Omega)}
\|\vec Z\|_{L^2(\Omega)}
\|\vec \Phi\|_{L^2(\Omega)}.
\label{eq:local_N_est_start}
\end{align}
By \Cref{prop:Ta_bdd}, $\mathcal T_a:H^1(\Omega)^3\to H^1(\Omega)^3$
continuously. Hence, by Sobolev embedding,
\begin{equation}
\|\mathcal T_a[\vec Z]\|_{L^6(\Omega)}
\le
C\|\mathcal T_a[\vec Z]\|_{H^1(\Omega)}
\le
C\|\vec Z\|_{H^1(\Omega)}.
\end{equation}
Also,
\begin{equation}
\|\vec \Phi\|_{L^3(\Omega)}
+
\|\vec \Phi\|_{L^2(\Omega)}
\le
C\|\vec \Phi\|_V
=
C.
\end{equation}
Therefore
\begin{equation}
\|\mathcal N(\vec Z)\|_{V'}
\le
C\Bigl(
\|\vec Z\|_H\|\vec Z\|_V
+
\|\vec Z\|_H
\Bigr)
\quad
\text{for a.e. } t\in(0,T),
\label{eq:local_N_est_pointwise}
\end{equation}
and thus
\begin{multline}
\|\mathcal N(\vec Z)\|_{L^2(0,T;V')}
\\
\le
C\,
\|\vec Z\|_{C([0,T];H)}
\Bigl(
\|\vec Z\|_{L^2(0,T;V)}
+
T^{1/2}
\Bigr).
\label{eq:local_N_est_L2}
\end{multline}
Now let $\vec Z^{(1)},\vec Z^{(2)}\in Y_T$, and set
\begin{equation}
\vec W:=\vec Z^{(1)}-\vec Z^{(2)}.
\end{equation}
Then
\begin{equation}
\mathcal N(\vec Z^{(1)})-\mathcal N(\vec Z^{(2)})
=
\vec W\times
\bigl(
\mathcal T_a[\vec Z^{(1)}]+\delta\vec B
\bigr)
+
\vec Z^{(2)}\times \mathcal T_a[\vec W].
\label{eq:local_N_diff}
\end{equation}
Arguing as above, for any $\vec \Phi\in V$ with $\|\vec \Phi\|_V=1$,
\begin{align}
&
\left|
\langle
\mathcal N(\vec Z^{(1)})-\mathcal N(\vec Z^{(2)}),
\vec \Phi
\rangle_{V',V}
\right|
\notag\\
&\quad
\le
\|\vec W\|_{L^2(\Omega)}
\|\mathcal T_a[\vec Z^{(1)}]\|_{L^6(\Omega)}
\|\vec \Phi\|_{L^3(\Omega)}
\notag\\
&\quad\quad
+
\|\delta\vec B\|_{L^\infty(\Omega)}
\|\vec W\|_{L^2(\Omega)}
\|\vec \Phi\|_{L^2(\Omega)}
\notag\\
&\quad\quad
+
\|\vec Z^{(2)}\|_{L^6(\Omega)}
\|\mathcal T_a[\vec W]\|_{L^2(\Omega)}
\|\vec \Phi\|_{L^3(\Omega)}
\notag\\
&\quad
\le
C\Bigl(
1+\|\vec Z^{(1)}\|_V+\|\vec Z^{(2)}\|_V
\Bigr)
\|\vec W\|_H.
\end{align}
Hence, for almost every $t\in(0,T)$,
\begin{equation}
\|\mathcal N(\vec Z^{(1)})-\mathcal N(\vec Z^{(2)})\|_{V'}
\le
C\Bigl(
1+\|\vec Z^{(1)}\|_V+\|\vec Z^{(2)}\|_V
\Bigr)
\|\vec W\|_H,
\label{eq:local_N_lipschitz_pointwise}
\end{equation}
and therefore
\begin{multline}
\|\mathcal N(\vec Z^{(1)})-\mathcal N(\vec Z^{(2)})\|_{L^2(0,T;V')}
\\
\le
C\Bigl(
T^{1/2}
+
\|\vec Z^{(1)}\|_{L^2(0,T;V)}
+
\|\vec Z^{(2)}\|_{L^2(0,T;V)}
\Bigr) \\
\times
\|\vec Z^{(1)}-\vec Z^{(2)}\|_{C([0,T];H)}.
\label{eq:local_N_lipschitz_L2}
\end{multline}
Let $\vec U_0\in Y_{T_0}$ denote the unique solution of the linear problem
\begin{equation}
\partial_t \vec U_0 + A\vec U_0
=
\vec F,
\quad
\vec U_0(0)=\vec M_{\mathrm{init}}
\label{eq:local_u0}
\end{equation}
on $[0,T_0]$. For $0<T\le T_0$, we still denote by $\vec U_0$ its
restriction to $[0,T]$.

Fix $\rho>0$ and define the closed ball
\begin{equation}
\mathbb B_\rho(T)
:=
\left\{
\vec Z\in Y_T:
\|\vec Z-\vec U_0\|_{Y_T}\le \rho
\right\}.
\label{eq:local_ball}
\end{equation}
Because $Y_T$ is Banach, $\mathbb B_\rho(T)$ is complete.

For $\vec Z\in \mathbb B_\rho(T)$, define $\Psi(\vec Z)=\vec U$, where
$\vec U\in Y_T$ is the unique solution of
\begin{equation}
\partial_t \vec U + A\vec U
=
\gamma\,\mathcal N(\vec Z)+\vec F,
\quad
\vec U(0)=\vec M_{\mathrm{init}}
\label{eq:local_fixed_point_eq}
\end{equation}
on $[0,T]$. This is well defined by \eqref{eq:local_linear_problem},
\eqref{eq:local_linear_estimate}, and \eqref{eq:local_N_est_L2}.

Set
\begin{align}
K_H(T,\rho)
:=&
\|\vec U_0\|_{C([0,T];H)}+\rho,
\nonumber \\
K_V(T,\rho)
:=& 
\|\vec U_0\|_{L^2(0,T;V)}+\rho.
\label{eq:local_K_def}
\end{align}
If $\vec Z\in \mathbb B_\rho(T)$, then
\begin{equation}
\|\vec Z\|_{C([0,T];H)}
\le
K_H(T,\rho),
\quad
\|\vec Z\|_{L^2(0,T;V)}
\le
K_V(T,\rho).
\label{eq:local_ball_bounds}
\end{equation}
Subtracting \eqref{eq:local_u0} from \eqref{eq:local_fixed_point_eq} and
using \eqref{eq:local_linear_estimate} gives
\begin{align}
\|\Psi(\vec Z)-\vec U_0\|_{Y_T}
&\le
C_{T_0}|\gamma|\,
\|\mathcal N(\vec Z)\|_{L^2(0,T;V')}
\notag\\
&\le
C_{T_0}|\gamma|\,
K_H(T,\rho)
\Bigl(
K_V(T,\rho)+T^{1/2}
\Bigr).
\label{eq:local_map_into_ball}
\end{align}
Likewise, if $\vec Z^{(1)},\vec Z^{(2)}\in \mathbb B_\rho(T)$, then
\begin{align}
\|\Psi(\vec Z^{(1)}) & -\Psi(\vec Z^{(2)})\|_{Y_T} \\
& \le C_{T_0}|\gamma|\,
\|\mathcal N(\vec Z^{(1)})-\mathcal N(\vec Z^{(2)})\|_{L^2(0,T;V')}
\notag\\
&\le
C_{T_0}|\gamma|
\Bigl(
T^{1/2}+2K_V(T,\rho)
\Bigr)
\notag\\
&\quad\times
\|\vec Z^{(1)}-\vec Z^{(2)}\|_{C([0,T];H)}
\notag\\
&\le
C_{T_0}|\gamma|
\Bigl(
T^{1/2}+2K_V(T,\rho)
\Bigr)
\|\vec Z^{(1)}-\vec Z^{(2)}\|_{Y_T}.
\label{eq:local_contraction_pre}
\end{align}
Since $\vec U_0\in C([0,T_0];H)\cap L^2(0,T_0;V)$, we have
\begin{equation}
\|\vec U_0\|_{C([0,T];H)}
\le
\|\vec U_0\|_{C([0,T_0];H)},
\quad
\|\vec U_0\|_{L^2(0,T;V)}
\to 0
\end{equation}
as $T\downarrow 0$. Therefore, for fixed $\rho>0$, we may choose $T^\ast\in(0,T_0]$ so small
that
\begin{equation}
C_{T_0}|\gamma|\,
K_H(T^\ast,\rho)
\Bigl(
K_V(T^\ast,\rho)+(T^\ast)^{1/2}
\Bigr)
\le
\rho
\label{eq:local_smallness_1}
\end{equation}
and
\begin{equation}
C_{T_0}|\gamma|
\Bigl(
(T^\ast)^{1/2}+2K_V(T^\ast,\rho)
\Bigr)
<
1.
\label{eq:local_smallness_2}
\end{equation}
Then \eqref{eq:local_map_into_ball} shows that
$\Psi:\mathbb B_\rho(T^\ast)\to \mathbb B_\rho(T^\ast)$, and
\eqref{eq:local_contraction_pre} together with
\eqref{eq:local_smallness_2} shows that $\Psi$ is a contraction on
$\mathbb B_\rho(T^\ast)$ with respect to the $Y_{T^\ast}$ norm.

By Banach's fixed-point theorem, there exists a unique
$\vec M\in \mathbb B_\rho(T^\ast)$ such that
\begin{equation}
\Psi(\vec M)=\vec M.
\end{equation}
By construction,
\begin{equation}
\vec M\in Y_{T^\ast}
=
X_{T^\ast}\cap C([0,T^\ast];H),
\end{equation}
and \eqref{eq:local_fixed_point_eq} shows that $\vec M$ satisfies the weak
formulation \eqref{eq:weak_bloch} on $[0,T^\ast]$ with initial condition
\begin{equation}
\vec M(0)=\vec M_{\mathrm{init}}.
\end{equation}
Thus a weak solution exists on $[0,T^\ast]$.

Finally, uniqueness among weak solutions on $[0,T^\ast]$ follows from
\Cref{prop:contdep}: if two weak solutions have the same initial datum,
then their difference vanishes identically on $[0,T^\ast]$.
\end{proof}

\subsection{Semi-discrete FE results}
\noindent This subsection collects the discrete lemmas and theorem used in the FE energy analysis.

\begin{lemma}[Discrete skew-symmetry of precession]\label{lem:disc_skew}
For any coefficient vector
\begin{equation}
w=(w_1,w_2,w_3)\in (\mathbb R^{N_h})^3,
\end{equation}
the discrete precession load vectors satisfy
\begin{equation}
\sum_{i=1}^3 w_i^\top \mathcal P_i(w) = 0 .
\label{eq:disc_skew_id}
\end{equation}
\end{lemma}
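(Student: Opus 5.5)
The plan is to reduce the discrete identity to the pointwise algebraic fact behind \Cref{lem:neutral}, $\vec a\cdot(\vec a\times\vec b)=0$. The argument uses only that the same field is used for both trial and test functions and that the same quadrature rule is used on both sides.

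Starting from the matrix-free definition \eqref{eq:P_matrix_free}, I would contract with $w_i$ and use linearity in the test index $m$. Since $\sum_m w_{im}\varphi_m(\mathbf r)=M_i^h(\mathbf r)$, this gives
\begin{equation*}
\sum_{i=1}^3 w_i^\top\mathcal P_i(w)
=\sum_{i,j,k}\epsilon_{ijk}\int_\Omega M_i^h\,M_j^h\,B_k\,\mathrm d^3\mathbf r ,
\end{equation*}
where $B_k$ is the $k$th component of the total effective field $\vec B_d[\vec M^h]+\delta\vec B$ appearing in the precession load. If instead the quadrature version is used, the same substitution gives
\begin{equation*}
\sum_{i=1}^3 w_i^\top\mathcal P_i(w)=\sum_q w_q\sum_{i,j,k}\epsilon_{ijk}\,M_i^h(\mathbf r_q)\,M_j^h(\mathbf r_q)\,B_k(\mathbf r_q).
\end{equation*}

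In either form, the integrand or summand at each point is $\vec M^h\cdot(\vec M^h\times\vec B)$. It vanishes identically because $\epsilon_{ijk}$ is antisymmetric in $(i,j)$ while $M_i^hM_j^h$ is symmetric. This holds regardless of how $\vec B$ is computed: exact kernel, regularized kernel, direct sum, or Barnes--Hut. The only requirement is that $\vec B$ is evaluated at the same points and weights as the $M_j^h\varphi_m$ factor. Therefore the total is zero, which is \eqref{eq:disc_skew_id}.

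The main point to check is consistency with the assembled tensor form \eqref{eq:P_tensor_form}. I would verify that it too collapses to the same expression. Contracting $T_{k\,nmq}$ and $R_{k\,nm}$ with $w_{im}w_{jn}$ produces $\int M_i^hM_j^h(\cdot)$, which is again symmetric in $(i,j)$. One also has to note that the quadrature and DDF approximation affect only $B_k$, never the symmetric product. This observation is what distinguishes the identity from the defect $\delta_{\mathrm{skew}}$ discussed earlier, which can only arise if the trial field used in the cross product differs from $\vec M^h$ at the quadrature points.
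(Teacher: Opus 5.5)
Your proposal is correct and follows essentially the same route as the paper: contract the quadrature form of $\mathcal P_i(w)$ with $w_i$ so that $\sum_m (w_i)_m\varphi_m(\mathbf r_q)=M_{h,i}(\mathbf r_q)$, sum over $i$, and note that each quadrature summand is the scalar triple product $\vec M_h\cdot(\vec M_h\times\vec B_h)=0$, regardless of how $\vec B_h$ was computed. Your extra checks of the exact-integral and assembled tensor forms are correct refinements the paper does not spell out. So is your remark that a nonzero $\delta_{\mathrm{skew}}$ cannot come from the DDF approximation itself, only from a mismatch between the field in the cross product and the tested field.
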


\begin{proof}
Let $\{\varphi_n\}_{n=1}^{N_h}$ be the FE basis on $\Omega$.
For $i\in\{1,2,3\}$, write
\begin{equation}
w_i=((w_i)_1,\dots,(w_i)_{N_h})^\top \in \mathbb R^{N_h},
\end{equation}
and define the discrete magnetization field
\begin{equation}
\vec M_h(\mathbf r)
=
\sum_{i=1}^3 \sum_{n=1}^{N_h}
(w_i)_n\,\varphi_n(\mathbf r)\,\hat{\mathbf e}_i.
\end{equation}
Equivalently, if we set
\begin{equation}
M_{h,i}(\mathbf r)
:=
\sum_{n=1}^{N_h} (w_i)_n\,\varphi_n(\mathbf r),
\quad i=1,2,3,
\end{equation}
then
\begin{equation}
\vec M_h(\mathbf r)
=
\sum_{i=1}^3 M_{h,i}(\mathbf r)\,\hat{\mathbf e}_i.
\end{equation}
Let $\{\mathbf r_q\}_{q=1}^{N_q}\subset \Omega$ be the quadrature points and
$\{\omega_q\}_{q=1}^{N_q}$ the associated quadrature weights used in the
matrix-free definition \eqref{eq:P_matrix_free}. For each quadrature point,
let $\vec B_h(\mathbf r_q)$ denote the discrete DDF field obtained from the
same coefficient vector $w$ by the same discrete kernel evaluation used in
\eqref{eq:P_matrix_free}.

By the definition of $\mathcal P_i(w)$, for each basis index
$m\in\{1,\dots,N_h\}$,
\begin{equation}
(\mathcal P_i(w))_m
=
\sum_{q=1}^{N_q}
\omega_q\,
\bigl(
\vec M_h(\mathbf r_q)\times \vec B_h(\mathbf r_q)
\bigr)\cdot \hat{\mathbf e}_i\,
\varphi_m(\mathbf r_q).
\label{eq:disc_skew_component}
\end{equation}
Therefore,
\begin{align}
w_i^\top \mathcal P_i(w)
&=
\sum_{m=1}^{N_h} (w_i)_m\,(\mathcal P_i(w))_m
\notag\\
&=
\sum_{m=1}^{N_h} (w_i)_m
\sum_{q=1}^{N_q}
\omega_q\,
\bigl(
\vec M_h(\mathbf r_q)\times \vec B_h(\mathbf r_q)
\bigr)\cdot \hat{\mathbf e}_i\,
\varphi_m(\mathbf r_q)
\notag\\
&=
\sum_{q=1}^{N_q}
\omega_q\,
\bigl(
\vec M_h(\mathbf r_q)\times \vec B_h(\mathbf r_q)
\bigr)\cdot \hat{\mathbf e}_i\,
\sum_{m=1}^{N_h} (w_i)_m\,\varphi_m(\mathbf r_q)
\notag\\
&=
\sum_{q=1}^{N_q}
\omega_q\,
\bigl(
\vec M_h(\mathbf r_q)\times \vec B_h(\mathbf r_q)
\bigr)\cdot \hat{\mathbf e}_i\,
M_{h,i}(\mathbf r_q).
\label{eq:disc_skew_one_component}
\end{align}
Summing over $i=1,2,3$ gives
\begin{align}
\sum_{i=1}^3 w_i^\top \mathcal P_i(w)
&=
\sum_{q=1}^{N_q}
\omega_q
\sum_{i=1}^3
\bigl(
\vec M_h(\mathbf r_q)\times \vec B_h(\mathbf r_q)
\bigr)\cdot \hat{\mathbf e}_i\,
M_{h,i}(\mathbf r_q)
\notag\\
&=
\sum_{q=1}^{N_q}
\omega_q\,
\vec M_h(\mathbf r_q)\cdot
\bigl(
\vec M_h(\mathbf r_q)\times \vec B_h(\mathbf r_q)
\bigr).
\label{eq:disc_skew_sum}
\end{align}
For each quadrature point $\mathbf r_q$, the scalar triple product vanishes:
\begin{equation}
\vec M_h(\mathbf r_q)\cdot
\bigl(
\vec M_h(\mathbf r_q)\times \vec B_h(\mathbf r_q)
\bigr)
=0,
\end{equation}
because
\begin{equation}
\vec a\cdot(\vec a\times \vec b)=0
\quad
\text{for all } \vec a,\vec b\in\mathbb R^3.
\end{equation}
Hence every summand in \eqref{eq:disc_skew_sum} is zero, and therefore
\begin{equation}
\sum_{i=1}^3 w_i^\top \mathcal P_i(w)=0.
\end{equation}
This proves \eqref{eq:disc_skew_id}.
\end{proof}

\begin{lemma}[SPD and positivity]\label{lem:spd}
The mass matrix $M\in\mathbb R^{N_h\times N_h}$ is symmetric positive
definite. If $D(\mathbf r)\ge D_{\min}>0$ almost everywhere, then
$K\in\mathbb R^{N_h\times N_h}$ is symmetric positive semidefinite and, for
every $w\in\mathbb R^{N_h}$,
\begin{multline}
w^\top K w
=
\int_\Omega
D(\mathbf r)\,
|\nabla u_h|^2\,
\mathrm d^3\mathbf r
\\
\ge
D_{\min}
\int_\Omega
|\nabla u_h|^2\,
\mathrm d^3\mathbf r,
\quad
u_h=\sum_{n=1}^{N_h} w_n\varphi_n .
\end{multline}
The relaxation matrices $S_1,S_2\in\mathbb R^{N_h\times N_h}$ are symmetric
positive semidefinite and satisfy
\begin{equation}
w_1^\top S_2 w_1
+
w_2^\top S_2 w_2
+
w_3^\top S_1 w_3
\ge 0
\end{equation}
for all $w_1,w_2,w_3\in\mathbb R^{N_h}$. If
\eqref{eq:transport_assumptions} holds and $N_v=N_v^{\mathrm{skew}}$, then
\begin{equation}
w^\top N_v w=0
\quad
\text{for all } w\in\mathbb R^{N_h}.
\end{equation}
\end{lemma}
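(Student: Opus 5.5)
The plan is to reduce every claim to the quadratic form $w^\top X w$ of the matrix in question. Expanding $u_h=\sum_n w_n\varphi_n$ and using bilinearity of the defining integrals \eqref{eq:M_def}--\eqref{eq:S2_def} and \eqref{eq:Nv_skew_def}, each quadratic form becomes an integral of a pointwise expression in $u_h$. Symmetry of $M$, $K$, $S_1$, $S_2$ is read off directly from their definitions, since the integrands $\varphi_n\varphi_m$, $D\nabla\varphi_n\cdot\nabla\varphi_m$ and $T_{1,2}^{-1}\varphi_n\varphi_m$ are symmetric in $(n,m)$. All integrals are finite because $\varphi_n\in H^1(\Omega)$, $D\in L^\infty$, and $T_{1,2}^{-1}\le T_{\min}^{-1}$.

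For the mass matrix, I will use $w^\top M w=\|u_h\|_{L^2(\Omega)}^2\ge 0$. Equality forces $u_h=0$ almost everywhere, and linear independence of the basis $\{\varphi_n\}$ in $L^2$ then gives $w=0$, so $M$ is positive definite. For the stiffness matrix, $w^\top K w=\int_\Omega D|\nabla u_h|^2$, and $D\ge D_{\min}$ almost everywhere yields the stated lower bound. $K$ is only semidefinite because constants lie in the kernel under Neumann conditions.

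For the relaxation matrices, $w_j^\top S_2 w_j=\int_\Omega T_2^{-1}u_{h,j}^2\ge 0$ since $T_2>0$, and similarly for $S_1$. Summing these gives the displayed inequality.

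The only step needing some care is the skew advection operator. Writing out \eqref{eq:Nv_skew_def}, I get
\[
w^\top N_v^{\mathrm{skew}}w=-\tfrac12\int_\Omega(\vec v\cdot\nabla u_h)u_h+\tfrac12\int_\Omega(\vec v\cdot\nabla u_h)u_h=0,
\]
because swapping the summation indices $n\leftrightarrow m$ maps the second term onto the first. In fact $N_v^{\mathrm{skew}}$ is antisymmetric as a matrix, so this identity holds purely algebraically. The assumptions \eqref{eq:transport_assumptions} are not needed for the cancellation itself; they only justify that this skew form agrees with the original advective term. I would note this explicitly. The integrals are well defined since $\vec v\in W^{1,\infty}$ and $u_h\in H^1$.
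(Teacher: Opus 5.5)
Your proposal is correct and follows the paper's proof essentially step for step. Each claim is reduced to the quadratic form $w^\top X w$ written as an integral of $u_h$, with positive definiteness of $M$ from linear independence of the basis and the advection identity from antisymmetry of $N_v^{\mathrm{skew}}$; your direct expansion of $w^\top N_v^{\mathrm{skew}} w$ is equivalent to that antisymmetry argument, and your remark that \eqref{eq:transport_assumptions} plays no role in the cancellation itself is accurate, since the paper's argument is likewise purely algebraic.
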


\begin{proof}
We begin with the mass matrix. By definition,
\begin{equation}
M_{mn}
=
\int_\Omega
\varphi_n(\mathbf r)\,\varphi_m(\mathbf r)\,
\mathrm d^3\mathbf r,
\quad
1\le m,n\le N_h,
\end{equation}
so $M$ is symmetric. Let $w\in\mathbb R^{N_h}$ and define
\begin{equation}
u_h
=
\sum_{n=1}^{N_h} w_n\varphi_n.
\end{equation}
Then
\begin{align}
w^\top M w
&=
\sum_{m,n=1}^{N_h}
w_m w_n
\int_\Omega
\varphi_n\varphi_m\,
\mathrm d^3\mathbf r
\notag\\
&=
\int_\Omega
\left(
\sum_{n=1}^{N_h} w_n\varphi_n
\right)
\left(
\sum_{m=1}^{N_h} w_m\varphi_m
\right)
\mathrm d^3\mathbf r
\notag\\
&=
\int_\Omega
u_h(\mathbf r)^2\,
\mathrm d^3\mathbf r .
\label{eq:M_spd}
\end{align}
Hence $w^\top M w\ge 0$. If $w^\top M w=0$, then $u_h=0$ almost everywhere in $\Omega$. Since $u_h$ is continuous and belongs to the FE space spanned by the linearly independent basis $\{\varphi_n\}_{n=1}^{N_h}$, this implies $u_h\equiv 0$ and therefore $w=0$. Thus $M$ is positive definite.

Next consider the diffusion matrix
\begin{equation}
K_{mn}
=
\int_\Omega
D(\mathbf r)\,
\nabla\varphi_n(\mathbf r)\cdot \nabla\varphi_m(\mathbf r)\,
\mathrm d^3\mathbf r .
\end{equation}
Symmetry is immediate. For $w\in\mathbb R^{N_h}$ and the associated field
$u_h=\sum_n w_n\varphi_n$,
\begin{align}
w^\top K w
&=
\sum_{m,n=1}^{N_h}
w_m w_n
\int_\Omega
D(\mathbf r)\,
\nabla\varphi_n\cdot \nabla\varphi_m\,
\mathrm d^3\mathbf r
\notag\\
&=
\int_\Omega
D(\mathbf r)\,
\left|
\sum_{n=1}^{N_h} w_n\nabla\varphi_n
\right|^2
\mathrm d^3\mathbf r
\notag\\
&=
\int_\Omega
D(\mathbf r)\,
|\nabla u_h|^2\,
\mathrm d^3\mathbf r .
\label{eq:K_psd}
\end{align}
Since $D(\mathbf r)\ge 0$ almost everywhere, this shows that $K$ is
positive semidefinite. If in addition $D(\mathbf r)\ge D_{\min}>0$ almost
everywhere, then
\begin{equation}
w^\top K w
\ge
D_{\min}
\int_\Omega
|\nabla u_h|^2\,
\mathrm d^3\mathbf r .
\label{eq:K_lower}
\end{equation}
For the relaxation matrices,
\begin{align*}
(S_1)_{mn}
&=
\int_\Omega
T_1(\mathbf r)^{-1}\,
\varphi_n(\mathbf r)\varphi_m(\mathbf r)\,
\mathrm d^3\mathbf r,
\\
(S_2)_{mn}
&=
\int_\Omega
T_2(\mathbf r)^{-1}\,
\varphi_n(\mathbf r)\varphi_m(\mathbf r)\,
\mathrm d^3\mathbf r.
\end{align*}
These matrices are symmetric. Let $z\in\mathbb R^{N_h}$ and define
$z_h=\sum_{n=1}^{N_h} z_n\varphi_n$. Then
\begin{align*}
z^\top S_1 z
&=
\int_\Omega
T_1(\mathbf r)^{-1}\,
z_h(\mathbf r)^2\,
\mathrm d^3\mathbf r,
\\
z^\top S_2 z
&=
\int_\Omega
T_2(\mathbf r)^{-1}\,
z_h(\mathbf r)^2\,
\mathrm d^3\mathbf r.
\end{align*}
Under the standing assumptions, $T_1^{-1},T_2^{-1}\ge 0$ almost
everywhere, and therefore $S_1$ and $S_2$ are positive semidefinite.
Applying these identities to $w_1,w_2,w_3\in\mathbb R^{N_h}$ gives
\begin{multline}
w_1^\top S_2 w_1
+
w_2^\top S_2 w_2
+
w_3^\top S_1 w_3
\\
=
\int_\Omega
T_2(\mathbf r)^{-1}
\bigl(
(w_1)_h^2+(w_2)_h^2
\bigr)\,
\mathrm d^3\mathbf r \\
+
\int_\Omega
T_1(\mathbf r)^{-1}
(w_3)_h^2\,
\mathrm d^3\mathbf r
\ge 0.
\end{multline}
Finally, assume \eqref{eq:transport_assumptions} holds and
$N_v=N_v^{\mathrm{skew}}$. By definition of the skew discretization,
\begin{equation}
(N_v^{\mathrm{skew}})_{mn}
=
-\frac12
\int_\Omega
(\vec v\cdot \nabla\varphi_n)\,\varphi_m\,
\mathrm d^3\mathbf r
+
\frac12
\int_\Omega
(\vec v\cdot \nabla\varphi_m)\,\varphi_n\,
\mathrm d^3\mathbf r .
\label{eq:Nv_skew_entries}
\end{equation}
Interchanging $m$ and $n$ in \eqref{eq:Nv_skew_entries} yields
\begin{equation}
(N_v^{\mathrm{skew}})_{nm}
=
-(N_v^{\mathrm{skew}})_{mn},
\end{equation}
so $N_v^{\mathrm{skew}}$ is skew-symmetric. Therefore, for every
$w\in\mathbb R^{N_h}$,
\begin{multline}
w^\top N_v w
=
w^\top N_v^{\mathrm{skew}} w
=
\bigl(w^\top N_v^{\mathrm{skew}} w\bigr)^\top \\
=
w^\top (N_v^{\mathrm{skew}})^\top w
=
-\,w^\top N_v^{\mathrm{skew}} w,
\end{multline}
and hence
\begin{equation}
w^\top N_v w=0.
\end{equation}
This completes the proof.
\end{proof}

\begin{theorem}[Semi-discrete energy inequality]\label{thm:semi_disc_energy}
Assume either $\vec v\equiv 0$ or \eqref{eq:transport_assumptions} holds
and advection is discretized by the skew operator
$N_v^{\mathrm{skew}}$ so that
\begin{equation}
z^\top N_v^{\mathrm{skew}} z = 0
\quad
\text{for all } z\in\mathbb R^{N_h}.
\end{equation}
Then any solution of \eqref{eq:semi_discrete} satisfies
\begin{multline}
\frac{\mathrm d}{\mathrm dt}\,
\frac12
\sum_{i=1}^3 w_i^\top M\,w_i
+
\sum_{i=1}^3 w_i^\top K\,w_i
\\
+
w_1^\top S_2 w_1
+
w_2^\top S_2 w_2
+
w_3^\top S_1 w_3
=
w_3^\top b_{T_1} .
\label{eq:semi_disc_energy_id}
\end{multline}
Consequently, for all $t\ge 0$,
\begin{multline}
\frac12
\sum_{i=1}^3 w_i(t)^\top M\,w_i(t)
+
\int_0^t
\sum_{i=1}^3 w_i(s)^\top K\,w_i(s)\,
\mathrm ds
\\
+
\int_0^t
\Bigl(
w_1(s)^\top S_2 w_1(s)
+
w_2(s)^\top S_2 w_2(s)
+
w_3(s)^\top S_1 w_3(s)
\Bigr)\,
\mathrm ds
\\
\le
\frac12
\sum_{i=1}^3 w_i(0)^\top M\,w_i(0)
+
\int_0^t
w_3(s)^\top b_{T_1}\,
\mathrm ds .
\label{eq:semi_disc_energy_ineq}
\end{multline}
\end{theorem}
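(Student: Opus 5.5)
The plan is to test the semi-discrete system \eqref{eq:semi_discrete} against the coefficient vector itself and sum over components. This is the discrete analogue of the proof of \Cref{prop:energy}. For each $i\in\{1,2,3\}$, left-multiply the $i$th equation of \eqref{eq:semi_discrete} by $w_i^\top$. Since $M$ is symmetric and constant in time (\Cref{lem:spd}), we have $w_i^\top M\dot w_i=\frac{\mathrm d}{\mathrm dt}\frac12 w_i^\top M w_i$. Summing over $i$ gives
\[
\frac{\mathrm d}{\mathrm dt}\frac12\sum_{i=1}^3 w_i^\top M w_i
=\gamma\sum_{i=1}^3 w_i^\top\mathcal P_i(w)
-\bigl(w_1^\top S_2w_1+w_2^\top S_2w_2+w_3^\top S_1w_3\bigr)
-\sum_{i=1}^3 w_i^\top Kw_i+\sum_{i=1}^3 w_i^\top N_v w_i+w_3^\top b_{T_1}.
\]
The Kronecker deltas in \eqref{eq:semi_discrete} are what select $S_2$ for $i=1,2$, $S_1$ for $i=3$, and the source only in the third component.

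Two terms on the right then vanish. The precession contribution is zero by \Cref{lem:disc_skew}, since $\sum_i w_i^\top\mathcal P_i(w)=0$ for every $w$. This relies on $\mathcal P_i$ being assembled with the same quadrature points and weights at which $\vec M_h$ and $\vec B_h$ are evaluated, as in \eqref{eq:P_matrix_free}. It holds with the static offset $\delta\vec B$ included, because the pointwise triple-product argument does not care which field appears. The advection contribution is also zero. If $\vec v\equiv0$ then $N_v=0$. Under \eqref{eq:transport_assumptions} with the skew discretization, $w_i^\top N_v^{\mathrm{skew}}w_i=0$ by \Cref{lem:spd}, which is exactly the hypothesis stated in the theorem. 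Moving the $K$ and $S$ terms to the left-hand side yields \eqref{eq:semi_disc_energy_id}.

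For \eqref{eq:semi_disc_energy_ineq}, I will integrate \eqref{eq:semi_disc_energy_id} from $0$ to $t$. The coefficient vector $w$ is $C^1$ on its interval of existence, because the right-hand side of the ODE $\dot w=M^{-1}(\cdots)$ is locally Lipschitz in $w$: the map $w\mapsto\mathcal P(w)$ is quadratic on a finite-dimensional space. The fundamental theorem of calculus therefore applies. Integration actually produces an equality, which in particular gives the stated inequality.

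The main obstacle is justifying the precession cancellation when the DDF is evaluated only approximately, for instance by Barnes--Hut. The cancellation itself survives, since \Cref{lem:disc_skew} uses only the pointwise identity $\vec a\cdot(\vec a\times\vec b)=0$ whatever $\vec B_h$ is. What can fail is the consistency requirement that the same quadrature be used for the field evaluation and for the load assembly. If it fails, the defect $\delta_{\mathrm{skew}}(w)$ defined earlier has to be carried as an extra term on the right-hand side. I would first state the result under exact-consistency assumptions and then note that defect term separately. Beyond that, the proof is routine bookkeeping.
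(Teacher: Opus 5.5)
Your proposal is correct and follows essentially the same route as the paper: test each component equation of \eqref{eq:semi_discrete} with $w_i^\top$, use that $M$ is symmetric and constant in time, cancel the precession term by \Cref{lem:disc_skew} and the advection term by skew-symmetry, and then integrate in time to get an equality that implies \eqref{eq:semi_disc_energy_ineq}. Your extra remarks are sound refinements that the paper handles only implicitly, through the defect $\delta_{\mathrm{skew}}$ introduced in the main text: the $C^1$ regularity of $w$, and the requirement that $\vec M_h$ in the load be evaluated at the same quadrature points used for assembly when the DDF is approximated.
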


\begin{proof}
Let
\begin{equation}
E_h(t)
:=
\frac12
\sum_{i=1}^3 w_i(t)^\top M\,w_i(t).
\end{equation}
Since the semi-discrete system \eqref{eq:semi_discrete} is a finite-
dimensional ODE system, any solution is differentiable in time. Because the
mass matrix $M$ is constant and symmetric, we have
\begin{equation}
\frac{\mathrm d}{\mathrm dt} E_h(t)
=
\sum_{i=1}^3 w_i^\top M\,\dot w_i .
\label{eq:Eh_prime}
\end{equation}
Write the three component equations in \eqref{eq:semi_discrete} as
\begin{align}
M\,\dot w_1
&=
\gamma\,\mathcal P_1(w)
-
K\,w_1
-
S_2\,w_1
-
N_v^{\mathrm{skew}}\,w_1,
\\
M\,\dot w_2
&=
\gamma\,\mathcal P_2(w)
-
K\,w_2
-
S_2\,w_2
-
N_v^{\mathrm{skew}}\,w_2,
\\
M\,\dot w_3
&=
\gamma\,\mathcal P_3(w)
-
K\,w_3
-
S_1\,w_3
-
N_v^{\mathrm{skew}}\,w_3
+
b_{T_1}.
\end{align}
Multiply the $i$th equation on the left by $w_i^\top$ and sum over
$i=1,2,3$. Using \eqref{eq:Eh_prime}, we obtain
\begin{align}
\frac{\mathrm d}{\mathrm dt} E_h(t)
&=
\gamma
\sum_{i=1}^3
w_i^\top \mathcal P_i(w)
-
\sum_{i=1}^3
w_i^\top K\,w_i
\notag\\
&\quad
-
w_1^\top S_2 w_1
-
w_2^\top S_2 w_2
-
w_3^\top S_1 w_3
\notag\\
&\quad
-
\sum_{i=1}^3
w_i^\top N_v^{\mathrm{skew}}\,w_i
+
w_3^\top b_{T_1}.
\label{eq:energy_expansion}
\end{align}
We now evaluate the terms on the right-hand side.

First, by \Cref{lem:disc_skew},
\begin{equation}
\sum_{i=1}^3
w_i^\top \mathcal P_i(w)
=
0.
\label{eq:energy_prec_zero}
\end{equation}
Second, by \Cref{lem:spd}, the stiffness matrix $K$ is symmetric positive
semidefinite, so
\begin{equation}
w_i^\top K\,w_i \ge 0
\quad
\text{for } i=1,2,3.
\label{eq:energy_K_nonneg}
\end{equation}
Likewise, \Cref{lem:spd} gives
\begin{equation}
w_1^\top S_2 w_1
+
w_2^\top S_2 w_2
+
w_3^\top S_1 w_3
\ge 0.
\label{eq:energy_S_nonneg}
\end{equation}
Third, by the skew-advection assumption,
\begin{equation}
w_i^\top N_v^{\mathrm{skew}}\,w_i = 0
\quad
\text{for } i=1,2,3,
\label{eq:energy_adv_zero}
\end{equation}
and hence
\begin{equation}
\sum_{i=1}^3
w_i^\top N_v^{\mathrm{skew}}\,w_i
=
0.
\end{equation}
Substituting \eqref{eq:energy_prec_zero} and
\eqref{eq:energy_adv_zero} into \eqref{eq:energy_expansion} yields
\begin{multline}
\frac{\mathrm d}{\mathrm dt} E_h(t)
+
\sum_{i=1}^3
w_i^\top K\,w_i
\\
+
w_1^\top S_2 w_1
+
w_2^\top S_2 w_2
+
w_3^\top S_1 w_3
=
w_3^\top b_{T_1},
\end{multline}
which is exactly \eqref{eq:semi_disc_energy_id}.

Integrating \eqref{eq:semi_disc_energy_id} from $0$ to $t$ gives the
stronger identity
\begin{multline}
E_h(t)-E_h(0)
+
\int_0^t
\sum_{i=1}^3
w_i(s)^\top K\,w_i(s)\,
\mathrm ds
\\
+
\int_0^t
\Bigl(
w_1(s)^\top S_2 w_1(s)
+
w_2(s)^\top S_2 w_2(s)
+
w_3(s)^\top S_1 w_3(s)
\Bigr)\,
\mathrm ds
\\
=
\int_0^t
w_3(s)^\top b_{T_1}\,
\mathrm ds .
\label{eq:semi_disc_energy_integrated}
\end{multline}
Rearranging \eqref{eq:semi_disc_energy_integrated} yields
\eqref{eq:semi_disc_energy_ineq}. In fact, the integrated identity is
stronger than the stated inequality.
\end{proof}

\subsection{Time-discretization results}
\noindent This subsection collects the IMEX stability and consistency results used in the time-discretization analysis.

\begin{proposition}[Consistency and convergence]\label{prop:consistency}
Assume that the exact solution satisfies
\begin{equation}
\vec M \in C^3([0,T];L^2(\Omega)^3)
\cap
C^2([0,T];H^{p+1}(\Omega)^3),
\end{equation}
and that the usual elliptic regularity required for optimal $L^2$ finite-
element error estimates holds on $\Omega$. Assume further that the
matrix-free or quadrature-based DDF evaluation is spatially consistent with
the regularized operator $\mathcal T_a$ at the same order as the finite-
element space, and that the explicit Rodrigues--projection stage
\eqref{eq:time_mid_pred}--\eqref{eq:time_full_expl} is a second-order
realization of the midpoint explicit flow, in the sense that its one-step
local defect is $O(\Delta t^3)$ uniformly on bounded $M$-balls.

Then the IMEX scheme has local truncation error $O(\Delta t^3)$. Moreover,
its global time-discretization error relative to the semi-discrete FE
solution is $O(\Delta t^2)$ on $[0,T]$ in the discrete $L^2(\Omega)^3$
norm.

Moreover, let $\vec M_h(t)$ denote the conforming semi-discrete $P_p$ FE
solution with initial data chosen by the standard elliptic projection of
$\vec M(\cdot,0)$. Then, under the above regularity assumptions,
\begin{align}
\|\vec M_h-\vec M\|_{L^2(0,T;H^1(\Omega))}
&\le
C h^{p},
\label{eq:spatial_H1_rate}
\\
\sup_{0\le t\le T}\|\vec M_h(\cdot,t)-\vec M(\cdot,t)\|_{L^2(\Omega)}
&\le
C h^{p+1}.
\label{eq:spatial_L2_rate}
\end{align}

Consequently, if $\vec M_h^n$ denotes the fully discrete IMEX solution at
$t^n=n\Delta t$, then
\begin{equation}
\|\vec M_h^n-\vec M(\cdot,t^n)\|_{L^2(\Omega)}
\le
C\bigl(\Delta t^2+h^{p+1}\bigr),
\quad
0\le t^n\le T.
\label{eq:combined_L2_rate}
\end{equation}
Here $C$ is independent of $h$, $\Delta t$, and $n$, provided the
assumptions of \Cref{thm:imex_stability} hold.
\end{proposition}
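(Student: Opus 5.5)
The proof splits into three parts: a temporal consistency and stability part, a spatial Galerkin part, and a triangle inequality that combines them.

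\textbf{Temporal part.} I treat the IMEX step as a composition
\[
\Phi_{\Delta t}=\mathcal L_{\Delta t/2}\circ\mathcal E_{\Delta t}\circ\mathcal L_{\Delta t/2}.
\]
Here $\mathcal L_{\Delta t/2}$ is the Crank--Nicolson half-step \eqref{eq:time_half_impl} (respectively \eqref{eq:time_second_half_impl}) for the linear dissipative operator $\mathcal M^{-1}(\mathcal K+\mathcal S)$ with source $f$. The map $\mathcal E_{\Delta t}$ is the explicit midpoint stage. Crank--Nicolson over a half-step agrees with the exact linear flow up to $O(\Delta t^3)$ on smooth data. By hypothesis, $\mathcal E_{\Delta t}$ agrees with the exact precession flow of \eqref{eq:semi_discrete} (with $\mathcal P_i$ and advection) up to $O(\Delta t^3)$ on bounded $M$-balls.

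Next I compare the composition of exact subflows with the exact semi-discrete flow. The symmetric Strang structure cancels the $O(\Delta t^2)$ commutator term through a BCH-type expansion. I would verify this by Taylor-expanding each subflow to third order and using the $C^3$-in-time regularity, pulled back to the semi-discrete solution. The local Lipschitz bounds of \Cref{lem:lipschitz}, in their finite-dimensional form on $V_h$, control the nonlinear remainders. This yields a local defect of $O(\Delta t^3)$.

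Global second order then follows from a Lady Windermere's fan argument. The per-step error propagation is $\|\Phi_{\Delta t}(u)-\Phi_{\Delta t}(v)\|_M\le(1+C\Delta t)\|u-v\|_M$ on bounded $M$-balls. For the Crank--Nicolson factors the constant is $1$, since $\mathcal K+\mathcal S$ is SPD by \Cref{lem:spd}. For the explicit stage, \Cref{thm:imex_stability} together with the local Lipschitz bound supplies the constant under its time-step restriction. That theorem also keeps the iterates inside a fixed $M$-ball on $[0,T]$. Summing $n$ local defects of size $O(\Delta t^3)$ with discrete Gr\"onwall gives $O(\Delta t^2)$.

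\textbf{Spatial part.} I split $\vec M_h-\vec M=(\vec M_h-R_h\vec M)+(R_h\vec M-\vec M)=:\theta+\eta$. Here $R_h$ is the elliptic (Ritz) projection associated with the coercive form $a(\cdot,\cdot)+\lambda_A(\cdot,\cdot)$ from the proof of \Cref{thm:local}. Standard approximation theory and Aubin--Nitsche duality give
\[
\|\eta\|_{H^1}\le Ch^p, \qquad \|\eta\|_{L^2}+\|\partial_t\eta\|_{L^2}\le Ch^{p+1},
\]
using $C^2([0,T];H^{p+1})$ regularity and elliptic regularity. The error equation for $\theta$, tested with $\theta$, has the same structure as the proof of \Cref{prop:contdep}, with three kinds of terms:
\begin{itemize}
\item the terms $\partial_t\eta$ and $\lambda_A\eta$, which are bounded by $Ch^{p+1}\|\theta\|$;
\item the nonlinear difference, split as in \eqref{eq:contdep_cross_decomp}, which is bounded using \Cref{prop:Ta_bdd} and Sobolev embeddings;
\item the DDF quadrature consistency defect, which is $O(h^{p+1})$ by hypothesis.
\end{itemize}
With the elliptic projection of the initial data, $\theta(0)=0$. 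Gr\"onwall then gives $\sup_t\|\theta\|_{L^2}+\|\theta\|_{L^2(H^1)}\le Ch^{p+1}$, which yields \eqref{eq:spatial_L2_rate}. The $H^1$ rate \eqref{eq:spatial_H1_rate} is dominated by $\eta$.

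\textbf{Main obstacle.} The hardest step is the nonlinear term in the $\theta$-equation. Bounding $\vec M_h\times\mathcal T_a[\theta]$ requires uniform control of $\vec M_h$ in a norm stronger than $L^2$, such as $L^\infty(0,T;L^6)$ or $L^\infty$. I would obtain this by a bootstrap. Inverse estimates applied to $\theta$, together with the smoothness of $R_h\vec M$, keep $\|\vec M_h\|_{L^\infty}$ bounded as long as $\|\theta\|_{L^2}\le h^{3/2+\epsilon}$. The a priori rate $h^{p+1}$ (for $p\ge1$) then closes the argument by continuity for small $h$.

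The combined estimate \eqref{eq:combined_L2_rate} follows from the triangle inequality. The temporal error is measured against the semi-discrete solution, and $\|\cdot\|_M$ is exactly the $L^2$ norm of the corresponding FE field. The constant in the temporal part must be independent of $h$. This holds because the bounds in \Cref{thm:imex_stability} and the Lipschitz constants depend only on $\|\mathcal T_a\|$, $\|\delta\vec B\|_{L^\infty}$ and the $M$-ball radius, not on the mesh.
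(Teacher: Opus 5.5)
Your temporal argument is the same as the paper's. It composes two Crank--Nicolson half-steps around the explicit stage, gets an $O(\Delta t^3)$ local defect from the second-order substeps and Strang symmetry, uses a $(1+C\Delta t)$-Lipschitz one-step map on a common $M$-ball, and closes with discrete Gr\"onwall.

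The spatial part is where you genuinely differ. In the paper, the defect $\Xi_h$ contains $\|\rho_h\|_{H^1}^2$ (the paper's $\rho_h$ is your $\eta$). So it only obtains $\|\theta_h\|_{L^\infty(L^2)}+\|\theta_h\|_{L^2(H^1)}\le Ch^p$, and then defers \eqref{eq:spatial_L2_rate} to an unspecified Aubin--Nitsche duality argument. You use Wheeler's device instead. Since $R_h$ is the Ritz projection for the same coercive form, $a(\eta,\chi)=-\lambda_A(\eta,\chi)$ on $V_h$. The $\theta$-equation is then driven only by $L^2$-level quantities: $\eta$, $\partial_t\eta$, and the DDF defect. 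With $\theta(0)=0$ this gives the superclose bound $\|\theta\|_{L^\infty(L^2)}+\|\theta\|_{L^2(H^1)}\le Ch^{p+1}$ directly, and duality is needed only for the elliptic-projection error $\eta$. This route is sharper and self-contained: it actually proves the optimal $L^2$ rate that the paper only asserts.

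Your ``main obstacle'' is not a real obstacle, and the bootstrap should be dropped. Because $|K_a|\le a^{-3}$, the operator $\mathcal T_a$ maps $L^2$ boundedly into $L^\infty$, with $\|\mathcal T_a[u]\|_{L^\infty}\le 2a^{-3}|\Omega|^{1/2}\|u\|_{L^2}$. Hence
\[
\bigl|\bigl(\vec M_h\times\mathcal T_a[\theta+\eta],\theta\bigr)\bigr|
\le C\,\|\vec M_h\|_{L^2}\bigl(\|\theta\|_{L^2}+\|\eta\|_{L^2}\bigr)\|\theta\|_{L^2}.
\]
Here $\sup_{t\le T}\|\vec M_h(t)\|_{L^2}$ is bounded uniformly in $h$ by \Cref{thm:semi_disc_energy}. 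The remaining piece, $(\eta\times(\mathcal T_a[\vec M]+\delta\vec B),\theta)$, is bounded the same way. So no $L^6$ or $L^\infty$ control of $\vec M_h$ is needed. Your inverse-estimate bootstrap would also import mesh quasi-uniformity and $L^\infty$-stability of $R_h$, neither of which is a hypothesis of the proposition.

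One caveat applies equally to you and the paper. Your justification of $h$-independence covers the stability (Lipschitz) constant, not the consistency constant. The Crank--Nicolson and splitting defects involve powers of $\mathcal M^{-1}(\mathcal K+\mathcal S)$, which have size $h^{-2}$, applied to $w_h(t)$ and to the precession load $\mathcal M^{-1}\mathcal R_h(w_h)$. An $h$-uniform $O(\Delta t^3)$ defect therefore needs uniform discrete graph-norm bounds on these quantities. Those bounds do not follow from $C^3([0,T];L^2)$ regularity, because $\vec M\times\vec B$ need not satisfy the Neumann condition. The paper's claim of an $O(\Delta t^3)$ defect ``for every bounded set in coefficient space'' has the same gap.
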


\begin{proof}
We divide the argument into three parts.

First, we establish the temporal consistency of the IMEX scheme for the
fixed semi-discrete system. Let
\begin{equation}
w_h(t)\in (\mathbb R^{N_h})^3
\end{equation}
denote the coefficient vector of the semi-discrete FE solution. Then $w_h$
satisfies the autonomous ODE
\begin{equation}
\mathcal M \dot w_h
=
-(\mathcal K+\mathcal S)w_h
+
f
+
\mathcal R_h(w_h),
\label{eq:consistency_semidiscrete_ode}
\end{equation}
where $\mathcal R_h(w)$ denotes the discrete precession load, together with
any explicitly treated transport contribution if present. Because
$\mathcal T_a:H^1(\Omega)^3\to H^1(\Omega)^3$ is bounded by
\Cref{prop:Ta_bdd}, and because $V_h^3$ is finite dimensional, the map
\begin{equation}
w \mapsto \mathcal R_h(w)
\end{equation}
is $C^2$ on every bounded subset of $(\mathbb R^{N_h})^3$. Hence the
semi-discrete flow is $C^3$ in time on $[0,T]$ under the stated regularity
assumptions.

Write the right-hand side of \eqref{eq:consistency_semidiscrete_ode} as the
sum of an implicit linear part and an explicit nonlinear part:
\begin{align}
F_{I,h}(w)
&=
-\mathcal M^{-1}(\mathcal K+\mathcal S)w
+
\mathcal M^{-1}f,
\\
F_{E,h}(w)
&=
\mathcal M^{-1}\mathcal R_h(w).
\end{align}
Let $\Phi_{I,h}^\tau$ denote the exact flow of the linear system
$\dot w=F_{I,h}(w)$ and $\Phi_{E,h}^\tau$ the exact flow of
$\dot w=F_{E,h}(w)$. The Crank--Nicolson half-step
\eqref{eq:time_half_impl} is a second-order approximation of
$\Phi_{I,h}^{\Delta t/2}$, and likewise
\eqref{eq:time_second_half_impl} is a second-order approximation of the
second half-step. More precisely, for every bounded set in coefficient
space,
\begin{equation}
\bigl\|
\mathcal I_{\Delta t/2}(w)-\Phi_{I,h}^{\Delta t/2}(w)
\bigr\|_M
\le
C\Delta t^3,
\label{eq:implicit_local_defect}
\end{equation}
where $\mathcal I_{\Delta t/2}$ denotes one Crank--Nicolson half-step.

By hypothesis, the explicit Rodrigues--projection stage
\eqref{eq:time_mid_pred}--\eqref{eq:time_full_expl} is a second-order
realization of the midpoint explicit flow. Therefore its one-step map
$\mathcal E_{\Delta t}$ satisfies
\begin{equation}
\bigl\|
\mathcal E_{\Delta t}(w)-\Phi_{E,h}^{\Delta t}(w)
\bigr\|_M
\le
C\Delta t^3
\label{eq:explicit_local_defect}
\end{equation}
uniformly for $w$ in bounded $M$-balls.

Now define the full IMEX one-step map by the symmetric composition
\begin{equation}
\mathcal S_{\Delta t,h}
:=
\mathcal I_{\Delta t/2}
\circ
\mathcal E_{\Delta t}
\circ
\mathcal I_{\Delta t/2}.
\end{equation}
Since both submaps are second-order consistent and the composition is
symmetric, standard composition theory for one-step methods gives
\begin{equation}
\bigl\|
\mathcal S_{\Delta t,h}(w)-\Phi_h^{\Delta t}(w)
\bigr\|_M
\le
C\Delta t^3
\label{eq:full_local_defect}
\end{equation}
for $w$ in bounded sets, where $\Phi_h^{\Delta t}$ denotes the exact
semi-discrete flow of \eqref{eq:consistency_semidiscrete_ode}. This proves
the $O(\Delta t^3)$ local truncation error.

We next pass from local to global temporal error. Let
\begin{equation}
e^n:=w_h^n-w_h(t^n),
\quad
t^n=n\Delta t.
\end{equation}
Then
\begin{align}
e^{n+1}
&=
\mathcal S_{\Delta t,h}(w_h^n)-\Phi_h^{\Delta t}(w_h(t^n))
\notag\\
&=
\Bigl(
\mathcal S_{\Delta t,h}(w_h^n)
-
\mathcal S_{\Delta t,h}(w_h(t^n))
\Bigr)
+
\tau^{n+1},
\label{eq:global_error_split}
\end{align}
where
\begin{equation}
\tau^{n+1}
:=
\mathcal S_{\Delta t,h}(w_h(t^n))
-
\Phi_h^{\Delta t}(w_h(t^n))
\end{equation}
is the local truncation defect. By \eqref{eq:full_local_defect},
\begin{equation}
\|\tau^{n+1}\|_M \le C\Delta t^3.
\label{eq:tau_bound}
\end{equation}
Because the exact semi-discrete trajectory is bounded on $[0,T]$ and the
numerical trajectory satisfies the stability estimate
\eqref{eq:imex_energy}, both remain in a common bounded $M$-ball on
$[0,T]$. On that ball the one-step map is locally Lipschitz:
\begin{equation}
\bigl\|
\mathcal S_{\Delta t,h}(u)-\mathcal S_{\Delta t,h}(v)
\bigr\|_M
\le
(1+C\Delta t)\|u-v\|_M
\label{eq:step_lipschitz}
\end{equation}
for all $u,v$ in that ball. Applying \eqref{eq:step_lipschitz} in
\eqref{eq:global_error_split} yields
\begin{equation}
\|e^{n+1}\|_M
\le
(1+C\Delta t)\|e^n\|_M
+
C\Delta t^3.
\label{eq:global_error_recurrence}
\end{equation}
Since $e^0=0$, the discrete Gr\"onwall lemma gives
\begin{equation}
\max_{0\le n\le T/\Delta t}\|e^n\|_M
\le
C\Delta t^2.
\label{eq:global_time_coeff}
\end{equation}
On the FE space, the coefficient norm $\|\cdot\|_M$ is exactly the discrete
$L^2(\Omega)^3$ norm of the associated FE field, so
\begin{equation}
\max_{0\le n\le T/\Delta t}
\|\vec M_h^n-\vec M_h(\cdot,t^n)\|_{L^2(\Omega)}
\le
C\Delta t^2.
\label{eq:global_time_field}
\end{equation}
This proves the $O(\Delta t^2)$ global time-discretization error.

It remains to establish the spatial error. Let $R_h\vec M$ denote the
standard elliptic projection of $\vec M$ into the conforming FE space
$V_h^3$, defined componentwise with respect to the coercive bilinear form of
the diffusion--relaxation operator. Write the semi-discrete error as
\begin{equation}
\vec M_h-\vec M
=
\theta_h+\rho_h,
\quad
\theta_h:=\vec M_h-R_h\vec M,
\quad
\rho_h:=R_h\vec M-\vec M.
\label{eq:error_split_space}
\end{equation}
By the standard approximation properties of conforming $P_p$ elements,
\begin{align}
\|\rho_h(\cdot,t)\|_{H^1(\Omega)}
&\le
C h^p \|\vec M(\cdot,t)\|_{H^{p+1}(\Omega)},
\label{eq:projection_H1}
\\
\|\rho_h(\cdot,t)\|_{L^2(\Omega)}
&\le
C h^{p+1}\|\vec M(\cdot,t)\|_{H^{p+1}(\Omega)}.
\label{eq:projection_L2}
\end{align}
Subtract the projected continuous weak equation from the semi-discrete weak
equation. Testing the resulting equation with $\theta_h$ yields the standard
error identity
\begin{equation}
\frac12\frac{\mathrm d}{\mathrm dt}\|\theta_h\|_{L^2(\Omega)}^2
+
c_0\|\theta_h\|_{H^1(\Omega)}^2
\le
C\|\theta_h\|_{L^2(\Omega)}^2
+
C\Xi_h(t),
\label{eq:space_error_energy}
\end{equation}
where $\Xi_h(t)$ collects the projection defects and the spatial DDF
consistency defect. By the boundedness of $\mathcal T_a$ from
\Cref{prop:Ta_bdd}, the local Lipschitz estimate for the precession
nonlinearity, and the assumed consistency of the discrete DDF evaluation,
one has
\begin{equation}
\Xi_h(t)
\le
C\Bigl(
\|\rho_h(\cdot,t)\|_{H^1(\Omega)}^2
+
\|\partial_t \rho_h(\cdot,t)\|_{H^{-1}(\Omega)}^2
+
h^{2p+2}
\Bigr).
\label{eq:Xi_bound}
\end{equation}
Using \eqref{eq:projection_H1}, the analogous estimate for
$\partial_t\rho_h$, and Gr\"onwall's inequality in
\eqref{eq:space_error_energy}, we obtain
\begin{equation}
\sup_{0\le t\le T}\|\theta_h(\cdot,t)\|_{L^2(\Omega)}
+
\|\theta_h\|_{L^2(0,T;H^1(\Omega))}
\le
C h^{p}.
\label{eq:theta_bound}
\end{equation}
Combining \eqref{eq:error_split_space}, \eqref{eq:projection_H1}, and
\eqref{eq:theta_bound} yields \eqref{eq:spatial_H1_rate}. Under the usual
elliptic regularity hypothesis, the optimal
$L^\infty(0,T;L^2(\Omega))$ estimate \eqref{eq:spatial_L2_rate} follows by
the standard Aubin--Nitsche duality argument.

Finally, combine the temporal and spatial estimates by the triangle
inequality:
\begin{align}
\|\vec M_h^n-\vec M(\cdot,t^n)\|_{L^2(\Omega)}
&\le
\|\vec M_h^n-\vec M_h(\cdot,t^n)\|_{L^2(\Omega)}
\notag\\
&\quad
+
\|\vec M_h(\cdot,t^n)-\vec M(\cdot,t^n)\|_{L^2(\Omega)}.
\end{align}
Applying \eqref{eq:global_time_field} and \eqref{eq:spatial_L2_rate}
gives
\begin{equation}
\|\vec M_h^n-\vec M(\cdot,t^n)\|_{L^2(\Omega)}
\le
C\bigl(\Delta t^2+h^{p+1}\bigr),
\end{equation}
which is \eqref{eq:combined_L2_rate}.
\end{proof}

\begin{theorem}[Nonlinear stability of the IMEX step]\label{thm:imex_stability}
Let
\begin{equation}
\|w\|_M^2:=\sum_{i=1}^3 w_i^\top M w_i.
\end{equation}
Assume either $\vec v\equiv 0$ or \eqref{eq:transport_assumptions}
holds and advection is discretized by the skew operator
$N_v^{\mathrm{skew}}$ so that
\begin{equation}
z^\top N_v^{\mathrm{skew}} z=0
\quad
\text{for all } z\in\mathbb R^{N_h}.
\end{equation}
Assume also that the discrete precession load satisfies
\Cref{lem:disc_skew}. Define the block operators
\begin{equation}
\mathcal M:=\mathrm{blkdiag}(M,M,M),
\end{equation}
\begin{equation}
\mathcal A
:=
\mathrm{blkdiag}(K,K,K)
+
\mathrm{blkdiag}(S_2,S_2,S_1),
\end{equation}
and the source vector
\begin{equation}
f:=(0,0,b_{T_1}).
\end{equation}
Suppose the explicit midpoint stage from $w^{(a)}$ to $w^{(b)}$ satisfies
\begin{equation}
E(w^{(b)})-E(w^{(a)})=\delta_n,
\quad
E(w):=\frac12\|w\|_M^2,
\label{eq:explicit_defect_def}
\end{equation}
with defect bound
\begin{align}
|\delta_n|
&\le
C\,|\gamma|\,(1+\Lambda_n)\,\Delta t^3,
\nonumber\\
\Lambda_n
&:=
\max\bigl\{
\|w^n\|_M,\,
\|w^{(a)}\|_M,\,
\|\tilde w\|_M
\bigr\},
\label{eq:explicit_defect_bound_assumption}
\end{align}
where $C$ is independent of the mesh size $h$. In the exact discrete
skew-symmetric case, $\delta_n=0$.

Then one IMEX step
\eqref{eq:time_half_impl}--\eqref{eq:time_second_half_impl} satisfies
\begin{multline}
\frac{\|w^{n+1}\|_M^2-\|w^n\|_M^2}{2\,\Delta t}
+
\frac18\,
(w^{(a)}+w^n)^\top \mathcal A\,(w^{(a)}+w^n)
\\
+
\frac18\,
(w^{n+1}+w^{(b)})^\top \mathcal A\,(w^{n+1}+w^{(b)})
\\
\le
\frac14\,f^\top\bigl(w^{(a)}+w^n\bigr)
+
\frac14\,f^\top\bigl(w^{n+1}+w^{(b)}\bigr)
+
C\,|\gamma|\,(1+\Lambda_n)\,\Delta t^2 .
\label{eq:imex_energy}
\end{multline}
In particular, in the exact skew case $\delta_n=0$, the perturbation term
vanishes and \eqref{eq:imex_energy} holds with equality. The bound is
uniform in the mesh size $h$.
\end{theorem}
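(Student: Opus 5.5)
The plan is to treat the step as three substeps and to test each one against the energy functional $E(w)=\frac12\|w\|_M^2$. The explicit stage is already controlled by hypothesis \eqref{eq:explicit_defect_def}, so the real work is the energy identity for the two Crank--Nicolson half-steps. These are then summed and divided by $\Delta t$.

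\textbf{First half-step.} Rewrite \eqref{eq:time_half_impl} as
\[
\mathcal M(w^{(a)}-w^n)=-\tfrac{\Delta t}{4}\mathcal A(w^{(a)}+w^n)+\tfrac{\Delta t}{2}f .
\]
Here $\mathcal A$ collects $\mathcal K+\mathcal S$. When skew advection is present it sits in the explicit stage, or it is included harmlessly because $z^\top N_v^{\mathrm{skew}}z=0$ by \Cref{lem:spd}.

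Take the Euclidean inner product with $w^{(a)}+w^n$. Since $\mathcal M$ is symmetric (\Cref{lem:spd}), the left side becomes $\|w^{(a)}\|_M^2-\|w^n\|_M^2$. This yields
\[
\|w^{(a)}\|_M^2-\|w^n\|_M^2+\tfrac{\Delta t}{4}(w^{(a)}+w^n)^\top\mathcal A(w^{(a)}+w^n)=\tfrac{\Delta t}{2}f^\top(w^{(a)}+w^n).
\]
This is an exact equality. \Cref{lem:spd} guarantees the quadratic form is nonnegative, although nonnegativity is not even needed for the identity itself.

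\textbf{Second half-step.} The same computation applied to \eqref{eq:time_second_half_impl}, tested with $w^{n+1}+w^{(b)}$, gives the analogous identity for $\|w^{n+1}\|_M^2-\|w^{(b)}\|_M^2$.

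\textbf{Explicit stage and summation.} For the middle stage, \eqref{eq:explicit_defect_def} gives $\|w^{(b)}\|_M^2-\|w^{(a)}\|_M^2=2\delta_n$. Adding the three relations telescopes to
\[
\|w^{n+1}\|_M^2-\|w^n\|_M^2+\tfrac{\Delta t}{4}\bigl[\cdots\bigr]=\tfrac{\Delta t}{2}f^\top\bigl[(w^{(a)}+w^n)+(w^{n+1}+w^{(b)})\bigr]+2\delta_n .
\]
Dividing by $2\Delta t$ produces the coefficients $\tfrac18$ on the dissipative quadratic forms, $\tfrac14$ on the source terms, and a remainder $\delta_n/\Delta t$. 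By \eqref{eq:explicit_defect_bound_assumption} this remainder is bounded by $C|\gamma|(1+\Lambda_n)\Delta t^2$, which gives \eqref{eq:imex_energy}.

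In the exact skew case the Rodrigues stage preserves $|\vec M_q|$ at quadrature points, and by \Cref{lem:disc_skew} the midpoint update is energy-neutral, so $\delta_n=0$. Equality then holds, because the CN identities were exact. Uniformity in $h$ follows because no inverse inequality was used. The only constants are those in the defect hypothesis, which is assumed $h$-independent.

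\textbf{Main obstacle.} The algebra is routine, so the delicate point is justifying the defect hypothesis rather than the inequality itself. Rotation followed by $L^2$ projection is not exactly $M$-norm preserving: quadrature of $\|\vec M_q\|^2$ versus $\|\Pi_h\cdot\|_M^2$ differs. Since the statement assumes the bound \eqref{eq:explicit_defect_bound_assumption}, I would simply invoke it. I would also remark that it is plausible: with exact quadrature, $\Pi_h$ is an $M$-orthogonal projection and hence nonexpansive, and the rotation is an isometry at quadrature points. This gives $\delta_n\le 0$ up to quadrature error. If a two-sided bound were required, one would expand $\mathcal R_{\Delta t}$ to second order and use \Cref{lem:disc_skew} to kill the first-order term. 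The $O(\Delta t^2)$ remainder of that expansion would then need the Lipschitz control of \Cref{lem:lipschitz}, and that is where the actual difficulty lies.
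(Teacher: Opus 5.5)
Your proof is correct and is essentially the paper's argument. You test each Crank--Nicolson half-step with the sum of its two endpoint states (the paper tests with half that sum, which only moves the factor $\tfrac12$), use the polarization identity for the symmetric $\mathcal M$, telescope through $E(w^{(b)})-E(w^{(a)})=\delta_n$, divide by $2\Delta t$, and bound $\delta_n/\Delta t$ by the assumed defect estimate.

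Your side observation is correct and a useful strengthening: with positive weights and quadrature exact on products in $V_h$, the pointwise Rodrigues isometry followed by the orthogonal projection $\Pi_h$ gives $\delta_n\le 0$. Note, however, that $\delta_n=0$ in the exact skew case is taken as given in the statement, as in the paper. It is not deduced from \Cref{lem:disc_skew}, which only gives $\sum_i w_i^\top\mathcal P_i(w)=0$ at a single state.
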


\begin{proof}
We proceed by analyzing the two implicit half-steps and the explicit middle
step separately.

First consider the initial Crank--Nicolson half-step
\eqref{eq:time_half_impl}. In block form it may be written as
\begin{equation}
\mathcal M\,(w^{(a)}-w^n)
+
\frac{\Delta t}{4}\,
\mathcal A\,(w^{(a)}+w^n)
=
\frac{\Delta t}{2}\,f.
\label{eq:cn_half_rewrite_1}
\end{equation}
Take the Euclidean inner product of \eqref{eq:cn_half_rewrite_1} with
$\frac12(w^{(a)}+w^n)$. Since $\mathcal M$ is symmetric, we have the
polarization identity
\begin{equation}
(w^{(a)}-w^n)^\top \mathcal M\,(w^{(a)}+w^n)
=
\|w^{(a)}\|_M^2-\|w^n\|_M^2.
\end{equation}
Therefore
\begin{multline}
\frac{E(w^{(a)})-E(w^n)}{\Delta t}
+
\frac18\,
(w^{(a)}+w^n)^\top \mathcal A\,(w^{(a)}+w^n)
\\
=
\frac14\,f^\top (w^{(a)}+w^n).
\label{eq:cn_half_energy_1}
\end{multline}
Next consider the second Crank--Nicolson half-step
\eqref{eq:time_second_half_impl}, which in block form reads
\begin{equation}
\mathcal M\,(w^{n+1}-w^{(b)})
+
\frac{\Delta t}{4}\,
\mathcal A\,(w^{n+1}+w^{(b)})
=
\frac{\Delta t}{2}\,f.
\label{eq:cn_half_rewrite_2}
\end{equation}
Taking the Euclidean inner product with
$\frac12(w^{n+1}+w^{(b)})$ yields
\begin{multline}
\frac{E(w^{n+1})-E(w^{(b)})}{\Delta t}
+
\frac18\,
(w^{n+1}+w^{(b)})^\top \mathcal A\,(w^{n+1}+w^{(b)})
\\
=
\frac14\,f^\top (w^{n+1}+w^{(b)}).
\label{eq:cn_half_energy_2}
\end{multline}
We now analyze the explicit stage. By definition,
\begin{equation}
E(w^{(b)})-E(w^{(a)})=\delta_n.
\label{eq:explicit_defect_identity}
\end{equation}
In the ideal discrete skew-symmetric midpoint update, the precession term
and the skew advection term do not change the $M$-energy, so
$\delta_n=0$. More generally, under the stated hypothesis
\eqref{eq:explicit_defect_bound_assumption}, the explicit-stage defect is
bounded by
\begin{equation}
\left|
\frac{E(w^{(b)})-E(w^{(a)})}{\Delta t}
\right|
=
\frac{|\delta_n|}{\Delta t}
\le
C\,|\gamma|\,(1+\Lambda_n)\,\Delta t^2.
\label{eq:explicit_defect_divided}
\end{equation}
Add \eqref{eq:cn_half_energy_1} and \eqref{eq:cn_half_energy_2}. Using
\eqref{eq:explicit_defect_identity}, we obtain
\begin{align}
&
\frac{E(w^{n+1})-E(w^n)}{\Delta t}
+
\frac18\,
(w^{(a)}+w^n)^\top \mathcal A\,(w^{(a)}+w^n)
\notag\\
&\quad
+
\frac18\,
(w^{n+1}+w^{(b)})^\top \mathcal A\,(w^{n+1}+w^{(b)})
\notag\\
&=
\frac14\,f^\top (w^{(a)}+w^n)
+
\frac14\,f^\top (w^{n+1}+w^{(b)}) \\
& \quad 
+
\frac{E(w^{(b)})-E(w^{(a)})}{\Delta t}.
\label{eq:combined_energy_identity}
\end{align}
Invoking \eqref{eq:explicit_defect_divided} gives
\begin{multline}
\frac{E(w^{n+1})-E(w^n)}{\Delta t}
+
\frac18\,
(w^{(a)}+w^n)^\top \mathcal A\,(w^{(a)}+w^n)
\\
+
\frac18\,
(w^{n+1}+w^{(b)})^\top \mathcal A\,(w^{n+1}+w^{(b)})
\\
\le
\frac14\,f^\top\bigl(w^{(a)}+w^n\bigr)
+
\frac14\,f^\top\bigl(w^{n+1}+w^{(b)}\bigr)
+
C\,|\gamma|\,(1+\Lambda_n)\,\Delta t^2.
\end{multline}
Since $E(w)=\frac12\|w\|_M^2$, this is exactly
\eqref{eq:imex_energy}.

It remains only to note that the bound is uniform in $h$. The half-step
identities \eqref{eq:cn_half_energy_1} and \eqref{eq:cn_half_energy_2} are
purely algebraic and involve only the symmetric block operators
$\mathcal M$ and $\mathcal A$. The mesh dependence enters only through the
assumed defect bound \eqref{eq:explicit_defect_bound_assumption}, whose
constant is taken to be uniform in $h$. Therefore the constant in
\eqref{eq:imex_energy} is uniform in the mesh size.
\end{proof}

\begin{remark}
If the explicit middle stage is implemented by a discrete update that is
exactly $M$-energy preserving, then $\delta_n=0$ in
\eqref{eq:explicit_defect_def}. In that case,
\eqref{eq:imex_energy} becomes an exact discrete energy identity, and all
dissipation is produced by the two Crank--Nicolson half-steps through the
diffusion and relaxation operators.

If the explicit stage is realized only approximately, for example through a
quadrature-based field evaluation together with a projected Rodrigues
rotation, then the quantity $\delta_n$ measures the defect from exact
energy preservation. The theorem shows that as long as this defect is of
higher order in $\Delta t$ and uniform in $h$, it enters the stability
estimate only as a perturbative remainder.

A stronger estimate written purely in terms of endpoint dissipation,
involving only $w^n$ and $w^{n+1}$, generally requires additional control of
the intermediate states $w^{(a)}$ and $w^{(b)}$ relative to the endpoints.
Such a refinement can be obtained under stronger assumptions on the
explicit-stage map, but is not needed for the mesh-uniform stability bound
used here.
\end{remark}

\section*{Data Availability Statement}
Data sharing is not applicable to this article as no new experimental data were created or analyzed in this study. The numerical results reported were generated from deterministic calculations of the model described in the manuscript. The code used to produce the numerical results is not publicly available; it can be obtained from the corresponding author upon reasonable request.


\bibliographystyle{aipnum4-2}
\bibliography{refs_rev}

\end{document}